\documentclass[sts,preprint]{imsart}
\usepackage{geometry} 
\geometry{letterpaper}
\usepackage{graphicx}	
\usepackage{amssymb, amsmath, amsthm}
\usepackage{natbib}
\usepackage{color}
\usepackage[table]{xcolor}
\usepackage{subfigure}
\usepackage{multirow}
\usepackage{verbatim}
\usepackage{enumerate}
\usepackage[colorlinks,citecolor=blue,urlcolor=blue]{hyperref}

\usepackage{tikz}
\usetikzlibrary{intersections, backgrounds}
\definecolor{highlight}{RGB}{180, 31, 180}
\definecolor{gray80}{gray}{0.8}

\newcommand{\dd}{ \mathrm{d} }

\theoremstyle{plain}
\newtheorem{theorem}{Theorem}
\newtheorem{lemma}[theorem]{Lemma}
\newtheorem{corollary}[theorem]{Corollary}

\theoremstyle{definition}
\newtheorem{definition}{Definition}
\newtheorem{example}{Example}
\theoremstyle{remark}

\theoremstyle{plain}

%\arxiv{stat/0000000}

\begin{document}

\begin{frontmatter}

\title{The Geometric Foundations of Hamiltonian Monte Carlo}
\runtitle{The Geometric Foundations of Hamiltonian Monte Carlo}

\begin{aug}
  \author{Michael Betancourt%
  \ead[label=e1]{betanalpha@gmail.com}},
  \author{Simon Byrne},
  \author{Sam Livingstone},
  \and
  \author{Mark Girolami}

  \runauthor{Betancourt et al.}

  \address{Michael Betancourt is a Postdoctoral Research Associate at the University of Warwick, 
                 Coventry CV4 7AL, UK \printead{e1}.  Simon Byrne is an EPSRC Postdoctoral Research
                 Fellow at University College London, Gower Street, London, WC1E 6BT.  Sam
                 Livingstone is a PhD candidate at University College London, Gower Street, London, 
                 WC1E 6BT.  Mark Girolami is an ESPRC Established Career Research Fellow at
                 the University of Warwick, 
                 Coventry CV4 7AL, UK.}

\end{aug}

\begin{abstract}
Although Hamiltonian Monte Carlo has proven an empirical success, the lack of a rigorous 
theoretical understanding of the algorithm has in many ways impeded both principled
developments of the method and use of the algorithm in practice.  In this paper we develop the 
formal foundations of the algorithm through the construction of measures on smooth manifolds, 
and demonstrate how the theory naturally identifies efficient implementations and motivates 
promising generalizations.
\end{abstract}

\begin{keyword}
\kwd{Markov Chain Monte Carlo}
\kwd{Hamiltonian Monte Carlo}
\kwd{Disintegration}
\kwd{Differential Geometry}
\kwd{Smooth Manifold}
\kwd{Fiber Bundle}
\kwd{Riemannian Geometry}
\kwd{Symplectic Geometry}
\end{keyword}

\end{frontmatter}

The frontier of Bayesian inference requires algorithms capable of fitting complex models
with hundreds, if not thousands of parameters, intricately bound together with nonlinear 
and often hierarchical correlations.  Hamiltonian Monte Carlo~\citep{DuaneEtAl:1987, Neal:2011} 
has proven tremendously successful at extracting inferences from these models, with 
applications spanning computer 
science~\citep{SutherlandEtAl:2013, TangEtAl:2013}, 
ecology~\citep{SchofieldEtAl:2013, TeradaEtAl:2013}, 
epidemiology~\citep{CowlingEtAl:2012}, 
linguistics~\citep{HusainEtAl:2014}, pharmacokinetics~\citep{WeberEtAl:2014},
physics~\citep{JascheEtAl:2010, PorterEtAl:2014, SandersEtAl:2014, WangEtAl:2014},
and political science~\citep{GhitzaEtAl:2013}, to name a few.  Despite such widespread empirical
success, however, there remains an air of mystery concerning the efficacy of the algorithm.  

This lack of understanding not only limits the adoption of Hamiltonian Monte Carlo but may also foster 
unprincipled and, ultimately, fragile implementations that restrict the scalability of the algorithm.  
Consider, for example, the Compressible Generalized Hybrid Monte Carlo scheme of 
\cite{FangEtAl:2014} and the particular implementation in Lagrangian Dynamical Monte 
Carlo~\citep{LanEtAl:2012}.  In an effort to reduce the computational burden of the algorithm,
the authors sacrifice the costly volume-preserving numerical integrators typical to Hamiltonian Monte
Carlo.  Although this leads to improved performance in some low-dimensional models, the performance
rapidly diminishes with increasing model dimension~\citep{LanEtAl:2012} in sharp contrast to standard 
Hamiltonian Monte Carlo.  Clearly, the volume-preserving numerical integrator is somehow critical to 
scalable performance; but why?

In this paper we develop the theoretical foundation of Hamiltonian Monte Carlo in order to answer 
questions like these.  We demonstrate how a formal understanding naturally identifies the properties 
critical to the success of the algorithm, hence immediately providing a framework for robust implementations.  
Moreover, we discuss how the theory motivates several generalizations that may extend the success of 
Hamiltonian Monte Carlo to an even broader array of applications.  

We begin by considering the properties of efficient Markov kernels and possible strategies for 
constructing those kernels.  This construction motivates the use of tools in differential geometry, and 
we continue by curating a coherent theory of probabilistic measures on smooth manifolds.  In the 
penultimate section we show how that theory provides a skeleton for the development, implementation, 
and formal analysis of Hamiltonian Monte Carlo.  Finally, we discuss how this formal perspective directs 
generalizations of the algorithm.

Without a familiarity with differential geometry a complete understanding of this work will be a challenge, 
and we recommend that readers without a background in the subject only scan through Section
\ref{sec:measures_on_manifolds} to develop some intuition for the probabilistic interpretation of forms, 
fiber bundles, Riemannian metrics, and symplectic forms, as well as the utility of Hamiltonian flows.  For 
those readers interesting in developing new implementations of Hamiltonian Monte Carlo we recommend 
a more careful reading of these sections and suggest introductory literature on the mathematics 
necessary to do so in the introduction of Section \ref{sec:measures_on_manifolds}.

%%%%%%%%%%%%%%%%%%%%%%%%%%%%%%%%%%%%%%%%                             
%%%%%%%%%%%%%%%%%%%%%%%%%%%%%%%%%%%%%%%%
%%%%%%%%%%%%%%%%%%%%%%%%%%%%%%%%%%%%%%%%

\section{Constructing Efficient Markov Kernels}

Bayesian inference is conceptually straightforward: the information about a system is first modeled 
with the construction of a posterior distribution, and then statistical questions can be answered by 
computing expectations with respect to that distribution.  Many of the limitations of Bayesian 
inference arise not in the modeling of a posterior distribution but rather in computing the subsequent 
expectations.  Because it provides a generic means of estimating these expectations, Markov Chain 
Monte Carlo has been critical to the success of the Bayesian methodology in practice.

In this section we first review the Markov kernels intrinsic to Markov Chain Monte Carlo
and then consider the dynamic systems perspective to motivate a strategy for
constructing Markov kernels that yield computationally efficient inferences.

%%%%%%%%%%%%%%%%%%%%%%%%%%%%%%%%%%%%%%%%
%%%%%%%%%%%%%%%%%%%%%%%%%%%%%%%%%%%%%%%%

\subsection{Markov Kernels}

Consider a probability space,
\begin{equation*}
( Q, \mathcal{B} \! \left( Q \right) \!, \varpi ),
\end{equation*} 
with an $n$-dimensional sample space, $Q$, the Borel $\sigma$-algebra over $Q$, 
$\mathcal{B} \! \left( Q \right)$, and a distinguished probability measure, $\varpi$.  In a 
Bayesian application, for example, the distinguished measure would be the posterior 
distribution and our ultimate goal would be the estimation of expectations with respect
to the posterior, $\mathbb{E}_{\varpi} \! \left[ f \right]$.

A \textit{Markov kernel}$, \tau$, is a map from an element of the sample space and 
the $\sigma$-algebra to a probability,
\begin{equation*}
\tau : Q \times \mathcal{B} \! \left( Q \right) \rightarrow \left[0, 1\right],
\end{equation*}
such that the kernel is a measurable function in the first argument,
\begin{equation*}
\tau \! \left( \cdot, A \right) : Q \rightarrow Q, \, \forall A \in \mathcal{B} \! \left( Q \right),
\end{equation*}
and a probability measure in the second argument,
\begin{equation*}
\tau \! \left( q, \cdot \right) : \mathcal{B} \! \left( Q \right) \rightarrow \left[0, 1 \right], \, \forall q \in Q.
\end{equation*}
By construction the kernel defines a map,
\begin{equation*}
\tau : Q \rightarrow  \mathcal{P} \! \left( Q \right),
\end{equation*}
where $\mathcal{P} \! \left( Q \right)$ is the space of probability measures over $Q$;
intuitively, at each point in the sample space the kernel defines a probability measure 
describing how to sample a new point.

By averaging the Markov kernel over all initial points in the state space we can construct a 
\textit{Markov transition} from probability measures to probability measures,
\begin{equation*}
\mathcal{T} : \mathcal{P} \! \left( Q \right) \rightarrow \mathcal{P} \! \left( Q \right),
\end{equation*}
by
\begin{align*}
\varpi' \! \left( A \right) 
= \varpi \mathcal{T} \! \left( A \right)
= \int \tau \! \left( q, A \right) \varpi \! \left( \mathrm{d} q \right), 
\, \forall q \in Q, \, A \in \mathcal{B} \! \left( Q \right).
\end{align*}
When the transition is aperiodic, irreducible, Harris recurrent, and preserves the target measure, 
$\varpi \mathcal{T} = \varpi$, its repeated application generates a \textit{Markov chain} that will
eventually explore the entirety of $\varpi$.  Correlated samples, $\left( q_{0}, q_{1}, \ldots, q_{N} \right)$ 
from the Markov chain yield 
\textit{Markov Chain Monte Carlo estimators} of any expectation~\citep{RobertsEtAl:2004, MeynEtAl:2009}.
Formally, for any integrable function $f \in L^{1} \! \left( Q, \varpi \right)$ we can construct estimators,
\begin{equation*}
\hat{f}_{N} \! \left( q_{0} \right) = 
\frac{1}{N} \sum_{n = 0}^{N} f \! \left( q_{n} \right),
\end{equation*}
that are asymptotically consistent for any initial $q_{0} \in Q$,
\begin{equation*}
\lim_{N \rightarrow \infty} \hat{f}_{N} \! \left( q_{0} \right)
\xrightarrow{\mathcal{P}}
\mathbb{E}_{\varpi} \! \left[ f \right].
\end{equation*}
Here $\delta_{q}$ is the \textit{Dirac measure} that concentrates on $q$,
\begin{equation*}
\delta_{q} \! \left( A \right) \propto
\left\{
\begin{array}{rr}
0, & q \notin A\\
1, & q \in A
\end{array} 
\right. , \, q \in Q, A \in \mathcal{B} \! \left( Q \right).
\end{equation*}

In practice we are interested not just in Markov chains that explore the target distribution as
$N \rightarrow \infty$ but in Markov chains that can explore and yield precise Markov Chain Monte
Carlo estimators in only a finite number of transitions.  From this perspective the efficiency of a Markov 
chain can be quantified in terms of the \textit{autocorrelation}, which measures the dependence of any 
square integrable test function, $f \in L^{2} \! \left( Q, \varpi \right)$, before and after the application of 
the Markov transition
\begin{align*}
\rho \! \left[ f \right] 
&\equiv
\frac{
\int f \! \left( q_{1} \right) f \! \left( q_{2} \right) 
\tau \! \left( q_{1}, \mathrm{d}q_{2} \right) \varpi \! \left( \mathrm{d} q_{1} \right)
- \int f \! \left( q_{2} \right) \varpi \! \left( \mathrm{d} q_{2} \right) 
\int f \! \left( q_{1} \right) \varpi \! \left( \mathrm{d} q_{1} \right)
}{
\int f^{2} \! \left( q \right) \varpi \! \left( \mathrm{d} q \right)
- \left( \int f \! \left( q \right) \varpi \! \left( \mathrm{d} q \right) \right)^{2}
}.
\end{align*}
In the best case the Markov kernel reproduces the target measure,
\begin{equation*}
\tau \! \left( q, A \right) = \varpi \! \left( A \right), \, \forall q \in Q,
\end{equation*}
and the autocorrelation vanishes for all test functions, $\rho \! \left[ f \right] = 0$. 
Alternatively, a Markov kernel restricted to a Dirac measure at the initial point,
\begin{equation*}
\tau \! \left( q, A \right) = \delta_{q} \! \left( A \right),
\end{equation*}
moves nowhere and the autocorrelations saturate for any test function,
$\rho \! \left[ f \right] = 1$.  Note that we are disregarding anti-autocorrelated chains, 
whose performance is highly sensitive to the particular $f$ under consideration.

Given a target measure, any Markov kernel will lie in between these two extremes; 
the more of the target measure a kernel explores the smaller the autocorrelations, 
while the more localized the exploration to the initial point the larger the autocorrelations.
Unfortunately, common Markov kernels like 
\textit{Gaussian Random walk Metropolis}~\citep{RobertEtAl:1999} and the 
\textit{Gibbs sampler}~\citep{GemanEtAl:1984, GelfandEtAl:1990} degenerate into 
local exploration, and poor efficiency, when targeting the complex distributions of
interest.  Even in two-dimensions, for example, nonlinear correlations in the target  
distribution constrain the $n$-step transition kernels to small neighborhoods around
the initial point (Figure \ref{fig:poor_metro_and_gibbs}).

\begin{figure}
\centering
\subfigure[]{ \includegraphics[width=2.55in]{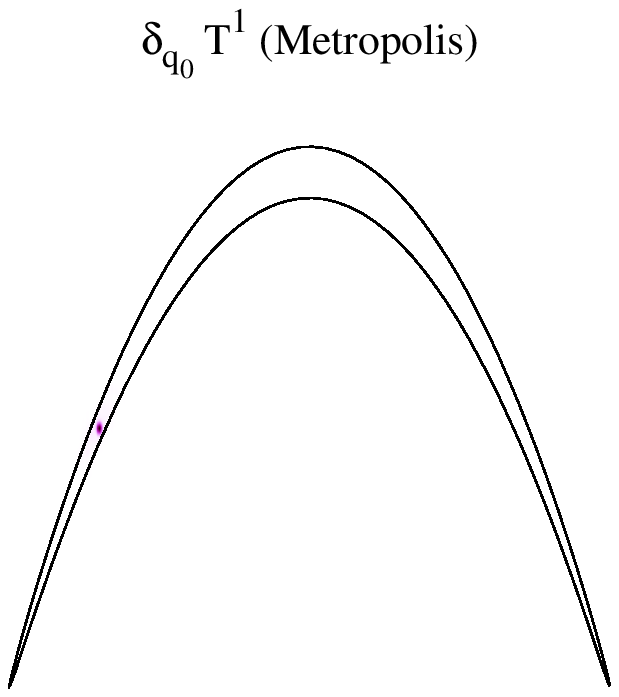} }
\subfigure[]{ \includegraphics[width=2.55in]{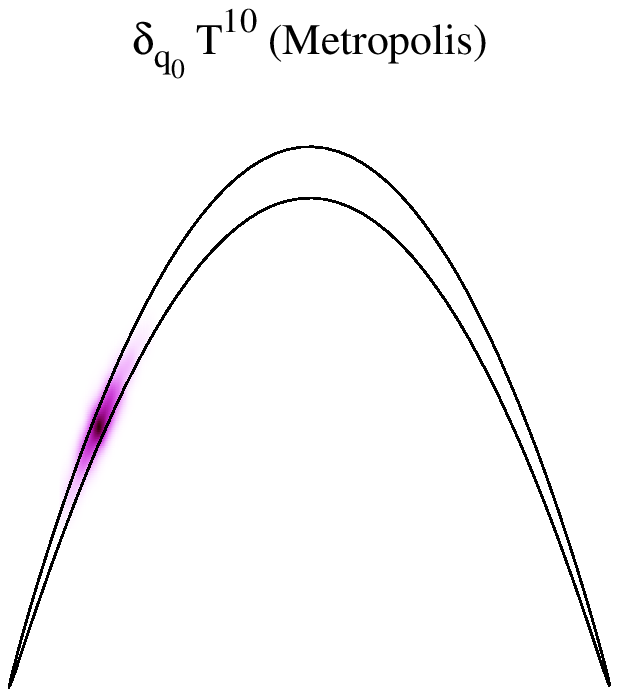} }
\subfigure[]{ \includegraphics[width=2.55in]{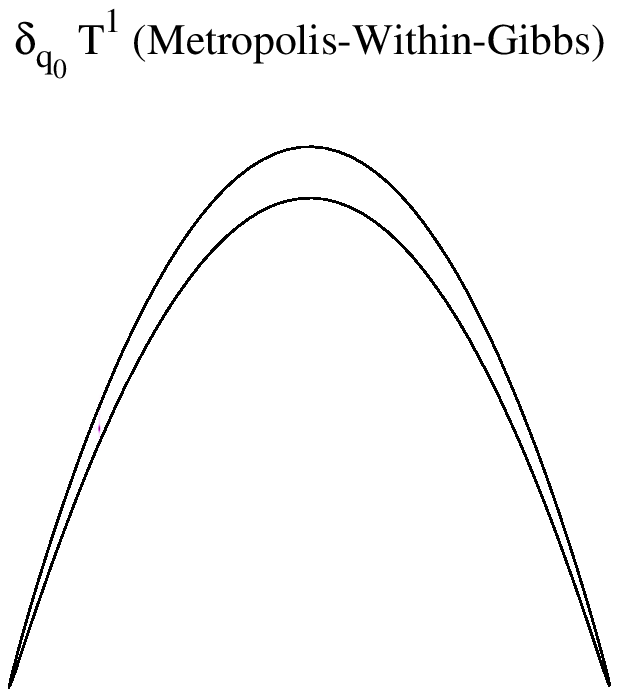} }
\subfigure[]{ \includegraphics[width=2.55in]{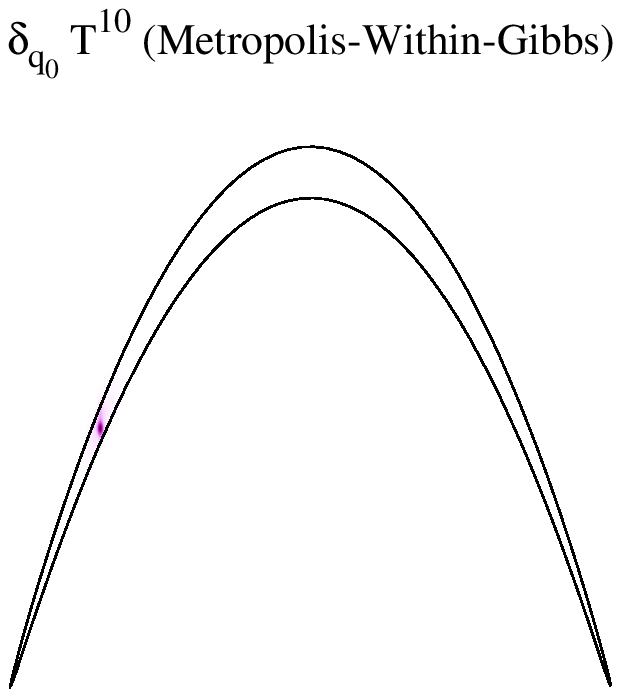} }
\caption{Both (a, b) Random Walk Metropolis and (c, d) the Gibbs sampler are stymied by
complex distributions, for example a warped Gaussian distribution~\citep{HaarioEtAl:2001}
on the sample space $Q = \mathbb{R}^{2}$, here represented with a 95\% probability contour.  
Even when optimally tuned~\citep{RobertsEtAl:1997}, both Random Walk Metropolis and Random 
Walk Metropolis-within-Gibbs kernels concentrate around the initial point, even after multiple 
iterations.}
\label{fig:poor_metro_and_gibbs}
\end{figure}

In order for Markov Chain Monte Carlo to perform well on these contemporary problems
we need to be able to engineer Markov kernels that maintain exploration, and hence small 
autocorrelations, when targeting intricate distributions.  The construction of such kernels is 
greatly eased with the use of measure-preserving maps.

%%%%%%%%%%%%%%%%%%%%%%%%%%%%%%%%%%%%%%%%
%%%%%%%%%%%%%%%%%%%%%%%%%%%%%%%%%%%%%%%%

\subsection{Markov Kernels Induced From Measure-Preserving Maps}

Directly constructing a Markov kernel that targets $\varpi$, let alone an efficient Markov kernel,
can be difficult.  Instead of constructing a kernel directly, however, we can construct
one indirectly be defining a family of measure-preserving maps~\citep{Petersen:1989}.

Formally, let $\Gamma$ be some space of continuous, bijective maps, or isomorphisms, from the 
space into itself,
\begin{equation*}
t : Q \rightarrow Q, \, \forall t \in \Gamma,
\end{equation*}
that each preserves the target measure,
\begin{equation*}
t_{*} \varpi = \varpi,
\end{equation*}
where the pushforward measure, $t_{*} \varpi$, is defined as
\begin{equation*}
\left( t_{*} \varpi \right) \! \left( A \right) \equiv \left( \varpi \circ t^{-1} \right) \! \left( A \right), \,
\forall A \in \mathcal{B} \! \left( Q \right).
\end{equation*}
If we can define a $\sigma$-algebra, $\mathcal{G}$, over this space then the choice of a 
distinguished measure over $\mathcal{G}$, $\gamma$, defines a probability space,
\begin{equation*} 
( \Gamma, \mathcal{G}, \gamma ),
\end{equation*}
which induces a Markov kernel by
\begin{equation} \label{eqn:kernel_from_isomorphisms}
\tau \! \left( q, A\right) \equiv
\int_{\Gamma} \gamma \! \left( \dd t \right) \mathbb{I}_{A} \! \left( t \left( q \right) \right),
\end{equation}
where $\mathbb{I}$ is the indicator function,
\begin{equation*}
\mathbb{I}_{A} \! \left( q \right) \propto
\left\{
\begin{array}{rr}
0, & q \notin A\\
1, & q \in A
\end{array} 
\right. , \, q \in Q, A \in \mathcal{B} \! \left( Q \right).
\end{equation*}
In other words, the kernel assigns a probability to a set, $A \in \mathcal{B} \! \left( Q \right)$,
by computing the measure of the preimage of that set, $t^{-1} \! \left( A \right)$, averaged 
over all isomorphisms in $\Gamma$.  Because each $t$ preserves the target measure, 
so too will their convolution and, consequently, the Markov transition induced by the kernel.

This construction provides a new perspective on the limited performance of existing
algorithms.
\begin{example}
We can consider Gaussian Random Walk Metropolis, for example, as being generated by
random, independent translations of each point in the sample space,
\begin{align*}
  t_{\epsilon, \eta} :& \, q \mapsto 
           q + \epsilon \, 
           \mathbb{I} \! \left( \eta < \frac{ f \! \left( q + \epsilon \right) }{ f \! \left( q \right) } \right) \\
  & \, \epsilon \sim \mathcal{N} \! \left( 0, \Sigma \right) \\
  & \, \eta \sim U \! \left[ 0, 1 \right],
\end{align*}
where $f$ is the density of $\varpi$ with respect to the Lebesgue measure on $\mathbb{R}^{n}$.
When targeting complex distributions either $\epsilon$ or the support of the indicator will be 
small and the resulting translations barely perturb the initial state.
\end{example}
\begin{example}
The random scan Gibbs sampler is induced by axis-aligned translations,
\begin{align*}
  t_{i, \eta} :& \, q_{i} \rightarrow P_{i}^{-1} \! \left( \eta \right) \\
  & \, i \sim U \! \left\{ 1, \ldots, n \right\} \\
  & \, \eta \sim U \! \left[ 0, 1 \right],
\end{align*}
where
\begin{equation*}
P_{i} \! \left( q_{i} \right) = \int_{\infty}^{q_{i}} \varpi \! \left( \dd \tilde{q}_{i} | q \right)
\end{equation*}
is the cumulative distribution function of the $i$th conditional measure.  When the target
distribution is strongly correlated, the conditional measures concentrate near the initial $q$
and, as above, the translations are stunted.
\end{example}

In order to define a Markov kernel that remains efficient in difficult problems we need
measure-preserving maps whose domains are not limited to local exploration.  Realizations 
of Langevin diffusions~\citep{Oksendal:2003}, for example, yield measure-preserving maps 
that diffuse across the entire target distribution.  Unfortunately that diffusion tends to expand
across the target measures only slowly (Figure \ref{fig:poor_langevin_realization}): for any 
finite diffusion time the resulting Langevin kernels are localized around the initial point 
(Figure \ref{fig:poor_langevin}).  What we need are more coherent maps that avoid such diffusive 
behavior.

\begin{figure}
\centering
\includegraphics[width=3in]{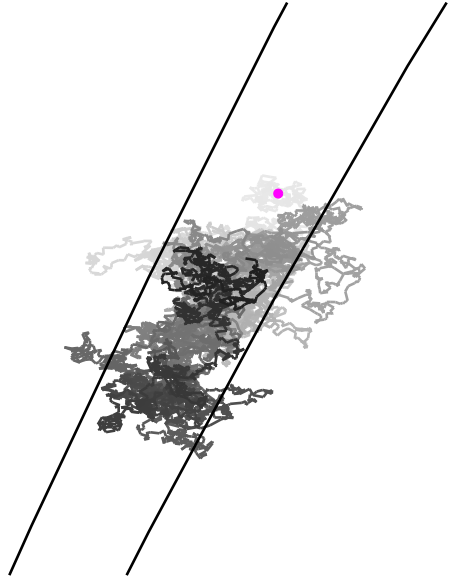}
\caption{Langevin trajectories are, by construction, diffusive, and are just as likely to double back
as they are to move forward.  Consequently even as the diffusion time grows, here to $t=1000$
as the trajectory darkens, realizations of a Langevin diffusion targeting the twisted Gaussian
distribution~\citep{HaarioEtAl:2001} only slowly wander away from the initial point.}
\label{fig:poor_langevin_realization}
\end{figure}

\begin{figure}
\centering
\subfigure[]{ \includegraphics[width=1.75in]{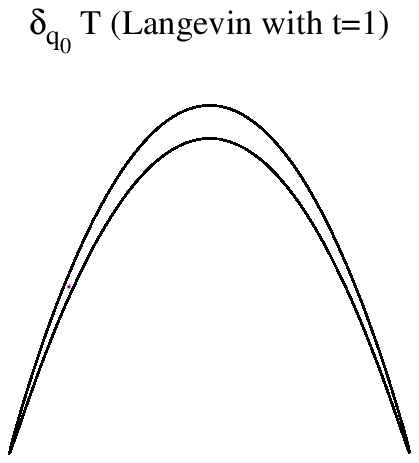} }
\subfigure[]{ \includegraphics[width=1.75in]{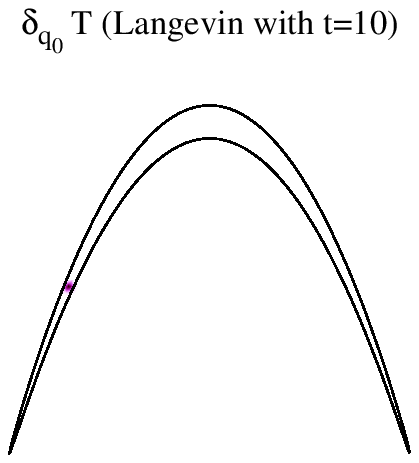} }
\subfigure[]{ \includegraphics[width=1.75in]{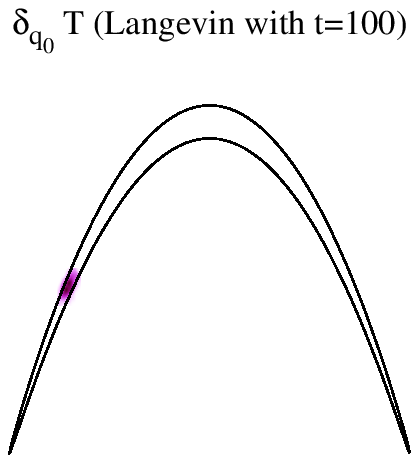} }
\caption{Because of the diffusive nature of the underlying maps, Langevin
kernels expand very slowly with increasing diffusion time, $t$.  For any
reasonable diffusion time the resulting kernels will concentrate around the
initial point, as seen here for a Langevin diffusion targeting the twisted
Gaussian distribution~\citep{HaarioEtAl:2001}.}
\label{fig:poor_langevin}
\end{figure}

One potential candidate for coherent maps are \textit{flows}.  A flow, $\left\{ \phi_{t} \right\}$,
is a family of isomorphisms parameterized by a time, $t$,
\begin{equation*}
\phi_{t} : Q \rightarrow Q, \, \forall t \in \mathbb{R},
\end{equation*}
that form a one-dimensional Lie group on composition,
\begin{align*}
\phi_{t} \circ \phi_{s} &= \phi_{s + t} \\
\phi_{t}^{-1} &= \phi_{-t} \\
\phi_{0} &= \mathrm{Id}_{Q},
\end{align*}
where $\mathrm{Id}_{Q}$ is the natural identity map on $Q$.
Because the inverse of a map is given only by negating $t$, as the time is increased the
resulting $\phi_{t}$ pushes points away from their initial positions and avoids localized
exploration (Figure \ref{fig:diffusion_vs_flow}).  Our final obstacle is in engineering a flow 
comprised of measure-preserving maps.

\begin{figure}
\centering
\includegraphics[width=3in]{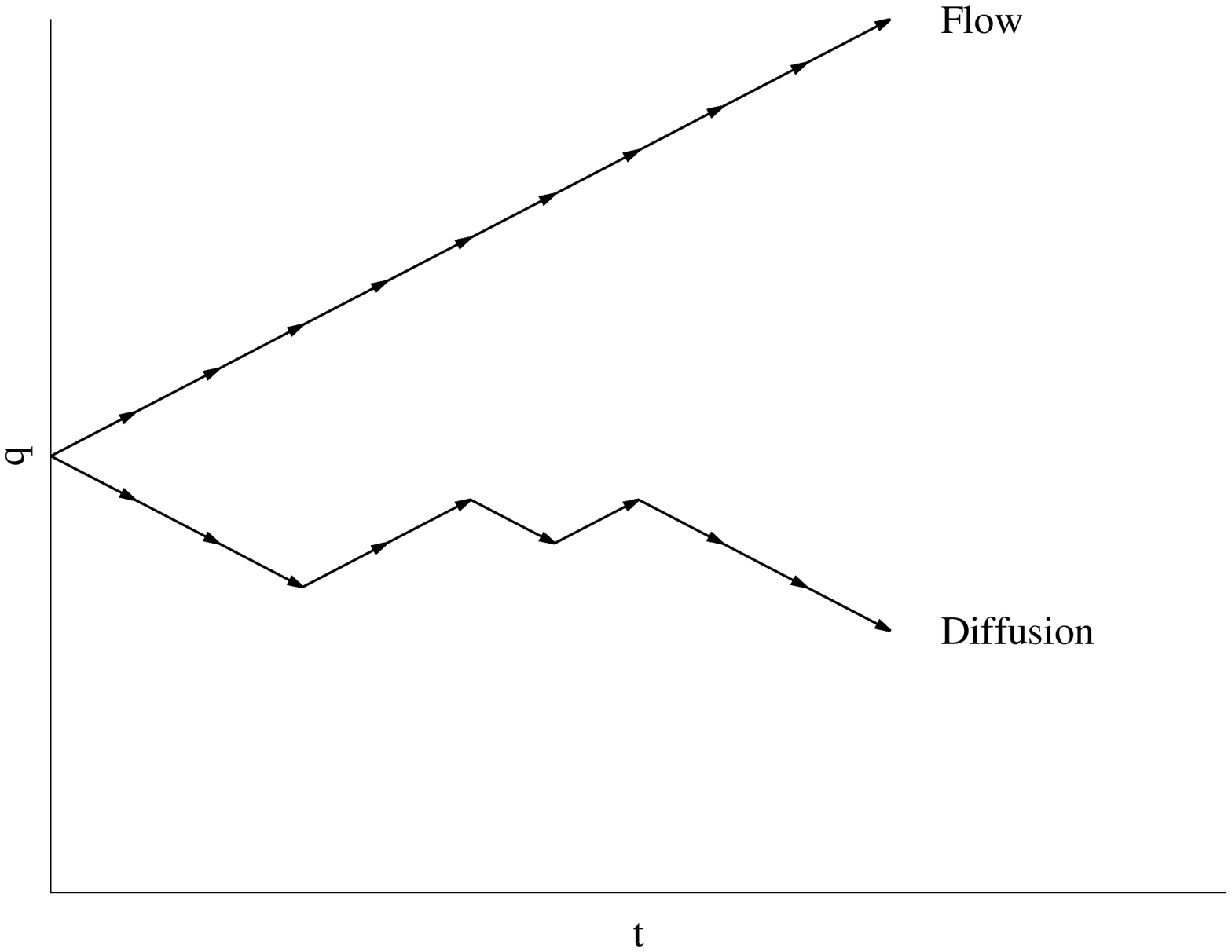}
\caption{Because of the underlying group structure flows cannot double back on themselves
like diffusions, forcing a coherent exploration of the target space.}
\label{fig:diffusion_vs_flow}
\end{figure}

Flows are particularly natural on the \textit{smooth manifolds} of differential geometry, and 
flows that preserve a given target measure can be engineered on one exceptional class of 
smooth manifolds known as \textit{symplectic manifolds}.  If we can understand these manifolds 
probabilistically then we can take advantage of their properties to build Markov kernels with small 
autocorrelations for even the complex, high-dimensional target distributions of practical interest.

%%%%%%%%%%%%%%%%%%%%%%%%%%%%%%%%%%%%%%%%
%%%%%%%%%%%%%%%%%%%%%%%%%%%%%%%%%%%%%%%%
%%%%%%%%%%%%%%%%%%%%%%%%%%%%%%%%%%%%%%%%

\section{Measures on Manifolds} \label{sec:measures_on_manifolds}

In this section we review probability measures on smooth manifolds of increasing sophistication, 
culminating in the construction of measure-preserving flows.

Although we will relate each result to probabilistic theory and introduce intuition where we can,
the formal details in the following require a working knowledge of differential geometry up 
to \cite{Lee:2013}.  We will also use the notation therein throughout the paper.  For readers 
new to the subject but interested in learning more, we recommend the introduction 
in \cite{BaezEtAl:1994}, the applications in~\cite{Schutz:1980, JoseEtAl:1998}, and then 
finally \cite{Lee:2013}.  The theory of symplectic geometry in which we will be particularly 
interested is reviewed in \cite{Schutz:1980, JoseEtAl:1998,  Lee:2013}, with \cite{CannasDaSilva:2001} 
providing the most modern and thorough coverage of the subject.

Smooth manifolds generalize the Euclidean space of real numbers and the corresponding
calculus; in particular, a smooth manifold need only look \textit{locally} like a Euclidean space
(Figures \ref{fig:cylinder}).  This more general space includes Lie groups, Stiefel manifolds, and 
other spaces becoming common in contemporary applications~\citep{ByrneEtAl:2013}, not to 
mention regular Euclidean space as a special case.  It does not, however, include any manifold 
with a discrete topology such as tree spaces.

\begin{figure*}
\centering
\subfigure[]{
%\begin{tikzpicture}[scale=0.25, thick, show background rectangle]
\begin{tikzpicture}[scale=0.25, thick]
  
  \draw[] (-5, -8) -- (-5, 8);
  \draw[] (5, -8) -- (5, 8);
  \draw[style=dashed] (-5, -8) arc (180:360:5 and 1);
  \draw[style=dashed] (0, 8) ellipse (5 and 1);
  
  \node at (0, -11) { $Q = \mathbb{S}^{1} \times \mathbb{R}$};
  
\end{tikzpicture}
}
\subfigure[]{
%\begin{tikzpicture}[scale=0.25, thick, show background rectangle]
\begin{tikzpicture}[scale=0.25, thick]
  
  \fill[gray80] (-3.7, -8.7) -- +(0, 16.65) arc (220:275:5 and 1) -- +(0, -16.65) arc (275:220:5 and 1);
  \fill[gray80] (1.35, 9.65) -- +(0, -2) arc(-80:-50:5 and 1) -- +(0, 1.5) arc(50:80:5 and 1);
  
  \draw[] (-5, -8) -- (-5, 8);
  \draw[] (0.5, -9.1) -- (0.5, 6.9);
  \draw[] (3.7, 7.9) -- (3.7, 8.5);
  \draw[style=dashed] (3.75, 8.5) arc (40:275:5 and 1);
  \draw[style=dashed] (-5, -8) arc (180:275:5 and 1);
  
  \node at (-5, 10) { $\mathcal{U}_{1} \subset Q$};

  \draw[] (5, -7.5) -- (5, 8.5);
  \draw[] (-3.7, 8) -- +(0, -0.85);
  \draw[] (1.35, 9.65) -- +(0, -0.9);
  \draw[style=dashed] (-3.75, 8) arc (-140:80:5 and 1);
  \draw[style=dashed] (5, -7.5) arc (0:-85:5 and 1);
  
  \node at (5, -11) { $\mathcal{U}_{2} \subset Q$};
  
\end{tikzpicture}
}
\subfigure[]{
%\begin{tikzpicture}[scale=0.25, thick, show background rectangle]
\begin{tikzpicture}[scale=0.25, thick]
  
  \draw[] (-10, 1) -- (-10, 9);
  \draw[] (10, 1) -- (10, 9);
  \draw[style=dashed] (-10, 1) -- (10, 1);
  \draw[style=dashed] (-10, 9) -- (10, 9);
  
  \node at (0, 11) { $\psi_{1} \! \left( \mathcal{U}_{1} \right) \subset \mathbb{R}^{2}$};
  
  \fill[gray80] (-10, 1) rectangle (-8, 9);
  \fill[gray80] (10, 1) rectangle (8, 9);
  
  \draw[] (-10, -1) -- (-10, -9);
  \draw[] (10, -1) -- (10, -9);
  \draw[style=dashed] (-10, -1) -- (10, -1);
  \draw[style=dashed] (-10, -9) -- (10, -9);

  \node at (0, -11) { $\psi_{2} \! \left( \mathcal{U}_{2} \right) \subset \mathbb{R}^{2}$};

  \fill[gray80] (-10, -1) rectangle (-8, -9);
  \fill[gray80] (10, -1) rectangle (8, -9);
  
\end{tikzpicture}
}
\caption{(a) The cylinder, $Q = \mathbb{S}^{1} \times \mathbb{R}$, is a nontrivial example 
of a manifold.  Although not globally equivalent to a Euclidean space, (b) the cylinder can 
be covered in two neighborhoods (c) that are themselves isomorphic to an open neighborhood 
in $\mathbb{R}^{2}$.  The manifold becomes smooth when 
$\psi_{1} \circ \psi_{2}^{-1} : \mathbb{R}^{2} \rightarrow \mathbb{R}^{2}$ is a smooth function 
wherever the two neighborhoods intersect (the intersections here shown in gray).  }
\label{fig:cylinder}
\end{figure*}

Formally, we assume that our sample space, $Q$, satisfies the properties of a smooth, connected, and 
orientable $n$-dimensional manifold.  Specifically we require that $Q$ be a Hausdorff and second-countable 
topological space that is locally homeomorphic to $\mathbb{R}^{n}$ and equipped with a differential structure,
\begin{equation*}
\left\{ \mathcal{U}_{\alpha}, \psi_{\alpha} \right\}_{\alpha \in I},
\end{equation*}
consisting of open neighborhoods in Q,
\begin{equation*}
\mathcal{U}_{\alpha} \subset Q,
\end{equation*}
and homeomorphic charts,
\begin{equation*}
\psi_{\alpha} : \mathcal{U}_{\alpha} \rightarrow \mathcal{V}_{\alpha} \subset \mathbb{R}^{n},
\end{equation*}
that are smooth functions whenever their domains overlap (Figure \ref{fig:cylinder}),
\begin{equation*}
\psi_{\beta} \circ \psi_{\alpha}^{-1} \in C^{\infty} \! \left( \mathbb{R}^{n} \right),
\forall \alpha, \beta \, | \, \mathcal{U}_{\alpha} \cap \mathcal{U}_{\beta} \neq \emptyset.
\end{equation*}
Coordinates subordinate to a chart,
\begin{align*}
q^{i} : \,\,& \mathcal{U}_{\alpha} \rightarrow \mathbb{R}
\\
& q \rightarrow \pi_{i} \circ \psi_{\alpha},
\end{align*}
where $\pi_{i}$ is the $i$th Euclidean projection on the image of $\psi_{\alpha}$, provide
local parameterizations of the manifold convenient for explicit calculations.

This differential structure allows us to define calculus on manifolds by applying concepts
from real analysis in each chart.  The differential properties of a function 
$f :  Q \rightarrow \mathbb{R}$, for example, can be studied by considering the entirely-real 
functions,
\begin{equation*}
f \circ \psi_{\alpha}^{-1} : \mathbb{R}^{n} \rightarrow \mathbb{R};
\end{equation*}
because the charts are smooth in their overlap, these local properties define a consistent 
global definition of smoothness.

Ultimately these properties manifest as geometric objects on $Q$, most importantly vector 
fields and differential $k$-forms.  Informally, vector fields specify directions and magnitudes
at each point in the manifold while $k$-forms define multilinear, antisymmetric maps of $k$ such 
vectors to $\mathbb{R}$.  If we consider $n$ linearly-independent vector fields as defining infinitesimal 
parallelepipeds at every point in space, then the action of $n$-forms provides a local sense of
volume and, consequently, integration.  In particular, when the manifold is orientable we can 
define $n$-forms that are everywhere positive and a geometric notion of a measure.

Here we consider the probabilistic interpretation of these volume forms, first on smooth 
manifolds in general and then on smooth manifolds with additional structure: 
fiber bundles, Riemannian manifolds, and symplectic manifolds.  Symplectic manifolds
will be particularly important as they naturally provide measure-preserving flows.  Proofs 
of intermediate lemmas are presented in Appendix \ref{sec:proofs}.

%%%%%%%%%%%%%%%%%%%%%%%%%%%%%%%%%%%%%%%%
%%%%%%%%%%%%%%%%%%%%%%%%%%%%%%%%%%%%%%%%

\subsection{Smooth Measures on Generic Smooth Manifolds}

Formally, volume forms are defined as positive, top-rank differential forms,
\begin{equation*}
\mathcal{M} \! \left( Q \right) \equiv
\left\{ \mu \in \Omega^{n} \! \left( Q \right) \, | \, \mu_{q} > 0, \forall q \in Q \right\},
\end{equation*}
where $\Omega^{n} \! \left( Q \right)$ is the space of $n$-forms on $Q$.  By leveraging
the local equivalence to Euclidean space, we can show that these volume forms satisfy 
all of the properties of $\sigma$-finite measures on $Q$ (Figure \ref{fig:cylinder_density}).

\begin{figure*}
\centering
\subfigure[]{
%\begin{tikzpicture}[scale=0.25, thick, show background rectangle]
\begin{tikzpicture}[scale=0.25, thick]

  \node[] at (0,0) {\includegraphics[width=2.5cm]{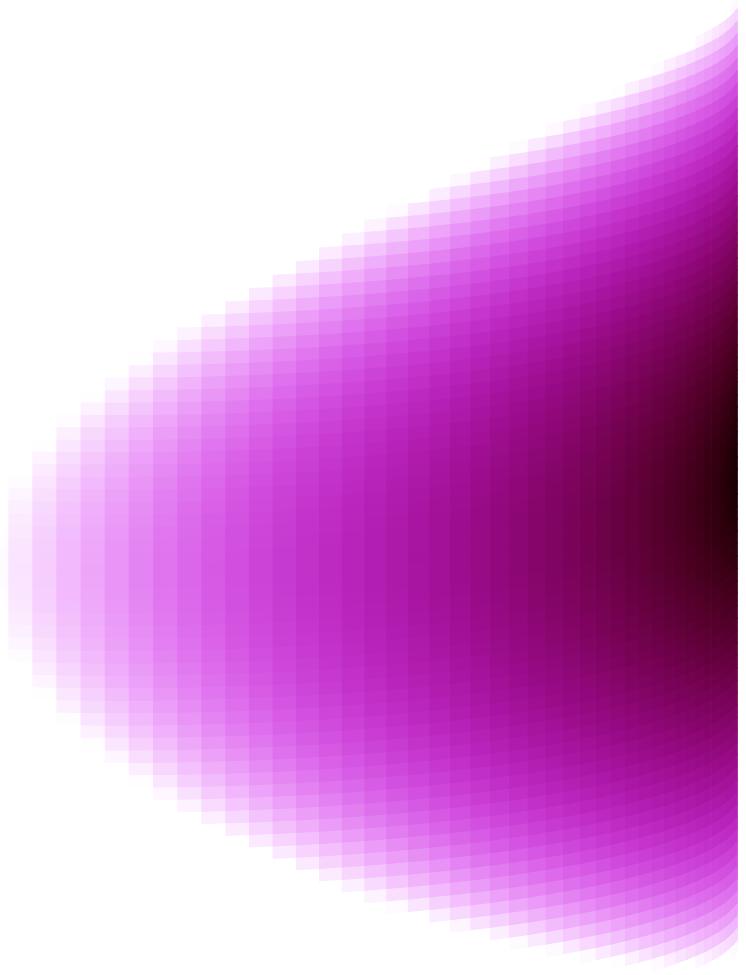}};

  \draw[] (5, -7.5) -- (5, 8.5);
  \draw[] (-3.7, 8) -- +(0, -16);
  \draw[] (1.35, 9.65) -- +(0, -2);
  \draw[style=dashed] (-3.75, 8) arc (-140:80:5 and 1);
  \draw[style=dashed] (5, -7.5) arc (0:-140:5 and 1);
  
  \node at (0, -11) { $\mathcal{U}_{2} \subset Q$};
  
\end{tikzpicture}
}
\subfigure[]{
%\begin{tikzpicture}[scale=0.25, thick, show background rectangle]
\begin{tikzpicture}[scale=0.25, thick]
  
  \node[] at (0,0) {\includegraphics[width=5cm]{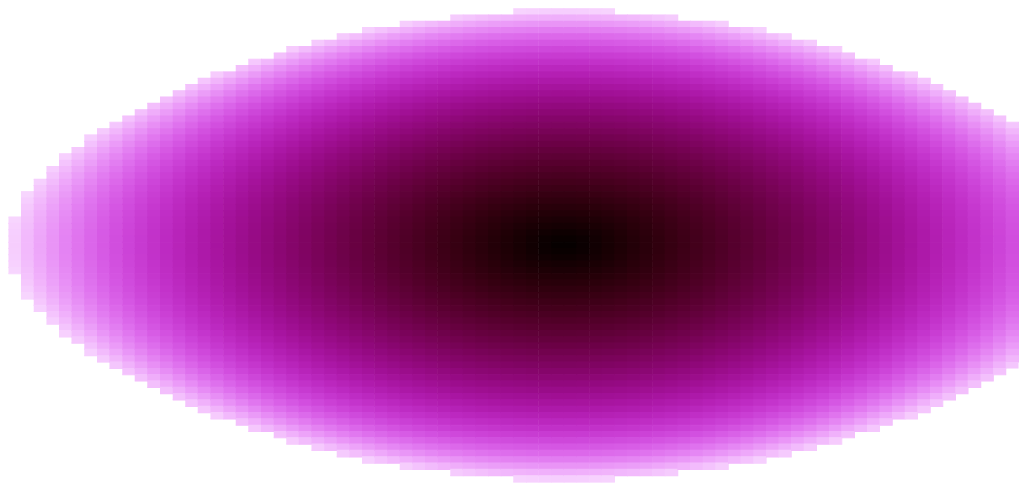}};
  
  \draw[] (-10, -8) -- (-10, 8);
  \draw[] (10, -8) -- (10, 8);
  \draw[style=dashed] (-10, -8) -- (10, -8);
  \draw[style=dashed] (-10, 8) -- (10, 8);
  
  \node at (0, -11) { $\psi_{2} \! \left( \mathcal{U}_{2} \right) \subset \mathbb{R}^{2}$};
  
\end{tikzpicture}
}
\caption{(a) In the neighborhood of a chart, any top-rank differential form is specified 
by its density, $\mu \! \left( q^{1}, \ldots, q^{n} \right)$, with respect to the coordinate volume, 
$\mu = \mu \! \left( q^{1}, \ldots, q^{n} \right) \dd q^{1} \wedge \ldots \wedge \dd q^{n}$,
(b) which pushes forward to a density with respect to the Lebesgue measure in the
image of the corresponding chart.  By smoothly patching together these equivalences,
Lemma \ref{lem:forms_as_measures} demonstrates that these forms are in fact measures. }
\label{fig:cylinder_density}
\end{figure*}

\begin{lemma}
\label{lem:forms_as_measures}

If $Q$ is a positively-oriented, smooth manifold then $\mathcal{M} \! \left( Q \right)$ 
is non-empty and its elements are $\sigma$-finite measures on $Q$.

\end{lemma}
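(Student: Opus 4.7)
My plan has two components matching the two assertions in the lemma: non-emptiness of $\mathcal{M}(Q)$, and the claim that each $\mu \in \mathcal{M}(Q)$ actually defines a $\sigma$-finite Borel measure.

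For non-emptiness, I would invoke the standard partition-of-unity construction. Since $Q$ is second-countable and Hausdorff, it admits a locally finite refinement of the atlas $\{(\mathcal{U}_\alpha,\psi_\alpha)\}$ together with a smooth partition of unity $\{\chi_\alpha\}$ subordinate to the cover. On each chart, $\dd q^{1}_{\alpha} \wedge \cdots \wedge \dd q^{n}_{\alpha}$ is a nowhere-vanishing top form; because $Q$ is \emph{positively oriented}, we may restrict to charts whose coordinate orientations agree with the global orientation, so these local forms are positive. The glued form
\begin{equation*}
\mu \;=\; \sum_{\alpha \in I} \chi_\alpha \, \dd q^{1}_{\alpha} \wedge \cdots \wedge \dd q^{n}_{\alpha}
\end{equation*}
is then a well-defined smooth $n$-form, and positivity at each $q$ follows because at least one $\chi_\alpha(q)>0$ while every summand with $\chi_\alpha(q)>0$ contributes a positive multiple of the oriented volume form in a common orientation class.

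For the measure claim, fix $\mu \in \mathcal{M}(Q)$. In any chart, $\mu$ is represented as a smooth positive density $f_\alpha$ times $\dd q^{1} \wedge \cdots \wedge \dd q^{n}$, so the pushforward $(\psi_\alpha)_{*}(\mu|_{\mathcal{U}_\alpha})$ is (by the usual identification of oriented top forms with signed Lebesgue densities) the positive Borel measure $f_\alpha \circ \psi_\alpha^{-1}\,\dd\lambda$ on $\mathcal{V}_\alpha \subset \mathbb{R}^n$. The Lebesgue density is $\sigma$-finite and countably additive on $\mathcal{B}(\mathcal{V}_\alpha)$, so the same holds for its pullback on $\mathcal{B}(\mathcal{U}_\alpha)$. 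I would then define
\begin{equation*}
\mu(A) \;\equiv\; \sum_{\alpha \in I} \int_{\psi_\alpha(A \cap \mathcal{U}_\alpha)} (\chi_\alpha f_\alpha) \circ \psi_\alpha^{-1} \, \dd\lambda
\end{equation*}
for $A \in \mathcal{B}(Q)$, using the same partition $\{\chi_\alpha\}$. Non-negativity is immediate, and countable additivity follows from monotone convergence applied termwise to each chart integral. Independence of the choice of atlas and partition of unity follows from the smooth change-of-variables formula: on overlaps $\mathcal{U}_\alpha \cap \mathcal{U}_\beta$, the transformation law for top forms $f_\beta = f_\alpha \cdot \det \partial(\psi_\alpha \circ \psi_\beta^{-1})$ matches exactly the Jacobian factor in the Lebesgue change of variables, with positivity of the determinant guaranteed by the shared orientation.

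For $\sigma$-finiteness, I use second-countability of $Q$: cover $Q$ by countably many precompact coordinate balls $\{B_k\}$. On each $\overline{B_k}$, the smooth density $f_\alpha$ is bounded and the image under $\psi_\alpha$ has finite Lebesgue measure, so $\mu(B_k) < \infty$, and $\{B_k\}$ is the required countable exhaustion. The main technical subtlety, and the step I would write out most carefully, is verifying that the chart-wise definition patches consistently into a single measure on $\mathcal{B}(Q)$; this is precisely where the compatibility of the smooth atlas, the positivity of Jacobians from positive orientation, and the change-of-variables theorem must align to make the construction independent of auxiliary choices. Everything else reduces to standard facts about Lebesgue integration transported through charts.
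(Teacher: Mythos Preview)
Your proposal is correct and follows essentially the same route as the paper: a partition-of-unity construction for non-emptiness, reduction to Lebesgue integrals in charts for the measure axioms, and a topological argument for $\sigma$-finiteness. The only minor difference is that you argue $\sigma$-finiteness via a countable precompact exhaustion from second-countability, whereas the paper phrases it through paracompactness and locally finite refinements; these are equivalent here and your version is arguably more explicit.
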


\noindent We will refer to elements of $\mathcal{M} \! \left( Q \right)$ as \textit{smooth
measures} on $Q$.

Because of the local compactness of $Q$, the elements of $\mathcal{M} \! \left( Q \right)$
are not just measures but also Radon measures.  As expected from the Riesz Representation 
Theorem \citep{Folland1999}, any such element also serves as a linear functional via the usual 
geometric definition of integration,
\begin{align*}
\mu : \,
& L^{1} \! \left( Q, \mu \right) \rightarrow \mathbb{R} \\
& f \mapsto \int_{Q} f \mu.
\end{align*}
Consequently, $\left( Q, \mathcal{M} \! \left( Q \right) \right)$ is also a Radon space, which
guarantees the existence of various probabilistic objects such as disintegrations as discussed
below.

Ultimately we are not interested in the whole of $\mathcal{M} \! \left( Q \right)$ 
but rather $\mathcal{P} \! \left( Q \right)$, the subset of volume forms with unit integral,
\begin{equation*}
\mathcal{P} \! \left( Q \right) 
= \left\{ \varpi \in \mathcal{M} \! \left( Q \right) \left| \int_{Q} \varpi = 1 \right. \right\},
\end{equation*}
which serve as probability measures.  Because we can always normalize measures,
$\mathcal{P} \! \left( Q \right)$ is equivalent the finite elements of $\mathcal{M} \! \left( Q \right)$,
\begin{equation*}
\widetilde{\mathcal{M}} \! \left( Q \right) 
= \left\{ \varpi \in \mathcal{M} \! \left( Q \right) \left| \int_{Q} \varpi < \infty \right. \right\},
\end{equation*}
modulo their normalizations.

\begin{corollary}
\label{cor:forms_as_finite_measures}

If $Q$ is a positively-oriented, smooth manifold then $\widetilde{\mathcal{M}} \! \left( Q \right)$, 
and hence $\mathcal{P} \! \left( Q \right)$, is non-empty.

\end{corollary}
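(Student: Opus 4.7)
The plan is to start from Lemma \ref{lem:forms_as_measures}, which guarantees the existence of some smooth measure $\mu \in \mathcal{M}(Q)$, and then to rescale $\mu$ by a strictly positive smooth function so that the resulting top-rank form has finite total integral. Since multiplying a positive $n$-form by a positive smooth function yields another positive $n$-form, this produces an element of $\widetilde{\mathcal{M}}(Q)$, and normalizing by its (finite, nonzero) mass produces the desired element of $\mathcal{P}(Q)$.

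To construct the rescaling, I would exploit the fact that, being Hausdorff, second-countable, and locally Euclidean, $Q$ is paracompact and $\sigma$-compact, so it admits a locally finite countable open cover $\{\mathcal{U}_i\}_{i \in \mathbb{N}}$ by precompact coordinate neighborhoods together with a subordinate smooth partition of unity $\{\rho_i\}_{i \in \mathbb{N}}$, with each $\rho_i \in C^\infty(Q)$ satisfying $0 \le \rho_i \le 1$, $\mathrm{supp}(\rho_i) \Subset \mathcal{U}_i$, and $\sum_i \rho_i \equiv 1$. Because each $\rho_i$ has compact support, Lemma \ref{lem:forms_as_measures} gives
\begin{equation*}
  c_i \equiv \int_{Q} \rho_i \, \mu < \infty.
\end{equation*}
Choosing positive weights $a_i > 0$ with $\sum_i a_i (1 + c_i) < \infty$ (for instance $a_i = 2^{-i}(1+c_i)^{-1}$) and setting
\begin{equation*}
  f \equiv \sum_{i \in \mathbb{N}} a_i \, \rho_i,
\end{equation*}
local finiteness of the cover ensures that the sum is smooth and that at every $q \in Q$ only finitely many terms contribute; since at each point some $\rho_i(q) > 0$, we get $f \in C^\infty(Q)$ with $f > 0$ everywhere.

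I would then set $\varpi_0 \equiv f \mu$. Positivity of $f$ and of $\mu$ gives $\varpi_0 \in \mathcal{M}(Q)$, while
\begin{equation*}
  \int_{Q} \varpi_0 = \sum_{i \in \mathbb{N}} a_i \, c_i < \infty
\end{equation*}
places $\varpi_0$ in $\widetilde{\mathcal{M}}(Q)$, and $\varpi_0 / \int_Q \varpi_0 \in \mathcal{P}(Q)$ completes the argument. The only real obstacle is the verification that such an $f$ can be made simultaneously smooth, strictly positive, and integrable against $\mu$; this is precisely what the partition-of-unity construction together with the weighted-series trick delivers, and everything else (positivity of the product form, linearity of integration, non-vanishing of the total mass) is routine.
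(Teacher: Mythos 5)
Your proposal is correct and follows essentially the same route as the paper: both use paracompactness and a partition of unity to obtain a smooth positive $n$-form of finite total mass, and then normalize. The only difference is one of detail, not of method: the paper's proof merely asserts that the prototypical measure from Lemma \ref{lem:forms_as_measures} ``can always be chosen'' finite, whereas you supply the explicit weighted-series construction that justifies this assertion.
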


\begin{proof}

Because the manifold is paracompact, the prototypical measure constructed in Lemma
\ref{lem:forms_as_measures} can always be chosen such that the measure of the entire
manifold is finite.

\end{proof}

%%%%%%%%%%%%%%%%%%%%%%%%%%%%%%%%%%%%%%%%
%%%%%%%%%%%%%%%%%%%%%%%%%%%%%%%%%%%%%%%%

\subsection{Smooth Measures on Fiber Bundles} \label{subsec:fiber_bundles}

Although conditional probability measures are ubiquitous in statistical methodology,
they are notoriously subtle objects to rigorously construct in theory~\citep{Halmos:1950}.  
Formally, a conditional probability measure appeals to a measurable function between 
two generic spaces, $F : R \rightarrow S$, to define measures on $R$ for some subsets 
of $S$ along with an abundance of technicalities.  It is only when $S$ is endowed with 
the quotient topology relative to $F$~\citep{Folland1999, Lee:2011} that we can define 
\textit{regular conditional probability measures} that shed many of the technicalities and 
align with common intuition.  In practice, regular conditional probability measures are 
most conveniently constructed as
\textit{distintegrations}~\citep{ChangEtAl:1997, LeaoEtAl:2004}.

\textit{Fiber bundles} are smooth manifolds endowed with a canonical map and the quotient 
topology necessary to admit canonical disintegrations and, consequently, the geometric 
equivalent of conditional and marginal probability measures.

\subsubsection{Fiber Bundles}

A smooth fiber bundle, $\pi: Z \rightarrow Q$, combines an $\left(n + k \right)$-dimensional 
total space, $Z$, an $n$-dimensional base space, $Q$, and a smooth projection, 
$\pi$, that submerses the total space into the base space.  We will refer to a positively-oriented
fiber bundle as a fiber bundle in which both the total space and the base space are 
positively-oriented and the projection operator is orientation-preserving.

Each fiber,
\begin{equation*}
Z_{q} = \pi^{-1} \! \left( q \right),
\end{equation*}
is itself a $k$-dimensional manifold isomorphic to a common fiber space, $F$, and is naturally 
immersed into the total space,
\begin{equation*}
\iota_{q} : Z_{q} \hookrightarrow Z,
\end{equation*}
where $\iota_{q}$ is the inclusion map.
We will make heavy use of the fact that there exists a trivializing cover of the base space, 
$\left\{ \mathcal{U}_{\alpha} \right\}$, along with subordinate charts and a partition of unity, 
where the corresponding total space is isomorphic to a trivial product 
(Figures \ref{fig:local_fiber_bundle}, \ref{fig:cylinder_as_bundle}),
\begin{equation*}
\pi^{-1} \! \left( \mathcal{U}_{\alpha} \right) \approx \mathcal{U}_{\alpha} \times F.
\end{equation*}

\begin{figure*}
\centering
%
%\begin{tikzpicture}[scale=0.25, thick, show background rectangle]
\begin{tikzpicture}[scale=0.25, thick]
  
  \draw[] (1.5, 0) .. controls +(-2, 0) and +(-2, 0) .. +(5, 20);
  \draw[] (11.55, 0) .. controls +(-2, 0) and +(-2, 0) .. +(5, 20);
  
  \draw[name path=piU] (5.5, 0.6) .. controls +(-2, 0) and +(-2, 0) .. +(5, 20)
  node[above] {$\mathcal{U}_{\alpha} \times F \approx \pi^{-1} ( \mathcal{U}_{\alpha} ) \subset Z$};
  
  \draw[name path=F0, style=dashed] (1.375, -0.15) .. controls +(1, 1) and +(1, 1) .. +(10, 0);
  \draw[name path=F1] (1.1, 2.50) .. controls +(1, 1) and +(1, 1) .. +(10, 0);
  \draw[name path=F2] (1.45, 5.00) .. controls +(1, 1) and +(1, 1) .. +(10, 0);
  \draw[name path=F3] (1.89, 7.50) .. controls +(1, 1) and +(1, 1) .. +(10, 0);
  \draw[name path=F4, color=highlight] (2.500, 10.0) 
  node[left, color=black] {$Z_{q} = \pi^{-1} \! \left( q \right) \approx F$} 
  .. controls +(1, 1) and +(1, 1) .. +(10, 0);
  \draw[name path=F5] (3.15, 12.5) .. controls +(1, 1) and +(1, 1) .. +(10, 0);
  \draw[name path=F6] (3.95, 15.0) .. controls +(1, 1) and +(1, 1) .. +(10, 0);
  \draw[name path=F7] (4.85, 17.5) .. controls +(1, 1) and +(1, 1) .. +(10, 0);
  \draw[name path=F8, style=dashed] (6.35, 20) .. controls +(1, 1) and +(1, 1) .. +(10, 0);
  
  \fill[name intersections={of=piU and F4}, highlight] (intersection-1) circle (8pt);
  
  \draw[->] (20, 10) -- node[above] {$\pi$} +(10, 0);
  
  \draw[name path=U] (36.5, 0.6) .. controls +(-2, 0) and +(-2, 0) .. +(5, 20)
  node[above] {$\mathcal{U}_{\alpha} \subset Q$};
  \draw[style=dashed] (40.25, 20.45) .. controls +(1, 0.25) .. +(2, 0);
  \draw[style=dashed] (35.5, 0.45) .. controls +(1, 0.25) .. +(2, 0);

  \fill[color=highlight] (37.35, 10) circle (8pt)
  node[left, color=black] {$q$};

\end{tikzpicture}
\caption{In a local neighborhood, the total space of a fiber bundle, 
$\pi^{-1} ( \mathcal{U}_{\alpha} ) \subset Z$, is equivalent to attaching a copy of some common 
fiber space, $F$, to each point of the base space, $q \in \mathcal{U}_{\alpha} \subset Q$.  Under 
the projection operator each fiber projects back to the point at which it is attached.}
\label{fig:local_fiber_bundle}
\end{figure*}
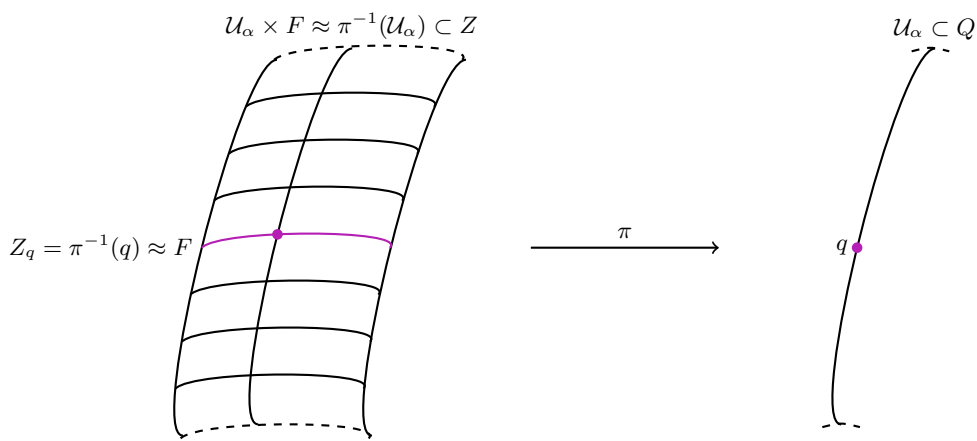

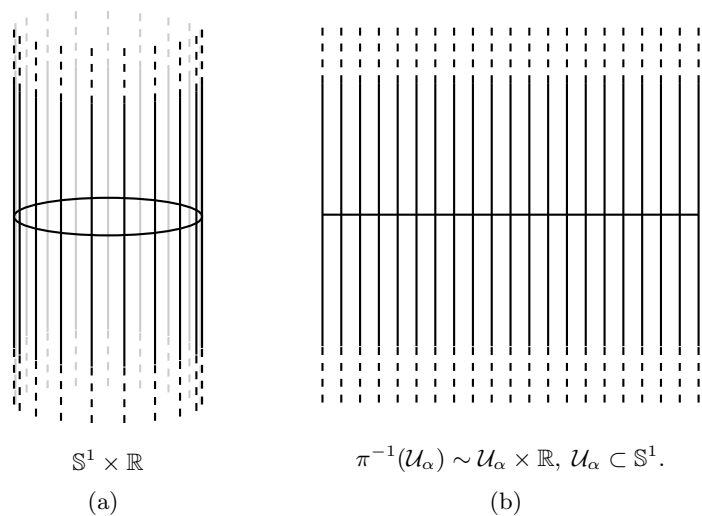
\begin{figure*}
\centering
\subfigure[]{
%\begin{tikzpicture}[scale=0.25, thick, show background rectangle]
\begin{tikzpicture}[scale=0.25, thick]
  
 \draw[color=white] (-10, 0) -- (10, 0);
  
 \foreach \x in {0,20,...,180} {
   \draw[color=gray80, style=dashed] ({5 * cos(\x + 10)}, {-10 + sin(\x)}) -- ({5 * cos(\x + 10)}, {-7 + sin(\x)});
   \draw[color=gray80] ({5 * cos(\x + 10)}, {-7 + sin(\x)}) -- ({5 * cos(\x + 10)}, {7 + sin(\x)});
   \draw[color=gray80, style=dashed] ({5 * cos(\x + 10)}, {7 + sin(\x)}) -- ({5 * cos(\x + 10)}, {10 + sin(\x)});
 }
  
 \draw[] (0, 0) ellipse (5 and 1);

 \foreach \x in {0,20,...,180} {
   \draw[style=dashed] ({5 * cos(\x)}, {-10 - sin(\x)}) -- ({5 * cos(\x)}, {-7 - sin(\x)});
   \draw[] ({5 * cos(\x)}, {-7 - sin(\x)}) -- ({5 * cos(\x)}, {7 - sin(\x)}); 
   \draw[style=dashed] ({5 * cos(\x)}, {7 - sin(\x)}) -- ({5 * cos(\x)}, {10 - sin(\x)});
 }
  
 \node at (0, -13) { $\mathbb{S}^{1} \times \mathbb{R}$ };
 
\end{tikzpicture}
}
\subfigure[]{
%\begin{tikzpicture}[scale=0.25, thick, show background rectangle]
\begin{tikzpicture}[scale=0.25, thick]
  
  \foreach \x in {-10,...,10} {
    \draw[style=dashed] (\x, -10) -- (\x, -7);
    \draw[] (\x, -7) -- (\x, 7);
    \draw[style=dashed] (\x, 7) -- (\x, 10);
  }
  
  \draw[] (-10, 0) -- (10, -0);
  \node at (0, -13) { 
    $\pi^{-1} \! \left( \mathcal{U}_{\alpha} \right) \sim \mathcal{U}_{\alpha} \times \mathbb{R},
    \; \mathcal{U}_{\alpha} \subset \mathbb{S}^{1}.$ };
  
\end{tikzpicture}
}
\caption{(a) The canonical projection, $\pi : \mathbb{S}^{1} \times \mathbb{R} \rightarrow \mathbb{S}^{1}$, 
gives the cylinder the structure of a fiber bundle with fiber space $F = \mathbb{R}$.  (b) The domain of
each chart becomes isomorphic to the product of a neighborhood of the base space,
$\mathcal{U}_{\alpha} \subset \mathbb{S}^{1}$, and the fiber, $\mathbb{R}$.}
\label{fig:cylinder_as_bundle}
\end{figure*}

Vector fields on $Z$ are classified by their action under the projection operator.  Vertical
vector fields, $Y_{i}$, lie in the kernel of the projection operator,
\begin{equation*}
\pi_{*} Y_{i} = 0,
\end{equation*}
while horizontal vector fields, $\tilde{X}_{i}$, pushforward to the tangent space of the
base space,
\begin{equation*}
\pi_{*} \tilde{X}_{i} \! \left( z \right) = X_{i} \! \left( \pi \! \left( z \right) \right) \in T_{\pi \left( z \right)} Q,
\end{equation*}
where $z \in Z$ and $\pi \! \left( z \right) \in Q$.  Horizontal forms are forms on the total
space that vanish then contracted against one or more vertical vector fields.

Note that vector fields on the base space do not uniquely define horizontal vector fields on 
the total space; a choice of $\tilde{X}_{i}$  consistent with $X_{i}$ is called a \textit{horizontal lift} 
of $X_{i}$.  More generally we will refer to the \textit{lift} of an object on the base space as the 
selection of some object on the total space that pushes forward to the corresponding object on the 
base space.

%%%%%%%%%%%%%%%%%%%%%%%%%%%%%%%%%%%%%%%%

\subsubsection{Disintegrating Fiber Bundles}

Because both $Z$ and $Q$ are both smooth manifolds, and hence Radon spaces, the structure 
of the fiber bundle guarantees the existence of disintegrations with respect to the projection 
operator~\citep{LeaoEtAl:2004, Simmons:2012, CensorEtAl:2014} and, under certain regularity 
conditions, regular conditional probability measures.  A substantial benefit of working with smooth 
manifolds is that we can not only prove the existence of disintegrations but also explicitly construct 
their geometric equivalents and utilize them in practice.

\begin{definition}
\label{def:disintegration}

Let $\left( R, \mathcal{B} \! \left( R \right) \right)$ and $\left( S, \mathcal{B} \! \left( S \right) \right)$ 
be two measurable spaces with the respective $\sigma$-finite measures $\mu_{R}$ and $\mu_{S}$, 
and a measurable map, $F : R \rightarrow S$, between them. A \emph{disintegration} of  $\mu_{R}$ 
with respect to $F$ and $\mu_{S}$ is a map,
\begin{equation*}
\nu : S \times \mathcal{B} \! \left( R \right) \rightarrow \mathbb{R}^{+},
\end{equation*}
such that
\begin{enumerate}[i]
  \item $\nu \! \left( s, \cdot \right)$ is a $\mathcal{B} \! \left( R \right)$-finite measure
           concentrating on on the level set $F^{-1} \! \left( s \right)$, i.e. for 
           $\mu_{S}$-almost all s
           \begin{equation*}
           \nu \! \left( s, A \right) = 0, \, 
           \forall A \in \mathcal{B} \! \left( R \right) | A \cap F^{-1} \! \left( s \right) = 0,
           \end{equation*}
\end{enumerate}
and for any positive, measurable function $f \in L^{1} \! \left( R, \mu_{R} \right)$,
\begin{enumerate}[i]
  \setcounter{enumi}{1}
  \item $s \mapsto \int_{R} f \! \left( r \right) \, \nu \! \left( s, \dd r \right)$ is a measurable function 
           for all $s \in S$.
  \item $\int_{R} f \! \left( r \right) \, \mu_{R} \! \left( \dd r \right) 
           = \int_{S} \int_{F^{-1} \left( s \right)} f \! \left( r \right) 
              \nu \! \left( s, \dd r \right) \mu_{S} \! \left( \dd s \right) $.
\end{enumerate}

\end{definition}

In other words, a disintegration is an unnormalized Markov kernel that concentrates
on the level sets of $F$ instead of the whole of $R$ (Figure \ref{fig:disintegration}).  
Moreover, if $\mu_{R}$ is finite or $F$ is proper then the pushforward measure,
\begin{align*}
\mu_{S} &= T_{*} \mu_{R}
\\
\mu_{S} \! \left( B \right) 
&=
\mu_{R} \! \left( F^{-1} \! \left( B \right) \right), \, \forall B \in \mathcal{B} \! \left( S \right),
\end{align*}
is $\sigma$-finite and known as the marginalization of $\mu_{R}$ with respect to $F$.
In this case the disintegration of $\mu_{R}$ with respect to its pushforward measure 
becomes a normalized kernel and exactly a regular conditional probability measure.
The classic marginalization paradoxes of measure theory~\citep{DavidEtAl:1973} occur 
when the pushforward of $\mu_{R}$ is not $\sigma$-finite and the corresponding 
disintegration, let alone a regular conditional probability measure, does not exist;
we will be careful to explicitly exclude such cases here.

\begin{figure}
\centering
\subfigure[]{ \includegraphics[width=1.75in]{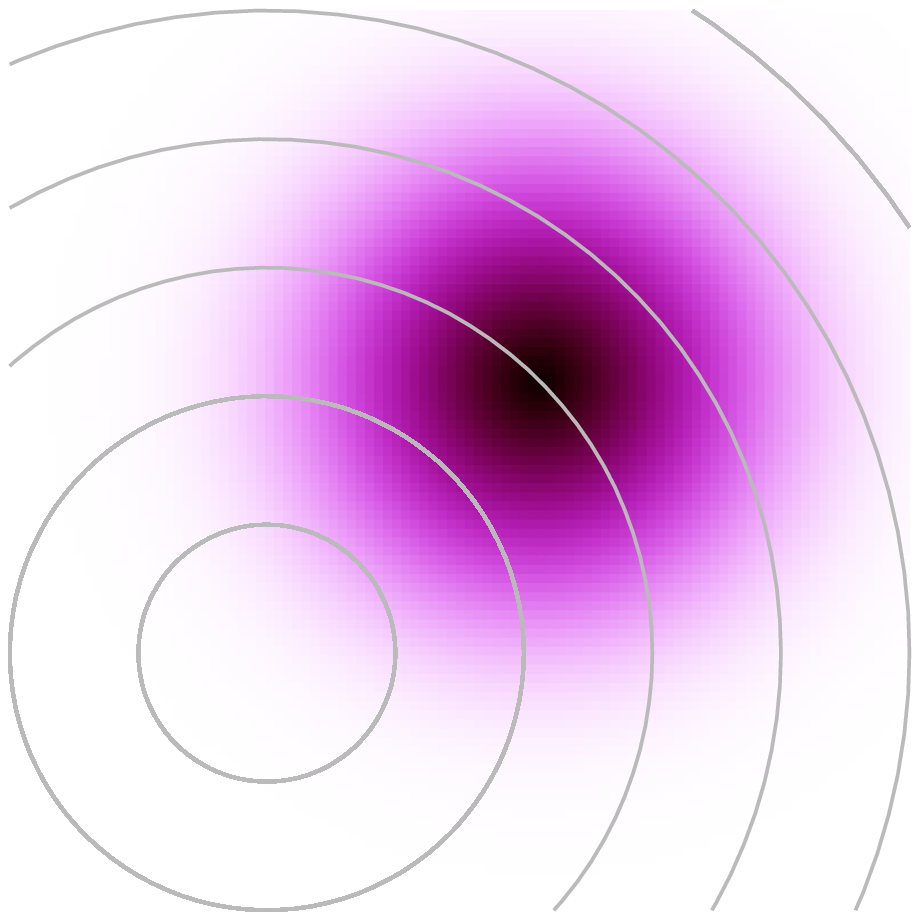} }
\subfigure[]{ \includegraphics[width=1.75in]{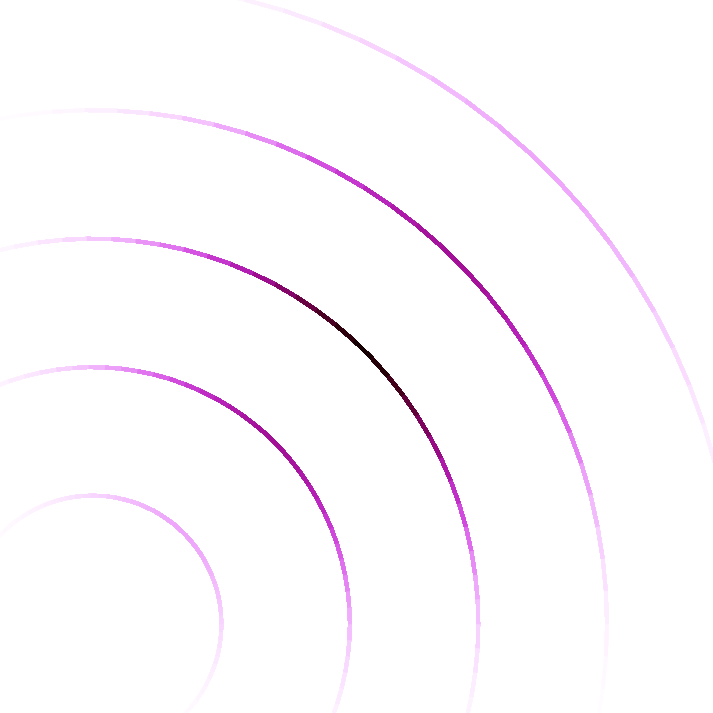} }
\caption{The disintegration of (a) a $\sigma$-finite measure on the space $R$ with respect to a 
map $F: R \rightarrow S$ and a $\sigma$-finite measure on $S$ defines (b) a family of 
$\sigma$-finite measures that concentrate on the level sets of $F$.}
\label{fig:disintegration}
\end{figure}

For the smooth manifolds of interest we do not need the full generality of disintegrations, 
and instead consider the equivalent object restricted to smooth measures.

\begin{definition}
\label{def:smooth_disintegration}

Let $R$ and $S$ be two smooth, orientable manifolds with the respective smooth measures 
$\mu_{R}$ and $\mu_{S}$, and a smooth, orientation-preserving map, $F : R \rightarrow S$, 
between them. A \emph{smooth disintegration} of  $\mu_{R}$ with respect to $F$ and 
$\mu_{S}$ is a map,
\begin{equation*}
\nu : S \times \mathcal{B} \! \left( R \right) \rightarrow \mathbb{R}^{+},
\end{equation*}
such that
\begin{enumerate}[i]
  \item $\nu \! \left( s, \cdot \right)$ is a smooth measure
           concentrating on on the level set $F^{-1} \! \left( s \right)$, i.e. for 
           $\mu_{S}$-almost all s
           \begin{equation*}
           \nu \! \left( s, A \right) = 0, \, 
           \forall A \in \mathcal{B} \! \left( R \right) | A \cap F^{-1} \! \left( s \right) = 0,
           \end{equation*}
\end{enumerate}
and for any positive, smooth function $f \in L^{1} \! \left( R, \mu_{R} \right)$,
\begin{enumerate}[i]
  \setcounter{enumi}{1}
  \item The function $F \! \left( s \right) = \int_{R} f \! \left( r \right) \, \nu \! \left( s, \dd r \right)$ 
           is integrable with respect to any smooth measure on $S$. 
  \item $\int_{R} f \! \left( r \right) \, \mu_{R} \! \left( \dd r \right) 
           = \int_{S} \int_{F^{-1} \left( s \right)} f \! \left( r \right) 
              \nu \! \left( s, \dd r \right) \mu_{S} \! \left( \dd s \right) $.
\end{enumerate}

\end{definition}

Smooth disintegrations have a particularly nice geometric interpretation: Definition \ref{def:smooth_disintegration}i
implies that disintegrations define volume forms when pulled back onto the fibers, while Definition
\ref{def:smooth_disintegration}ii implies that the volume forms are smoothly immersed into the total space
(Figure \ref{fig:geometric_disintegration}).  Hence if we want to construct smooth disintegrations geometrically
then we should consider the space of $k$-forms on $Z$ that restrict to finite volume forms on the 
fibers, i.e. $\omega \in \Omega^{k} \! \left( Z \right)$ satisfying
\begin{align*}
\iota_{q}^{*} \omega &> 0
\\
\int_{Z_{q}} \iota_{q}^{*} \omega &< \infty.
\end{align*}
Note that the finiteness condition is not strictly necessary, but allows us to construct
smooth disintegrations independent of the exact measure being disintegrated.

\begin{figure*}
\centering
\subfigure[]{
%\begin{tikzpicture}[scale=0.20, thick, show background rectangle]
\begin{tikzpicture}[scale=0.20, thick]
  
  \foreach \x in {2,4,...,98} {
    \draw[style=dashed, color=gray80] ({\x / 5 - 10}, -10) -- ({\x / 5 - 10}, -7);
    \draw[color=gray80] ({\x / 5 - 10}, -7) -- ({\x / 5 - 10}, 7);
    \draw[style=dashed, color=gray80] ({\x / 5 - 10}, 7) -- ({\x / 5 - 10}, 10);
  }
  
  \draw[color=gray80] (-10, 0) -- (10, -0);
  
  \node[] at (0,0) {\includegraphics[width=3.96cm]{cylinder_density_flat.eps}};

  \node at (0, -12.5) { 
    $\pi^{-1} \! \left( \mathcal{U}_{\alpha} \right) \sim \mathcal{U}_{\alpha} \times \mathbb{R}$
  };
  
\end{tikzpicture}
}
\subfigure[]{
%\begin{tikzpicture}[scale=0.20, thick, show background rectangle]
\begin{tikzpicture}[scale=0.20, thick]
  
  \fill[color=white] (-10, -10) rectangle (10, 12);
  
  \foreach \x in {2,4,...,98} {
    \draw[color=gray80] ({\x / 5 - 10}, -6) -- (({\x / 5 - 10}, -5.5);
  }
  
  \draw[] (-10, -6) -- (10, -6);
  
  \node[] at (0, 0) {\includegraphics[width=4cm]{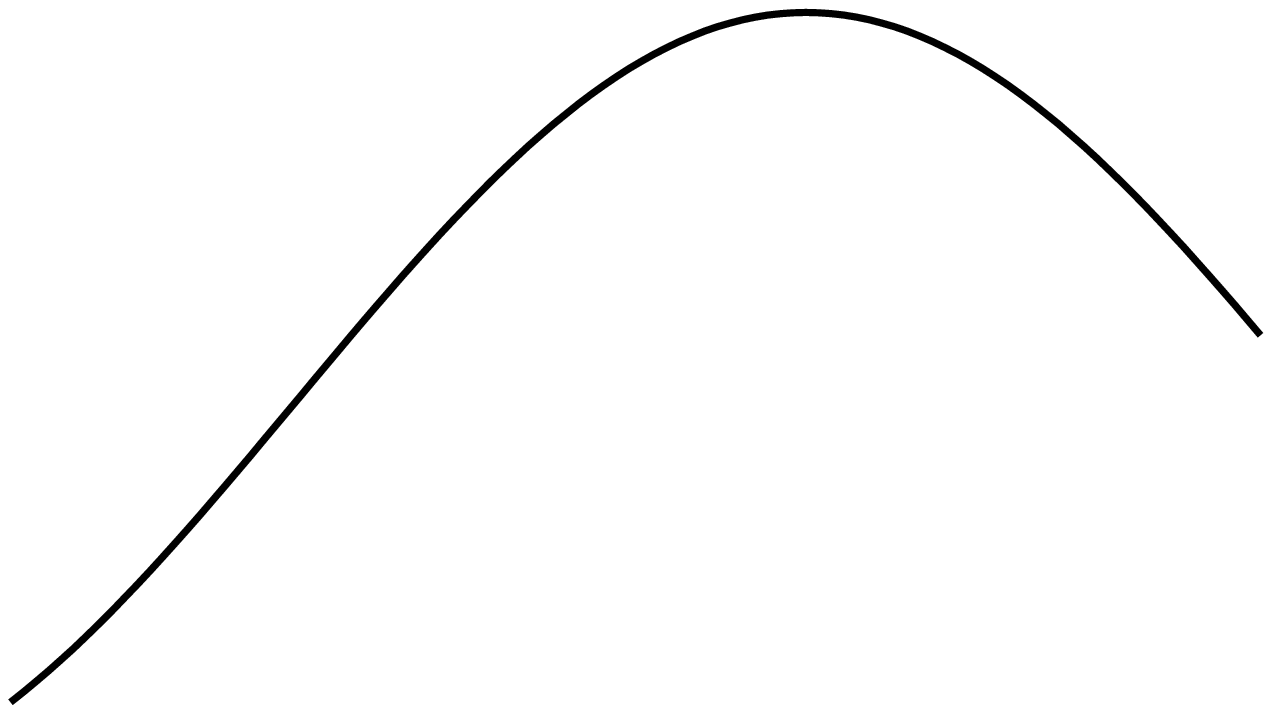}};
  
  \node at (0, -10.5) { $\mathcal{U}_{\alpha} \subset \mathbb{S}^{1}$};
  
\end{tikzpicture}
}
\subfigure[]{
%\begin{tikzpicture}[scale=0.20, thick, show background rectangle]
\begin{tikzpicture}[scale=0.20, thick]
  
  \foreach \x in {2,4,...,98} {
    \draw[style=dashed, color=gray80] ({\x / 5 - 10}, -10) -- ({\x / 5 - 10}, -7);
    \draw[color=gray80] ({\x / 5 - 10}, -7) -- ({\x / 5 - 10}, 7);
    \draw[style=dashed, color=gray80] ({\x / 5 - 10}, 7) -- ({\x / 5 - 10}, 10);
  }
  
  \draw[color=gray80] (-10, 0) -- (10, -0);
  
  %\node[] at (-0.1, 0) {\includegraphics[width=4.95cm]{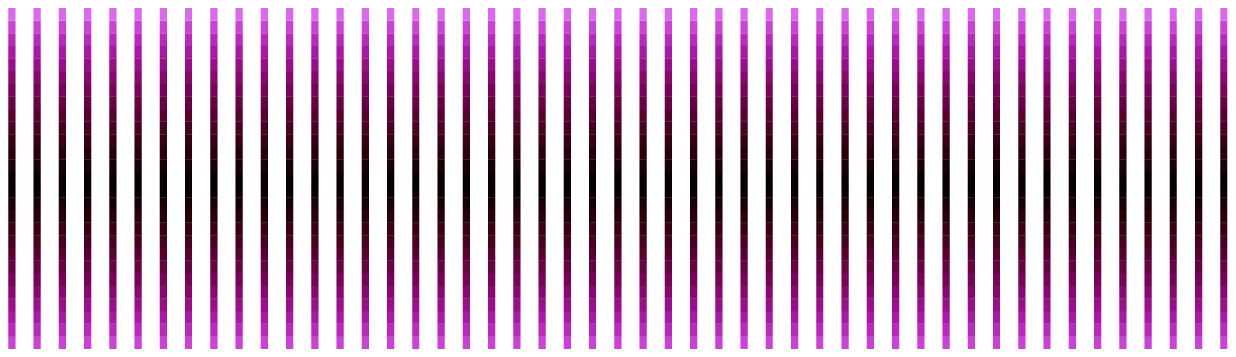}};
  \node[] at (-0.1, 0) {\includegraphics[width=3.96cm]{cylinder_density_disint.eps}};
  
  \node at (0, -12.5) { 
    $\pi^{-1} \! \left( \mathcal{U}_{\alpha} \right) \sim \mathcal{U}_{\alpha} \times \mathbb{R}$
  };
  
\end{tikzpicture}
}
\caption{Considering the cylinder as a fiber bundle, 
$\pi : \mathbb{S}^{1} \times \mathbb{R} \rightarrow \mathbb{S}^{1}$,
(a) any joint measure on the total space and (b) any measure on the base 
space define (c) a disintegration that concentrates on the fibers.  Given
any two of these objects we can uniquely construct the third.}
\label{fig:geometric_disintegration}
\end{figure*}

The only subtlety with such a definition is that $k$-forms on the total space differing by only a
horizontal form will restrict to the same volume form on the fibers.  Consequently we will
consider the equivalence classes of $k$-forms up to the addition of horizontal fields,
\begin{align*}
\Upsilon \! \left( \pi : Z \rightarrow Q \right) &\subset \Omega^{k} \! \left( Z \right) / \sim
\\
\omega_{1} \sim \omega_{2} &\Leftrightarrow \omega_{1} - \omega_{2} 
\in \Omega^{k}_{H} \! \left( \pi : Z \rightarrow Q \right),
\end{align*}
where $\Omega^{k}_{H} \! \left( \pi : Z \rightarrow Q \right)$ is the space of horizontal $k$-forms
on the total space, with elements $\upsilon \in \Upsilon \! \left( \pi : Z \rightarrow Q \right)$
satisfying
\begin{align*}
\iota_{q}^{*} \upsilon &> 0
\\
\int_{Z_{q}} \iota_{q}^{*} \upsilon &< \infty.
\end{align*}
As expected from the fact that any smooth manifold is a Radon space, such forms always
exist.

\begin{lemma}
\label{lem:existence_of_fiber_volume_forms}

The space $\Upsilon \! \left( \pi : Z \rightarrow Q \right)$ is convex and nonempty.

\end{lemma}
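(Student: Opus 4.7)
The strategy is to establish convexity directly from the definitions and then to construct a representative of $\Upsilon(\pi : Z \to Q)$ by pasting together locally trivial pieces with a partition of unity pulled back from the base.

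For convexity, given two classes with representatives $\upsilon_1, \upsilon_2 \in \Omega^{k}(Z)$ satisfying the defining positivity and finiteness conditions on every fiber, and $\lambda \in [0,1]$, I would consider $\upsilon = \lambda \upsilon_1 + (1-\lambda) \upsilon_2$. Since horizontal $k$-forms annihilate any $k$-tuple of vertical vectors, the defining conditions are well-defined on equivalence classes, so it suffices to check them on this convex combination of representatives. Linearity of $\iota_q^{*}$ gives $\iota_q^{*} \upsilon = \lambda \iota_q^{*} \upsilon_1 + (1-\lambda) \iota_q^{*} \upsilon_2$, which is a convex combination of top-rank forms on the $k$-dimensional fiber $Z_q$. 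Because the bundle is positively oriented, both $\iota_q^{*} \upsilon_i$ induce the same orientation on $Z_q$, so their convex combination is positive; linearity of the integral and finiteness of each term then yield $\int_{Z_q} \iota_q^{*} \upsilon < \infty$.

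For nonemptiness, I would exploit a trivializing cover $\{\mathcal{U}_\alpha\}$ of $Q$ with local diffeomorphisms $\Phi_\alpha : \pi^{-1}(\mathcal{U}_\alpha) \to \mathcal{U}_\alpha \times F$. By Corollary \ref{cor:forms_as_finite_measures}, since the common fiber $F$ is a smooth orientable paracompact manifold, there exists a finite smooth volume form $\mu_F$ on $F$; let $\mathrm{pr}_F : \mathcal{U}_\alpha \times F \to F$ denote projection to the fiber. Then $\omega_\alpha := \Phi_\alpha^{*} (\mathrm{pr}_F^{*} \mu_F) \in \Omega^{k}(\pi^{-1}(\mathcal{U}_\alpha))$ restricts on each fiber $Z_q$ with $q \in \mathcal{U}_\alpha$ to a positive top-rank form (of the correct orientation, by the positive orientation hypothesis) with $\int_{Z_q} \iota_q^{*} \omega_\alpha = \int_F \mu_F < \infty$.

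To globalize, pick a smooth partition of unity $\{\rho_\alpha\}$ on $Q$ subordinate to $\{\mathcal{U}_\alpha\}$, pull it back to $Z$ via $\tilde{\rho}_\alpha := \rho_\alpha \circ \pi$, and set $\upsilon := \sum_\alpha \tilde{\rho}_\alpha \, \omega_\alpha$, where each summand is extended by zero off $\pi^{-1}(\mathcal{U}_\alpha)$. Local finiteness of the partition of unity makes $\upsilon$ a well-defined smooth $k$-form on $Z$. On a fiber $Z_q$, $\iota_q^{*} \upsilon = \sum_\alpha \rho_\alpha(q) \, \iota_q^{*} \omega_\alpha$ is a convex combination of positive volume forms with weights summing to one, so it is a positive volume form on $Z_q$; finiteness of its integral follows from $\int_{Z_q} \iota_q^{*} \upsilon = \sum_\alpha \rho_\alpha(q) \int_F \mu_F = \int_F \mu_F < \infty$. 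Hence $[\upsilon] \in \Upsilon(\pi : Z \to Q)$.

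The only delicate step is the orientation compatibility across overlapping trivializations, which is where the \emph{positively-oriented} fiber bundle hypothesis is essential: without it, the fiberwise volume forms induced by different $\Phi_\alpha$ could have opposing signs in overlaps, and the partition-of-unity sum could fail to be everywhere positive. Once that observation is in place, the rest of the argument is a routine application of the machinery of smooth manifolds and partitions of unity.
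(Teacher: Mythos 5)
Your proof is correct and follows essentially the same strategy as the paper's: convexity from the pointwise convexity of the positivity and finiteness constraints, and nonemptiness by choosing a finite volume form on the fiber space $F$ (via Corollary \ref{cor:forms_as_finite_measures}), pulling it back over a trivializing cover, and gluing with a partition of unity composed with $\pi$. The paper states the argument much more tersely; your version makes the fiberwise positivity and finiteness checks, and the role of the positive-orientation hypothesis, fully explicit.
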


Given a point on the base space, $q \in Q$, the elements of $\Upsilon \! \left( \pi : Z \rightarrow Q \right)$
naturally define smooth measures that concentrate on the fibers.

\begin{lemma}
\label{lem:fiber_forms_as_disintegrating_kernels}

Any element of $\Upsilon \! \left( \pi : Z \rightarrow Q \right)$ defines a smooth measure,
\begin{align*}
\nu :
& \; Q \times \mathcal{B} \! \left( Z \right) \rightarrow \mathbb{R}^{+}
\\
& \; q, A \mapsto \int_{\iota_{q} \left( A \, \cap \, Z_{q} \right) } \iota_{q}^{*} \upsilon,
\end{align*}
concentrating on the fiber $Z_{q}$, 
\begin{equation*}
\nu \! \left( q, A \right) = 0, \,
\forall A \in \mathcal{B} \! \left( Z \right) | A \cap Z_{q} = 0.
\end{equation*}

\end{lemma}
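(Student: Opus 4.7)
The plan is to verify each structural requirement of a smooth measure concentrating on the fiber, in turn. First I would address well-definedness: an equivalence class in $\Upsilon(\pi:Z\to Q)$ has representatives differing by a horizontal $k$-form, and any horizontal form pulls back to zero under $\iota_q$ because the tangent space $T_z Z_q$ at any $z \in Z_q$ consists precisely of vertical vectors (those in $\ker \pi_*$), on which horizontal forms vanish by definition. Hence $\iota_q^* \upsilon$ is a well-defined $k$-form on the $k$-dimensional manifold $Z_q$, independent of the representative chosen within the equivalence class.

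Next I would leverage Lemma \ref{lem:forms_as_measures} on the fiber itself. Because $Z_q$ is a smooth $k$-manifold inheriting an orientation from the positively-oriented bundle structure, and because $\iota_q^* \upsilon > 0$ by hypothesis on $\Upsilon$, the pullback $\iota_q^* \upsilon$ is a positive top-rank form on $Z_q$ and therefore a $\sigma$-finite smooth measure on $Z_q$ in the sense of Lemma \ref{lem:forms_as_measures}. The additional hypothesis $\int_{Z_q} \iota_q^* \upsilon < \infty$ promotes this to a finite measure on the fiber.

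I would then transfer this fiber measure to $Z$ through the inclusion $\iota_q$. For any $A \in \mathcal{B}(Z)$, continuity of $\iota_q$ ensures that $\iota_q^{-1}(A) = A \cap Z_q$ is Borel in $Z_q$, so the assignment
\begin{equation*}
A \mapsto \int_{\iota_q(A \cap Z_q)} \iota_q^* \upsilon
\end{equation*}
inherits non-negativity, null empty set, and countable additivity directly from the underlying fiber measure. The concentration property is then immediate: if $A \cap Z_q = \emptyset$, the domain of integration is empty and the integral vanishes, giving $\nu(q, A) = 0$ as required.

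The main obstacle is purely conceptual rather than computational, and it is precisely the first step: confirming that the equivalence relation defining $\Upsilon$ passes cleanly to pullbacks. Once one recognizes that the quotient by $\Omega^k_H(\pi : Z \to Q)$ was designed exactly so that horizontal discrepancies are invisible under $\iota_q^*$, the remaining content reduces to invoking Lemma \ref{lem:forms_as_measures} fiberwise and checking that Borel structure and orientation pass correctly from $Z$ to $Z_q$ through the inclusion, both of which follow from the standing fiber bundle assumptions.
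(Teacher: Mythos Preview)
Your proof is correct and follows essentially the same approach as the paper's, though with considerably more detail: your explicit verification of well-definedness under the quotient by horizontal forms and the fiberwise invocation of Lemma~\ref{lem:forms_as_measures} are points the paper's brief argument leaves implicit. The paper's own proof is essentially a two-line sketch noting that concentration is by construction and that smoothness and finiteness transfer through the immersion $\iota_q$.
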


Finally, any $\upsilon \in \Upsilon \! \left( \pi : Z \rightarrow Q \right)$ satisfies an equivalent of the product rule.

\begin{lemma}
\label{lem:measure_lift}

Any element $\upsilon \in \Upsilon \! \left( \pi : Z \rightarrow Q \right)$ lifts any smooth measure
on the base space, $\mu_{Q} \in \mathcal{M} \! \left( Q \right)$, to a smooth measure on the total 
space by
\begin{equation*}
\mu_{Z} = \pi^{*} \mu_{Q} \wedge \upsilon \in \mathcal{M} \! \left( Z \right).
\end{equation*}

\end{lemma}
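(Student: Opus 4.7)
The plan is to verify that $\pi^{*} \mu_{Q} \wedge \upsilon$ is a well-defined, smooth, strictly positive top-rank form on $Z$, after which Lemma \ref{lem:forms_as_measures} immediately certifies membership in $\mathcal{M} \! \left( Z \right)$. Since $\pi^{*} \mu_{Q}$ is an $n$-form and $\upsilon$ is a $k$-form, the wedge is an $(n+k)$-form on an $(n+k)$-dimensional manifold, so it has the correct rank by construction. Smoothness follows immediately from smoothness of the two factors.

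First I would check well-definedness on the equivalence class $\Upsilon \! \left( \pi : Z \rightarrow Q \right)$. If $\upsilon_{1}, \upsilon_{2}$ are two representatives then $\eta \equiv \upsilon_{1} - \upsilon_{2}$ is horizontal. Evaluated on any basis of $T_{z} Z$ composed of horizontal lifts $\tilde{X}_{1}, \ldots, \tilde{X}_{n}$ and vertical vectors $Y_{1}, \ldots, Y_{k}$, the wedge $\pi^{*} \mu_{Q} \wedge \eta$ expands as a sum over $(n,k)$-shuffles. Any shuffle that feeds some $Y_{j}$ into $\pi^{*} \mu_{Q}$ vanishes because $\pi_{*} Y_{j} = 0$, forcing all $k$ vertical vectors into $\eta$; but then $\eta$ annihilates the tuple by horizontality. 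Hence $\pi^{*} \mu_{Q} \wedge \eta = 0$ and the definition descends to the quotient.

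Next I would verify pointwise positivity. Fix $z \in Z$ and pick a positively oriented basis $(X_{1}, \ldots, X_{n})$ of $T_{\pi \left( z \right)} Q$ together with a positively oriented basis $(Y_{1}, \ldots, Y_{k})$ of the vertical subspace $T_{z} Z_{\pi \left( z \right)}$. Lifting the former to horizontal vectors $\tilde{X}_{1}, \ldots, \tilde{X}_{n}$ at $z$ produces a basis of $T_{z} Z$ that is positively oriented because the fiber bundle is positively oriented. The same shuffle argument collapses the wedge to the single nonvanishing term
\begin{equation*}
  \left( \pi^{*} \mu_{Q} \wedge \upsilon \right) \! \left( \tilde{X}_{1}, \ldots, \tilde{X}_{n}, Y_{1}, \ldots, Y_{k} \right)
  = \mu_{Q} \! \left( X_{1}, \ldots, X_{n} \right) \cdot \left( \iota_{\pi \left( z \right)}^{*} \upsilon \right) \! \left( Y_{1}, \ldots, Y_{k} \right),
\end{equation*}
which is a product of two strictly positive numbers by the hypotheses $\mu_{Q} \in \mathcal{M} \! \left( Q \right)$ and $\iota_{q}^{*} \upsilon > 0$.

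The only delicate step is the shuffle bookkeeping, which I expect to be the main obstacle only in the sense of being pedantic: one has to observe that $\iota_{Y} \pi^{*} \mu_{Q} = \pi^{*} \iota_{\pi_{*} Y} \mu_{Q} = 0$ for every vertical $Y$, and this immediately reduces any wedge evaluation to the block-diagonal form above. With that in hand, combining smoothness, positivity, and top-rank, Lemma \ref{lem:forms_as_measures} completes the argument by identifying $\mu_{Z}$ with an element of $\mathcal{M} \! \left( Z \right)$.
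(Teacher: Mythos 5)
Your argument is correct and follows essentially the same route as the paper's: contract the candidate top-form against an adapted frame consisting of horizontal lifts $\tilde{X}_i$ and vertical vectors $Y_j$, observe that $\iota_{Y}\pi^{*}\mu_{Q}=0$ for vertical $Y$ collapses the wedge to the single product $\mu_{Q}(X_1,\ldots,X_n)\cdot\upsilon(Y_1,\ldots,Y_k)>0$, and use the same cancellation to show the result is unchanged by adding a horizontal $k$-form. Your version is a bit more explicit about the shuffle bookkeeping, well-definedness on the quotient, and the final appeal to Lemma \ref{lem:forms_as_measures}, but the substance is identical.
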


\noindent Note the resemblance to the typical measure-theoretic result,
\begin{equation*}
\mu_{Z} ( \dd z ) = \mu_{Q} ( \dd q ) \, \upsilon ( q, \dd z ).
\end{equation*}

Consequently, the elements of $\Upsilon \! \left( \pi : Z \rightarrow Q \right)$ define 
smooth disintegrations of any smooth measure on the total space.

\begin{theorem}
\label{thm:existence_of_disintegrations}

A positively oriented, smooth fiber bundle admits a smooth disintegration of any smooth 
measure on the total space, $\mu_{Z} \in \mathcal{M} \! \left( Z \right)$, with respect to the 
projection operator and any smooth measure $\mu_{Q} \in \mathcal{M} \! \left( Q \right)$.

\end{theorem}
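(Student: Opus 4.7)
The strategy is to assemble the three preceding lemmas together with the observation that any two strictly positive top-rank forms on $Z$ are pointwise scalar multiples. First I would use Lemma \ref{lem:existence_of_fiber_volume_forms} to pick a reference fiber form $\upsilon_{0} \in \Upsilon(\pi : Z \to Q)$, and Lemma \ref{lem:measure_lift} to produce from it a reference smooth measure on the total space, $\tilde{\mu}_{Z} \equiv \pi^{*}\mu_{Q} \wedge \upsilon_{0} \in \mathcal{M}(Z)$.

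Since both $\mu_{Z}$ and $\tilde{\mu}_{Z}$ are strictly positive $(n+k)$-forms on the oriented manifold $Z$, there is a smooth positive density $h \in C^{\infty}(Z, \mathbb{R}^{+})$ with $\mu_{Z} = h \cdot \tilde{\mu}_{Z}$. Because scalar functions pass through the wedge product, this rearranges to the factorization $\mu_{Z} = \pi^{*}\mu_{Q} \wedge \upsilon$ with $\upsilon \equiv h \upsilon_{0}$, and the restriction to each fiber, $\iota_{q}^{*} \upsilon = (h \circ \iota_{q}) \, \iota_{q}^{*} \upsilon_{0}$, is a strictly positive volume form on $Z_{q}$. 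Lemma \ref{lem:fiber_forms_as_disintegrating_kernels} then delivers the candidate kernel $\nu : Q \times \mathcal{B}(Z) \to \mathbb{R}^{+}$ concentrating on the fibers, which immediately gives axiom (i) of Definition \ref{def:smooth_disintegration}. Axiom (ii) is inherited from the smoothness of $h \upsilon_{0}$ in any trivializing chart, and axiom (iii) is the iterated-integral factorization implicit in $\mu_{Z} = \pi^{*}\mu_{Q} \wedge \upsilon$, which one verifies in a trivializing cover $\{\pi^{-1}(\mathcal{U}_{\alpha})\}$ by applying the classical Fubini theorem on each product $\mathcal{U}_{\alpha} \times F$ and summing with a subordinate partition of unity.

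The delicate step is ensuring $\upsilon \in \Upsilon$, namely the finiteness condition $\int_{Z_{q}} \iota_{q}^{*} \upsilon < \infty$, since for an arbitrary choice of $h$ and $\upsilon_{0}$ there is no reason this should hold at every $q$. The resolution is that Definition \ref{def:smooth_disintegration}(i) only demands the disintegration property for $\mu_{Q}$-almost all $q$, and $\mu_{Q}$-a.e.\ finiteness is automatic: because $\mu_{Z}$ is $\sigma$-finite, one may invoke paracompactness of $Z$ to write it as a countable, nested, relatively compact exhaustion $\{K_{j}\}$ on which $\mu_{Z}$ has finite mass (cf.\ the argument behind Corollary \ref{cor:forms_as_finite_measures}), apply Tonelli to the already-established factorization $\mu_{Z}(K_{j}) = \int_{Q} \int_{Z_{q} \cap K_{j}} \iota_{q}^{*} \upsilon \, \mu_{Q}(\mathrm{d}q)$ to conclude that the inner integral is $\mu_{Q}$-a.e.\ finite for each $j$, and take the monotone limit. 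I expect this last bookkeeping step, rather than the geometric content, to be the main obstacle, but it is standard once the algebraic factorization $\mu_{Z} = \pi^{*}\mu_{Q} \wedge \upsilon$ is in hand.
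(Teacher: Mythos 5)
Your route is the same one the paper takes: pick a reference $\upsilon_{0}$ from Lemma \ref{lem:existence_of_fiber_volume_forms}, use the fact that any two smooth volume forms on $Z$ differ by a smooth positive scalar, absorb that scalar into the fiber form to factor $\mu_{Z} = \pi^{*}\mu_{Q} \wedge \upsilon$, and then check the three axioms of Definition \ref{def:smooth_disintegration} in a trivializing cover with a subordinate partition of unity. The paper phrases the scalar step as ``the space of smooth measures is one-dimensional'' while you phrase it as ``$\mu_{Z}$ and $\tilde{\mu}_{Z}$ differ by a smooth positive density $h$,'' but these are the same observation. So the geometric content of your argument matches the paper's.

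You correctly flag the finiteness condition $\int_{Z_{q}} \iota_{q}^{*}\upsilon < \infty$ as the delicate point, and this is a genuine subtlety: the paper sidesteps it by asserting that the density $g$ (your $h$) is ``bounded,'' which is not justified and can in fact fail. Take $Q=\mathbb{R}$, $Z = \mathbb{R}^{2}$, $\mu_{Q} = \dd q$, $\upsilon_{0} = e^{-p^{2}}\dd p$, $\mu_{Z} = e^{-q^{2}}\dd q \wedge \dd p$; then $h = e^{-q^{2}+p^{2}}$ is unbounded and $\int_{Z_{q}} \iota_{q}^{*}(h\upsilon_{0}) = \infty$ for \emph{every} $q$. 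However, your proposed repair does not close the gap. Tonelli on each $K_{j}$ shows only that the \emph{truncated} integral $\int_{Z_{q}\cap K_{j}} \iota_{q}^{*}\upsilon$ is $\mu_{Q}$-a.e.\ finite for each fixed $j$; the pointwise monotone limit $\sup_{j}\int_{Z_{q}\cap K_{j}}\iota_{q}^{*}\upsilon = \int_{Z_{q}}\iota_{q}^{*}\upsilon$ can still be $+\infty$ for every $q$, exactly as in the example above. What your argument actually establishes is that $\iota_{q}^{*}\upsilon$ is $\sigma$-finite on a.e.\ fiber, not finite. The cleaner fix — hinted at in the paper's own remark that the finiteness condition ``is not strictly necessary'' — is to note that Definition \ref{def:smooth_disintegration}(i) only requires $\nu(s,\cdot)$ to be a smooth, i.e.\ $\sigma$-finite, measure concentrating on the fiber, and that $\sigma$-finiteness of $\iota_{q}^{*}\upsilon$ is automatic from it being a strictly positive smooth $k$-form on $Z_{q}$ by Lemma \ref{lem:forms_as_measures}. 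One should therefore drop the requirement $\upsilon\in\Upsilon$ and work directly with any $\upsilon$ satisfying $\iota_{q}^{*}\upsilon>0$, re-deriving the conclusion of Lemma \ref{lem:fiber_forms_as_disintegrating_kernels} in that slightly wider generality (the proof goes through unchanged).
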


\begin{proof}

From Lemma \ref{lem:measure_lift} we know that for any 
$\upsilon' \in \Upsilon \! \left( \pi : Z \rightarrow Q \right)$ the exterior product 
$\pi^{*} \mu_{Q} \wedge \upsilon'$ is a smooth measure on the total space, and 
because the space of smooth measures is one-dimensional we must have
\begin{equation*}
\mu_{Z} = g \, \pi^{*} \mu_{Q}  \wedge \upsilon',
\end{equation*}
for some bounded, positive function $g : Z \rightarrow \mathbb{R}^{+}$.  Because 
$g$ is everywhere positive it can be absorbed into $\upsilon$ to define a new, unique 
element $\upsilon \in \Upsilon \! \left( Z \right)$ such that 
$\mu_{Z} = \pi^{*} \mu_{Q} \wedge \upsilon$.  With Lemma 
\ref{lem:existence_of_fiber_volume_forms} showing that 
$\Upsilon \! \left( \pi : Z \rightarrow Q \right)$ is non-empty, such an $\upsilon$ exists 
for any positively-oriented, smooth fiber bundle.

From Lemma \ref{lem:fiber_forms_as_disintegrating_kernels}, this $\upsilon$ defines a 
smooth kernel and, hence, satisfies Definition \ref{def:smooth_disintegration}i.

If $\lambda_{Q}$ is any smooth measure on $Q$, not necessarily equal to $\mu_{Q}$, then
for any smooth, positive function $f \in L^{1} \! \left( Z, \mu_{Z} \right)$ we have
\begin{align*}
\int_{Q} F \! \left( q \right) \lambda_{Q}
&=
\int_{Q} \left[ \int_{Z_{q}} \iota_{q}^{*} \! \left( f \, \upsilon \right) \right] \lambda_{Q},
\end{align*}
or, employing a trivializing cover,
\begin{align*}
\int_{Q} F \! \left( q \right) \lambda_{Q}
&=
\sum_{\alpha} \int_{\mathcal{U}_{\alpha}} \rho_{\alpha} \left[ \int_{Z_{q}} \iota_{q}^{*} \! \left( f_{\alpha} \, \upsilon \right) \right] \lambda_{Q}
\\
&=
\sum_{\alpha} \int_{\mathcal{U}_{\alpha}} \rho_{\alpha} \left[ \int_{F} \iota_{q}^{*} \! \left( f_{\alpha} \, \upsilon \right) \right] \lambda_{Q}
\\
&=
\sum_{\alpha} \int_{\mathcal{U}_{\alpha} \times F} \rho_{\alpha} \, f_{\alpha} \, \pi^{*} \lambda_{Q} \wedge \upsilon.
\end{align*}
Once again noting that the space of smooth measures on $Z$ is one-dimensional, we must have for some positive, bounded
function $g : Z \rightarrow \mathbb{R}^{+}$,
\begin{align*}
\int_{Q} F \! \left( q \right) \lambda_{Q}
&=
\sum_{\alpha} \int_{\mathcal{U}_{\alpha} \times F} \rho_{\alpha} \, f_{\alpha} \, g \, \mu_{Z}
\\
&=
\int_{Z} f \, g \, \mu_{Z}.
\end{align*}
Because $g$ is bounded the integral is finite given the $\mu_{Z}$ integrability of $f$, hence $\upsilon$ satisfies Definition \ref{def:smooth_disintegration}ii.

Similarly, for any smooth, positive function $f \in L^{1} \! \left( Z, \mu_{Z} \right)$,
\begin{align*}
\int_{Z} f  \, \mu_{Z}
&= 
\int_{Z} f  \, \pi^{*} \mu_{Q} \wedge \upsilon
\\
&=
\sum_{\alpha} \int_{\mathcal{U}_{\alpha} \times F }
\rho_{\alpha} \, f_{\alpha} \, \pi^{*} \mu_{Q} \wedge \upsilon
\\
&=
\sum_{\alpha} \int_{\mathcal{U}_{\alpha}} \rho_{\alpha}
\left[ \int_{F} \iota^{*}_{q} \! \left( f_{\alpha} \upsilon \right) \right]
\mu_{Q}
\\
&=
\int_{Q} \left[ \int_{Z_{q}} \iota^{*}_{q} \! \left( f \,  \upsilon \right) \right]
\mu_{Q}
\\
&=
\int_{Q} \left[ \int_{Z_{q}} f \, \nu \! \left( q, \cdot \right) \right]
\mu_{Q}.
\end{align*}
Because of the finiteness of $\upsilon$ the integral is well-defined and the kernel satisfies Definition 
\ref{def:smooth_disintegration}iii.

Hence for any smooth fiber bundle and smooth measures $\mu_{Z}$ and $\mu_{Q}$, 
there exists an $\upsilon \in \Upsilon \! \left( \pi : Z \rightarrow Q \right)$ that induces a 
smooth disintegration of $\mu_{Z}$ with respect to the projection operator and $\mu_{Q}$.

\end{proof}

Ultimately we're not interested in smooth disintegrations but rather regular conditional probability measures.
Fortunately, the elements of $\Upsilon \! \left( \pi : Z \rightarrow Q \right)$ are only a normalization away
from defining the desired probability measures.  To see this, first note that we can immediately define the new 
space $\Xi \! \left( \pi : Z \rightarrow Q \right)$ with elements $\xi \in \Xi \! \left( \pi : Z \rightarrow Q \right)$ satisfying
\begin{align*}
\iota_{q}^{*} \xi &> 0,
\\
\int_{Z_{q}} \iota_{q}^{*} \xi &= 1.
\end{align*}
The elements of $\Xi \! \left( \pi : Z \rightarrow Q \right)$ relate a smooth measure on the total space 
to it's pushforward measure with respect to the projection, provided it exists, which is exactly the 
property needed for the smooth disintegrations to be regular conditional probability measures.

\begin{lemma}
\label{lem:pushforward_measures}

Let $\mu_{Z}$ be a smooth measure on the total space of a positively-oriented, smooth fiber bundle 
with $\mu_{Q}$ the corresponding pushforward measure with respect to the projection operator, 
$\mu_{Q} = \pi_{*} \mu_{Z}$.  If  $\mu_{Q}$ is a smooth measure then 
$\mu_{Z} = \pi^{*} \mu_{Q} \wedge \xi$ for a unique element of $\xi \in \Xi \! \left( \pi : Z \rightarrow Q \right)$.

\end{lemma}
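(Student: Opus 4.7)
The plan is to invoke Theorem \ref{thm:existence_of_disintegrations} to obtain a representative $\upsilon \in \Upsilon \! \left( \pi : Z \rightarrow Q \right)$ with $\mu_{Z} = \pi^{*} \mu_{Q} \wedge \upsilon$, and then use the additional hypothesis that $\mu_{Q}$ is precisely the pushforward of $\mu_{Z}$ to force the fiber integrals of $\upsilon$ to equal one pointwise. This will promote $\upsilon$ into $\Xi \! \left( \pi : Z \rightarrow Q \right)$, and the uniqueness of $\xi$ will follow directly from the uniqueness of $\upsilon$ already established in Theorem \ref{thm:existence_of_disintegrations}.

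The central computation is to compare the pushforward characterization of $\mu_{Q}$ with the disintegration formula. For any $A \in \mathcal{B} \! \left( Q \right)$ the function $\mathbb{I}_{A} \circ \pi$ is a positive measurable function on $Z$, so applying Definition \ref{def:smooth_disintegration}iii together with $\mu_{Q} = \pi_{*} \mu_{Z}$ yields
\begin{equation*}
\mu_{Q} \! \left( A \right) = \mu_{Z} \! \left( \pi^{-1} \! \left( A \right) \right) = \int_{A} h \! \left( q \right) \, \mu_{Q} \! \left( \dd q \right),
\end{equation*}
where $h \! \left( q \right) \equiv \int_{Z_{q}} \iota_{q}^{*} \upsilon$ is the fiber integral of our representative. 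Since this identity holds for every Borel set $A$, the function $h$ must equal one $\mu_{Q}$-almost everywhere on $Q$.

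To upgrade this to an everywhere statement I would observe that $h$ is smooth: within any trivializing chart $\pi^{-1} \! \left( \mathcal{U}_{\alpha} \right) \approx \mathcal{U}_{\alpha} \times F$, the form $\upsilon$ restricts to a smoothly varying top-form on $F$ with finite fiber integral, so $h$ depends smoothly on $q$. Because a smooth measure is everywhere represented by a strictly positive density in each chart, a continuous function which equals one $\mu_{Q}$-almost everywhere must equal one everywhere, giving $h \equiv 1$ on $Q$. Thus $\upsilon \in \Xi \! \left( \pi : Z \rightarrow Q \right)$ and we set $\xi = \upsilon$. For uniqueness, any alternative $\xi' \in \Xi \! \left( \pi : Z \rightarrow Q \right)$ with $\mu_{Z} = \pi^{*} \mu_{Q} \wedge \xi'$ is in particular an element of $\Upsilon \! \left( \pi : Z \rightarrow Q \right)$ disintegrating $\mu_{Z}$ against $\mu_{Q}$, so the uniqueness clause of Theorem \ref{thm:existence_of_disintegrations} forces $\xi' = \xi$ as equivalence classes. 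The main obstacle is the almost-everywhere to everywhere promotion, which is resolved by the smoothness of $\upsilon$ and the strict positivity of smooth densities; everything else is bookkeeping on top of the disintegration theorem.
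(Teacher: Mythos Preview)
Your proof is correct and takes essentially the same approach as the paper: both obtain the unique $\upsilon$ from Theorem \ref{thm:existence_of_disintegrations} and then use the pushforward identity $\mu_{Q}(A)=\mu_{Z}(\pi^{-1}(A))$ to force the fiber integrals $\int_{Z_{q}}\iota_{q}^{*}\upsilon$ to equal one, concluding $\upsilon\in\Xi$. The only cosmetic differences are that the paper carries out the integral directly via a trivializing cover rather than invoking Definition \ref{def:smooth_disintegration}iii (which, as stated, is for smooth $f$, so your use of an indicator would strictly require a routine approximation), and the paper is less explicit than you about the $\mu_{Q}$-almost-everywhere to everywhere promotion.
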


Consequently the elements of $\Xi \! \left( \pi : Z \rightarrow Q \right)$ also define regular conditional
probability measures.

\begin{theorem}
\label{thm:existence_of_conditional_measures}

Any smooth measure on the total space of a positively-oriented, smooth fiber bundle admits a 
regular conditional probability measure with respect to the projection operator provided
that the pushforward measure with respect to the projection operator is smooth.

\end{theorem}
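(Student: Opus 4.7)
The plan is to reduce the theorem almost entirely to the two preceding results, Lemma~\ref{lem:pushforward_measures} and Theorem~\ref{thm:existence_of_disintegrations}, by exploiting the normalized space $\Xi(\pi : Z \rightarrow Q)$. Since we are assuming that $\mu_{Q} = \pi_{*} \mu_{Z}$ is itself a smooth measure on $Q$, Lemma~\ref{lem:pushforward_measures} immediately produces a unique element $\xi \in \Xi(\pi : Z \rightarrow Q)$ such that
\begin{equation*}
  \mu_{Z} = \pi^{*} \mu_{Q} \wedge \xi.
\end{equation*}
This $\xi$ is precisely the candidate regular conditional probability measure, and the rest of the proof amounts to verifying that the three clauses of Definition~\ref{def:disintegration} hold for it and that the resulting kernel is actually normalized.

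First I would note that $\Xi(\pi : Z \rightarrow Q) \subset \Upsilon(\pi : Z \rightarrow Q)$, so all the structural results already proved for fiber volume forms transfer without modification. In particular, Lemma~\ref{lem:fiber_forms_as_disintegrating_kernels} shows that $\xi$ induces a smooth measure on each fiber concentrating on $Z_{q}$, giving Definition~\ref{def:disintegration}i. The measurability and integrability clauses, Definition~\ref{def:disintegration}ii and~iii, follow by repeating the trivializing-cover and partition-of-unity argument used in the proof of Theorem~\ref{thm:existence_of_disintegrations}, but now with the specific choice $\lambda_{Q} = \mu_{Q}$; the bounded correction factor $g$ from that proof is forced to equal one by the uniqueness statement in Lemma~\ref{lem:pushforward_measures}, so the disintegration identity holds on the nose.

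The upgrade from a smooth disintegration to a \emph{regular conditional probability measure} then hinges on two observations. Because $\xi \in \Xi(\pi : Z \rightarrow Q)$ satisfies the normalization $\int_{Z_{q}} \iota_{q}^{*} \xi = 1$ for every $q$, the induced kernel $\nu(q, \cdot)$ is a probability measure on $Z$ rather than merely a $\sigma$-finite one. Second, because the reference measure on the base is chosen to be the actual pushforward $\mu_{Q} = \pi_{*} \mu_{Z}$ (which we assumed exists and is smooth, thereby sidestepping the marginalization paradoxes flagged after Definition~\ref{def:disintegration}), the disintegration identity coincides with the defining property of a regular conditional probability measure.

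The main obstacle in this program is really the normalization step encapsulated in Lemma~\ref{lem:pushforward_measures}: one has to verify that the factor separating $\mu_{Z}$ from $\pi^{*} \mu_{Q} \wedge \upsilon$ (for an arbitrary $\upsilon \in \Upsilon$) is exactly the constant one, so that the candidate $\xi$ really has unit fiber mass. Once that uniqueness-and-normalization lemma is in hand the remainder of the argument is bookkeeping: invoking the structure theorem for smooth measures on a one-dimensional space of top forms, and then rerunning the computation at the end of the proof of Theorem~\ref{thm:existence_of_disintegrations} with $\lambda_{Q} = \mu_{Q}$ to confirm the integrability and Fubini-type identity that together promote $\xi$ from a fiberwise volume form to a genuine regular conditional probability measure.
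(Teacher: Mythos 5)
Your proposal is correct and follows essentially the same route as the paper: invoke Lemma~\ref{lem:pushforward_measures} to obtain the unique normalized $\xi \in \Xi(\pi : Z \rightarrow Q)$ with $\mu_{Z} = \pi^{*}\mu_{Q} \wedge \xi$, then apply Theorem~\ref{thm:existence_of_disintegrations} to conclude that $\xi$ yields a smooth disintegration, which is promoted to a regular conditional probability measure because the reference measure is the pushforward and $\xi$ has unit fiber mass. The paper's own proof is terser (it simply cites the two preceding results and draws the conclusion), whereas you unpack the verification of the disintegration clauses in more detail, but the argument is structurally identical.
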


\begin{proof}

From Lemma \ref{lem:pushforward_measures} we know that for any smooth measure $\mu_{Z}$ 
there exists a $\xi \in \Xi \! \left( E \right)$ such that $\mu_{Z} = \pi^{*} \mu_{Q} \wedge \xi$ so long
as the pushforward measure, $\mu_{Q}$, is smooth.  Applying Theorem \ref{thm:existence_of_disintegrations}, 
any choice of $\xi$ then defines a smooth disintegration of $\mu_{Z}$ with respect to the projection operator
and the pushforward measure and hence the disintegration is a regular conditional probability measure.

\end{proof}

Although we have shown that elements of $\Xi \! \left( \pi : Z \rightarrow Q \right)$ disintegrate 
measures on fiber bundles, we have not yet explicitly constructed them.  Fortunately the fiber 
bundle geometry proves productive here as well.

%%%%%%%%%%%%%%%%%%%%%%%%%%%%%%%%%%%%%%%%

\subsubsection{Constructing Smooth Measures From Smooth Disintegrations}

The geometric construction of regular conditional probability measures is particularly valuable
because it provides an explicit construction for lifting measures on the base space to measures
on the total space as well as marginalizing measures on the total space down to the base space. 

As shown above, the selection of any element of $\Xi \! \left( \pi : Z \rightarrow Q \right)$ defines 
a lift of smooth measures on the base space to smooth measures on the total space.

\begin{corollary}
\label{cor:joint_from_marginal}

If $\mu_{Q}$ is a smooth measure on the base space of a positively-oriented, smooth fiber 
bundle then for any $\xi \in \Xi \! \left( \pi : Z \rightarrow Q \right)$, 
$\mu_{Z} = \pi^{*} \mu_{Q} \wedge \xi$ is a smooth measure on the total space whose 
pushforward is $\mu_{Q}$.

\end{corollary}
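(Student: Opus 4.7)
The plan is to assemble the two claims of the corollary from results already established earlier in the section. The first claim, that $\mu_{Z} = \pi^{*} \mu_{Q} \wedge \xi$ is a smooth measure on the total space, is essentially immediate: since $\Xi(\pi : Z \rightarrow Q) \subset \Upsilon(\pi : Z \rightarrow Q)$, we can invoke Lemma \ref{lem:measure_lift} directly, which tells us that wedging any fiber volume form of this type against the pullback of a smooth measure on the base produces an element of $\mathcal{M}(Z)$.

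The second claim, that $\pi_{*} \mu_{Z} = \mu_{Q}$, is where the normalization condition defining $\Xi$ does the real work. My plan is to unpack the pushforward definition and then apply Theorem \ref{thm:existence_of_disintegrations}. For any $B \in \mathcal{B}(Q)$,
\begin{equation*}
(\pi_{*} \mu_{Z})(B) = \mu_{Z}(\pi^{-1}(B)) = \int_{\pi^{-1}(B)} \pi^{*} \mu_{Q} \wedge \xi.
\end{equation*}
Applying Theorem \ref{thm:existence_of_disintegrations} to $f = \mathbb{I}_{\pi^{-1}(B)}$, which is $\pi$-invariant and thus descends to $\mathbb{I}_{B}$ on the base, and recognizing that the disintegration induced by $\xi$ via Lemma \ref{lem:fiber_forms_as_disintegrating_kernels} is exactly the one appearing in the disintegration formula, we obtain
\begin{equation*}
\int_{\pi^{-1}(B)} \pi^{*} \mu_{Q} \wedge \xi = \int_{B} \left[ \int_{Z_{q}} \iota_{q}^{*} \xi \right] \mu_{Q}.
\end{equation*}

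By the defining property of $\Xi(\pi : Z \rightarrow Q)$, the inner fiber integral equals $1$ for every $q \in Q$, so the right-hand side collapses to $\int_{B} \mu_{Q} = \mu_{Q}(B)$, completing the proof. I do not anticipate any real obstacle here; the corollary is essentially a restatement of Lemma \ref{lem:measure_lift} combined with the normalization that distinguishes $\Xi$ from $\Upsilon$, and the only thing to be mildly careful about is invoking the disintegration identity of Theorem \ref{thm:existence_of_disintegrations} on an indicator rather than a strictly positive smooth function, which is justified by a standard approximation argument or by noting that the identity in part iii holds for all measurable $f \in L^{1}(Z, \mu_{Z})$.
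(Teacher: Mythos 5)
Your proof is correct and rests on the same computational core as the paper's. The first claim is handled identically via Lemma~\ref{lem:measure_lift}. For the pushforward claim, the paper simply cites Lemma~\ref{lem:pushforward_measures}, whose proof contains the identity $\int_{\pi^{-1}(B)} \pi^{*}\mu_{Q}\wedge\upsilon = \int_{B}\left[\int_{Z_{q}}\iota_{q}^{*}\upsilon\right]\mu_{Q}$, which collapses to $\mu_{Q}(B)$ once $\upsilon=\xi\in\Xi$. You instead invoke Theorem~\ref{thm:existence_of_disintegrations} with $f=\mathbb{I}_{\pi^{-1}(B)}$ and then apply the normalization of $\xi$; this is the same underlying calculation reached through a different named result in the chain. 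Your route is arguably slightly more transparent, since the statement of Lemma~\ref{lem:pushforward_measures} is phrased in the converse direction (given $\mu_{Z}$, recover a unique $\xi$), so the paper's ``immediately implies'' tacitly leans on that lemma's proof rather than its statement. You are also right to flag that Theorem~\ref{thm:existence_of_disintegrations} is formally stated for smooth positive $f$; extending to indicators by approximation or monotone-class arguments is standard but worth the sentence you give it.
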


\begin{proof}

$\mu_{Z} = \pi^{*} \mu_{Q} \wedge \xi$ is a smooth measure on the total space by Lemma 
\ref{lem:measure_lift}, and Lemma \ref{lem:pushforward_measures} immediately implies that its 
pushforward is $\mu_{Q}$.

\end{proof}

Even before constructing the pushforward measure of a measure on the total space, we can construct 
its regular conditional probability measure with respect to the projection.

\begin{lemma}
\label{lem:canonical_disintegration}

Let $\mu_{Z}$ be a smooth measure on the total space of a positively-oriented, smooth fiber bundle
whose pushforward measure with respect to the projection operator is smooth, with $U \subset Q$
any neighborhood of the base space that supports a local frame.  Within $\pi^{-1} \! \left( U \right)$, 
the element $\xi \in \Xi \! \left( \pi : Z \rightarrow Q \right)$
\begin{equation*}
\xi = 
\frac{ \left( \tilde{X}_{1}, \ldots, \tilde{X}_{n} \right) \, \lrcorner \, \mu_{Z} }
{ \mu_{Q} \! \left( X_{1}, \ldots, X_{n} \right) },
\end{equation*}
defines the regular conditional probability measure of $\mu_{Z}$ with respect to the projection operator,
where $\left( X_{1}, \ldots, X_{n} \right)$ is any positively-oriented frame in $U$ satisfying
\begin{equation*}
\mu_{Q} \! \left( X_{1}, \ldots, X_{n} \right) < \infty, \, \forall q \in U
\end{equation*}
and $\left( \tilde{X}_{1}, \ldots, \tilde{X}_{n} \right)$ is any corresponding horizontal lift.

\end{lemma}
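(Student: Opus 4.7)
The plan is to verify three facts in sequence: that the candidate $\xi$ is a well-defined positive element of $\Xi(\pi : Z \rightarrow Q)$, that it satisfies the factorization $\mu_{Z} = \pi^{*} \mu_{Q} \wedge \xi$ on $\pi^{-1}(U)$, and then to invoke the uniqueness in Lemma \ref{lem:pushforward_measures} to conclude that $\xi$ is the regular conditional probability measure guaranteed by Theorem \ref{thm:existence_of_conditional_measures}.

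First I would check well-definedness. The denominator $\mu_{Q}(X_{1}, \ldots, X_{n})$ is strictly positive on $U$ because $(X_{1}, \ldots, X_{n})$ is a positively-oriented frame and $\mu_{Q}$ is a positive volume form. For the numerator, the $k$-form $(\tilde{X}_{1}, \ldots, \tilde{X}_{n}) \, \lrcorner \, \mu_{Z}$ depends on the choice of horizontal lift only modulo horizontal $k$-forms, which is precisely the equivalence defining $\Xi(\pi : Z \rightarrow Q)$. Concretely, if $\tilde{X}_{i}' = \tilde{X}_{i} + Y_{i}$ for vertical $Y_{i}$, then by multilinearity the difference of contractions decomposes into terms each containing at least one vertical slot among the $\tilde{X}$'s; pulling back by $\iota_{q}^{*}$ and evaluating on $k$ further vertical tangent vectors to $Z_{q}$, the extra vertical insertion forces the result to vanish since $\mu_{Z}$ has only $n+k$ slots. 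Thus $\iota_{q}^{*} \xi$ is intrinsic to the base-space frame. Positivity of $\iota_{q}^{*} \xi$ then follows from the orientation-preserving property of $\pi$: contracting $\mu_{Z}$ against a frame for $TZ$ formed by horizontal lifts of a positive frame on $TQ$ followed by a positive frame on $TZ_{q}$ yields a positive number.

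Next I would establish the key factorization $\pi^{*} \mu_{Q} \wedge \xi = \mu_{Z}$. Since both sides are top-rank forms on the $(n+k)$-dimensional manifold $\pi^{-1}(U)$, it suffices to evaluate them on a single positively-oriented frame $(\tilde{X}_{1}, \ldots, \tilde{X}_{n}, Y_{1}, \ldots, Y_{k})$, where $(Y_{1}, \ldots, Y_{k})$ is any positively-oriented vertical frame. On the left-hand side, because $\pi^{*} \mu_{Q}$ annihilates any vertical vector, the combinatorial expansion of the wedge product collapses to the single surviving term
\[
\pi^{*} \mu_{Q}(\tilde{X}_{1}, \ldots, \tilde{X}_{n}) \cdot \xi(Y_{1}, \ldots, Y_{k}) = \mu_{Q}(X_{1}, \ldots, X_{n}) \cdot \xi(Y_{1}, \ldots, Y_{k}),
\]
and substituting the definition of $\xi$ immediately gives $\mu_{Z}(\tilde{X}_{1}, \ldots, \tilde{X}_{n}, Y_{1}, \ldots, Y_{k})$, which is the right-hand side.

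Finally, Lemma \ref{lem:pushforward_measures} provides a unique $\xi' \in \Xi(\pi : Z \rightarrow Q)$ such that $\mu_{Z} = \pi^{*} \mu_{Q} \wedge \xi'$ whenever $\pi_{*} \mu_{Z}$ is smooth. Since the $\xi$ constructed above satisfies the same factorization on $\pi^{-1}(U)$, uniqueness forces $\xi = \xi'$ there; in particular the normalization $\int_{Z_{q}} \iota_{q}^{*} \xi = 1$ comes for free rather than needing a direct verification, and Theorem \ref{thm:existence_of_conditional_measures} then identifies $\xi$ with the regular conditional probability measure. I expect the subtlest step to be the well-definedness argument, since one must carefully track how the ambiguity in choosing a horizontal lift of $(X_{1}, \ldots, X_{n})$ enters the iterated interior product and only becomes invisible after restriction to the fiber; the wedge-product identity, by contrast, reduces to a one-line frame computation once the annihilation property of $\pi^{*} \mu_{Q}$ on vertical vectors is exploited.
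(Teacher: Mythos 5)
Your proposal is correct and covers the same mathematical content as the paper's proof, just traversed in the opposite direction: the paper starts from the unique $\xi$ guaranteed by Lemma \ref{lem:pushforward_measures}, contracts the factorization $\mu_{Z} = \pi^{*}\mu_{Q}\wedge\xi$ against the frame $(\tilde{X}_{1},\ldots,\tilde{X}_{n},Y_{1},\ldots,Y_{k})$, and solves for $\xi$; you instead posit the formula, verify by the same frame contraction that it satisfies the factorization, and then invoke the uniqueness from the same lemma to identify it. The only small imprecision is in your well-definedness paragraph, where you attribute the vanishing of the extra terms to $\mu_{Z}$ ``having only $n+k$ slots''; the correct reason is that after pulling back to a fiber you are contracting $k+1$ vertical vectors into an antisymmetric form while the vertical subspace has dimension $k$, forcing linear dependence. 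You also defer the normalization $\int_{Z_{q}}\iota_{q}^{*}\xi = 1$ to the uniqueness step rather than checking it directly, which is fine since Lemma \ref{lem:pushforward_measures} supplies it, but it means your opening claim that you will directly show $\xi \in \Xi$ is slightly overstated --- what you first establish is well-definedness and positivity, with membership in $\Xi$ following only a posteriori.
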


The regular conditional probability measure then allows us to validate the geometric construction of the 
pushforward measure.

\begin{corollary}
\label{cor:construction_of_pushforward_measure}

Let $\mu_{Z}$ be a smooth measure on the total space of a positively-oriented, smooth fiber bundle
whose pushforward measure with respect to the projection operator is smooth, with $U \subset Q$
any neighborhood of the base space that supports a local frame.  The pushforward measure
at any $q \in U$ is given by
\begin{equation*}
\mu_{Q} \! \left( X_{1} \! \left( q \right), \ldots, X_{n} \! \left( q \right) \right) = 
\int_{Z_{q}} \iota^{*}_{q} \left( \left( \tilde{X}_{1}, \ldots, \tilde{X}_{n} \right)
\, \lrcorner \, \mu_{Z} \right),
\end{equation*}
where $\left( X_{1}, \ldots, X_{n} \right)$ is any positively-oriented frame in $U$ satisfying
\begin{equation*}
\mu_{Q} \! \left( X_{1}, \ldots, X_{n} \right) < \infty, \, \forall q \in U
\end{equation*}
and $\left( \tilde{X}_{1}, \ldots, \tilde{X}_{n} \right)$ is any corresponding horizontal lift.

\end{corollary}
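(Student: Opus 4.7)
The plan is to reduce this directly to Lemma \ref{lem:canonical_disintegration} together with the normalization property built into $\Xi(\pi : Z \to Q)$. Under the hypotheses of the corollary, the pushforward measure $\mu_Q$ is smooth, so Lemma \ref{lem:canonical_disintegration} applies and gives us an explicit formula for the element $\xi \in \Xi(\pi : Z \to Q)$ that disintegrates $\mu_Z$ on the neighborhood $U$:
\begin{equation*}
\xi = \frac{\left(\tilde{X}_{1}, \ldots, \tilde{X}_{n}\right) \lrcorner \, \mu_{Z}}{\mu_{Q}\!\left(X_{1}, \ldots, X_{n}\right)}.
\end{equation*}
The corollary will then follow by exploiting that every $\xi \in \Xi(\pi : Z \to Q)$ integrates to unity on each fiber, so the denominator is forced to equal the fiber integral of the numerator.

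Concretely, first I would fix $q \in U$ and invoke the definition of $\Xi(\pi : Z \to Q)$, which yields
\begin{equation*}
\int_{Z_{q}} \iota_{q}^{*} \xi = 1.
\end{equation*}
Next I would substitute the expression for $\xi$ from Lemma \ref{lem:canonical_disintegration}. The key observation is that the denominator $\mu_{Q}(X_{1}, \ldots, X_{n})$ is a function on the base space, and therefore its pullback to $Z$ through $\xi$'s definition is constant along any single fiber $Z_{q}$, taking the value $\mu_{Q}(X_{1}(q), \ldots, X_{n}(q))$. Since $\iota_{q}^{*}$ commutes with multiplication by this scalar, I can pull it outside the fiber integral:
\begin{equation*}
1 = \frac{1}{\mu_{Q}\!\left(X_{1}(q), \ldots, X_{n}(q)\right)} \int_{Z_{q}} \iota_{q}^{*}\!\left(\left(\tilde{X}_{1}, \ldots, \tilde{X}_{n}\right) \lrcorner \, \mu_{Z}\right).
\end{equation*}
Rearranging this identity yields the claimed formula. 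The finiteness hypothesis $\mu_{Q}(X_{1}, \ldots, X_{n}) < \infty$ on $U$ guarantees the denominator is nonzero and finite, so the rearrangement is legitimate.

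There is no real obstacle here beyond confirming the bookkeeping of the pullback: one must check that treating $\mu_{Q}(X_{1}, \ldots, X_{n})$ as a scalar to extract from $\iota_{q}^{*}$ is justified, which follows from the fact that $\iota_{q}$ maps $Z_{q}$ to the single point $q$ under $\pi$, so any function on $Q$ restricted to $Z_{q}$ via $\pi$ is constant. The content of the corollary is essentially the unnormalized form of Lemma \ref{lem:canonical_disintegration}, and once that lemma is in hand the proof is a one-line calculation; accordingly, the real work was already done in establishing $\xi$'s explicit form and the existence of suitable horizontal lifts on $U$.
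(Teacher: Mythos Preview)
Your proposal is correct and follows essentially the same approach as the paper: invoke Lemma \ref{lem:canonical_disintegration} to obtain the explicit form of $\xi$, use the unit-normalization condition $\int_{Z_q} \iota_q^*\xi = 1$ from the definition of $\Xi(\pi:Z\to Q)$, pull the base-space scalar $\mu_Q(X_1(q),\ldots,X_n(q))$ outside the fiber integral, and rearrange. Your added remark about why the denominator is constant along the fiber is a welcome clarification but not a departure from the paper's argument.
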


\begin{proof}

From Lemma \ref{lem:canonical_disintegration}, the regular conditional probability measure of 
$\mu_{Z}$ with respect to the projection operator is defined by
\begin{equation*}
\xi = 
\frac{ \left( \tilde{X}_{1}, \ldots, \tilde{X}_{n} \right) \, \lrcorner \, \mu_{Z} }
{ \mu_{Q} \! \left( X_{1}, \ldots, X_{n} \right) }.
\end{equation*}
By construction $\xi$ restricts to a unit volume form on any fiber within $U$, hence
\begin{align*}
1 
&=
\int_{Z_{q}} \iota_{q}^{*} \xi
\\
&= 
\frac{ \int_{Z_{q}} \iota_{q}^{*} \left( \left( \tilde{X}_{1}, \ldots, \tilde{X}_{n} \right) \, \lrcorner \, \mu_{Z} \right) }
{ \mu_{Q} \! \left( X_{1} \! \left( q \right), \ldots, X_{n} \! \left( q \right) \right) },
\end{align*}
or
\begin{equation*}
\mu_{Q} \! \left( X_{1} \! \left( q \right), \ldots, X_{n} \! \left( q \right) \right) = 
\int_{Z_{q}} \iota^{*}_{q} \left( \left( \tilde{X}_{1}, \ldots, \tilde{X}_{n} \right)
\, \lrcorner \, \mu_{Z} \right),
\end{equation*}
as desired.

\end{proof}

%%%%%%%%%%%%%%%%%%%%%%%%%%%%%%%%%%%%%%%%
%%%%%%%%%%%%%%%%%%%%%%%%%%%%%%%%%%%%%%%%

\subsection{Measures on Riemannian Manifolds}

Once the manifold is endowed with a Riemannian metric, $g$, the constructions 
considered above become equivalent to results in classical 
\textit{geometric measure theory}~\citep{Federer1969}.

In particular, the rigid structure of the metric defines projections, and hence regular 
conditional probability measures, onto any submanifold.  The resulting conditional
and marginal measures are exactly the co-area and area measures of geometric
measure theory.

Moreover, the metric defines a canonical volume form, $V_{g}$, on the manifold,
\begin{equation*}
V_{g} = 
\sqrt{ \left| g \right| } \, \dd q^{1} \wedge \ldots \wedge \dd q^{n}.
\end{equation*}
Probabilistically, $V_{g}$ is a \textit{Hausdorff} measure that generalizes the 
Lesbegue measure on $\mathbb{R}^{n}$.  If the metric is Euclidean then the 
manifold is globally isomorphic to $\mathbb{R}^{n}$ and the Hausdorff
measure reduces to the usual Lebesgue measure.

%%%%%%%%%%%%%%%%%%%%%%%%%%%%%%%%%%%%%%%%
%%%%%%%%%%%%%%%%%%%%%%%%%%%%%%%%%%%%%%%%

\subsection{Measures on Symplectic Manifolds}

A symplectic manifold is an even-dimensional manifold, $M$, endowed with a
non-degenerate symplectic form, $\omega \in \Omega^{2} \! \left( M \right)$.
Unlike Riemannian metrics, there are no local invariants that distinguish between
different choices of the symplectic form: within the neighborhood of any chart all 
symplectic forms are isomorphic to each other and to canonical symplectic form,
\begin{equation*}
\omega = \sum_{i = 1}^{n} \dd q^{i} \wedge \dd p_{i},
\end{equation*}
where $\left( q^{1}, \ldots, q^{n}, p_{1}, \ldots, p_{n} \right)$ are denoted canonical
or Darboux coordinates.

From our perspective, the critical property of symplectic manifolds is that the symplectic 
form admits not only a canonical family of smooth measures but also a flow that 
preserves those measures.  This structure will be the fundamental basis of Hamiltonian
Monte Carlo and hence pivotal to a theoretical understanding of the algorithm.

%%%%%%%%%%%%%%%%%%%%%%%%%%%%%%%%%%%%%%%%

\subsubsection{The Symplectic Measure} 

Wedging the non-degenerate symplectic form together,
\begin{equation*}
\Omega = \bigwedge_{i = 1}^{n} \omega,
\end{equation*}
yields a canonical volume form on the manifold.

The equivalence of symplectic forms also ensures that the symplectic volumes,
given in local coordinates as
\begin{equation*}
\Omega = n! \left(
\dd q^{1} \wedge \ldots \wedge \dd q^{n} \wedge
\dd p_{1} \wedge \ldots \wedge \dd p_{n} \right),
\end{equation*}
are also equivalent locally.

%%%%%%%%%%%%%%%%%%%%%%%%%%%%%%%%%%%%%%%%

\subsubsection{Hamiltonian Systems and Canonical Measures} \label{sec:hamiltonian_systems}

A symplectic manifold becomes a Hamiltonian system with the selection of a 
smooth Hamiltonian function,
\begin{equation*}
H : M \rightarrow \mathbb{R}.
\end{equation*} 

Together with the symplectic form, a Hamiltonian defines a corresponding vector
field,
\begin{equation*}
\dd H = \omega \! \left( X_{H}, \cdot \right)
\end{equation*}
naturally suited to the Hamiltonian system.  In particular, the vector field preserves
both the symplectic measure and the Hamiltonian,
\begin{equation*}
\mathcal{L}_{X_{H}} \Omega = \mathcal{L}_{X_{H}} H = 0.
\end{equation*}
Consequently any measure of the form
\begin{equation*}
e^{-\beta H} \Omega, \, \beta \in \mathbb{R}^{+},
\end{equation*}
known collectively as \textit{Gibbs measures} or \textit{canonical distributions}~\citep{Souriau:1997}, 
is invariant to the flow generated by the Hamiltonian vector field (Figure \ref{fig:hamiltonian_flow}),
\begin{equation*}
\left( \phi^{H}_{t} \right)_{*} \! \left( e^{-\beta H} \Omega \right) = e^{-\beta H} \Omega,
\end{equation*}
where
\begin{equation*}
X_{H} = \left. \frac{ \dd \phi^{H}_{t} }{ \dd t } \right|_{t = 0}.
\end{equation*}

\begin{figure}
\centering
\includegraphics[width=3in]{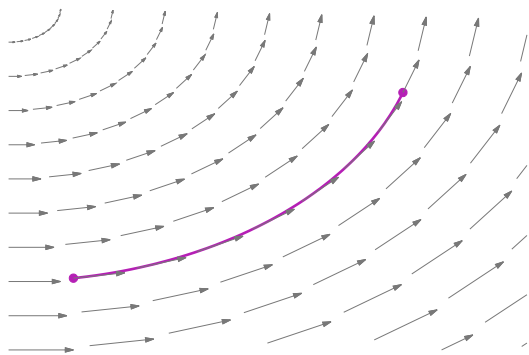}
\caption{Hamiltonian flow is given by dragging points along the integral curves of the corresponding
Hamiltonian vector field.  Because they preserve the symplectic measure, Hamiltonian vector fields are 
said to be \textit{divergenceless} and, consequently, the resulting flow preserves any canonical distribution.}
\label{fig:hamiltonian_flow}
\end{figure}

The level sets of the Hamiltonian,
\begin{equation*}
H^{-1} \! \left( E \right) = \left\{ z \in M | H \! \left( z \right) = E \right\},
\end{equation*}
decompose into \textit{regular level sets} containing only regular points of the Hamiltonian and
\textit{critical level sets} which contain at least one critical point of the Hamiltonian.  When the
critical level sets are removed from the manifold it decomposes into disconnected components,
$M = \coprod_{i} M_{i}$, each of which foliates into level sets that are diffeomorphic to some
common manifold (Figure \ref{fig:cylinder_level_sets}).  Consequently each 
$H : M_{i} \rightarrow \mathbb{R}$ becomes a smooth fiber bundle with the level sets taking the 
role of the fibers.

\begin{figure*}
\centering
%
%\begin{tikzpicture}[scale=0.25, thick, show background rectangle]
\begin{tikzpicture}[scale=0.25, thick]
  
  \draw[] (-5, -8) -- (-5, 8);
  \draw[] (5, -8) -- (5, 8);
  \draw[style=dashed] (-5, -8) arc (180:360:5 and 1);
  \draw[style=dashed] (0, 8) ellipse (5 and 1);
  
  \node[] at (0,0) {\includegraphics[width=2.5cm]{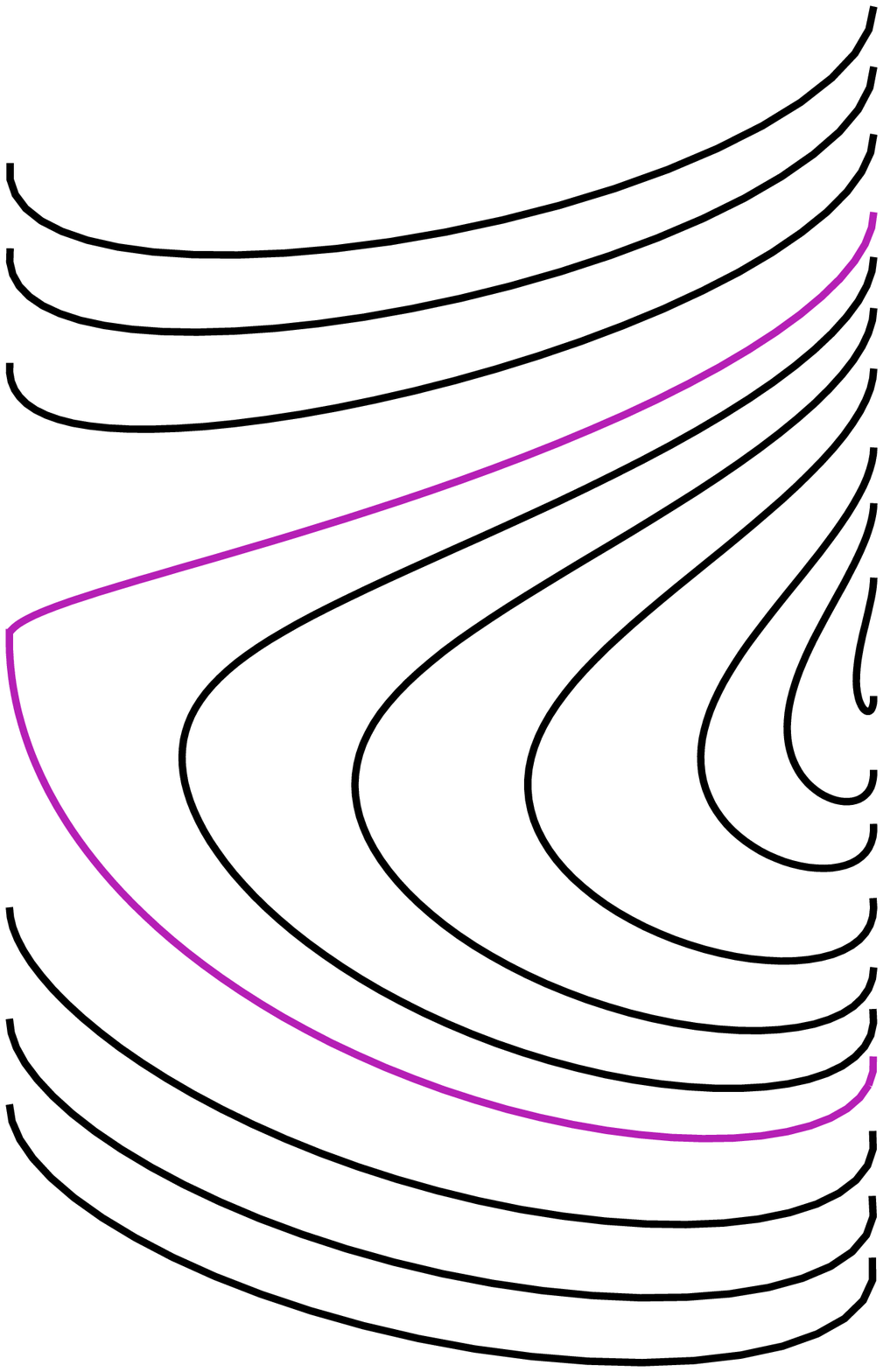}};
  \fill[color=highlight] (5, -0.33) circle (3pt);
  
  \draw[->] (6, 0) -- +(4, 0);
  
\end{tikzpicture}
%
%\begin{tikzpicture}[scale=0.25, thick, show background rectangle]
\begin{tikzpicture}[scale=0.25, thick]
  
  \draw[] (-5, 0) -- (-5, 8);
  \draw[] (5, 4.5) -- (5, 8);
  \draw[style=dashed] (0, 8) ellipse (5 and 1);
  
  \node[] at (0,0) {\includegraphics[width=2.5cm]{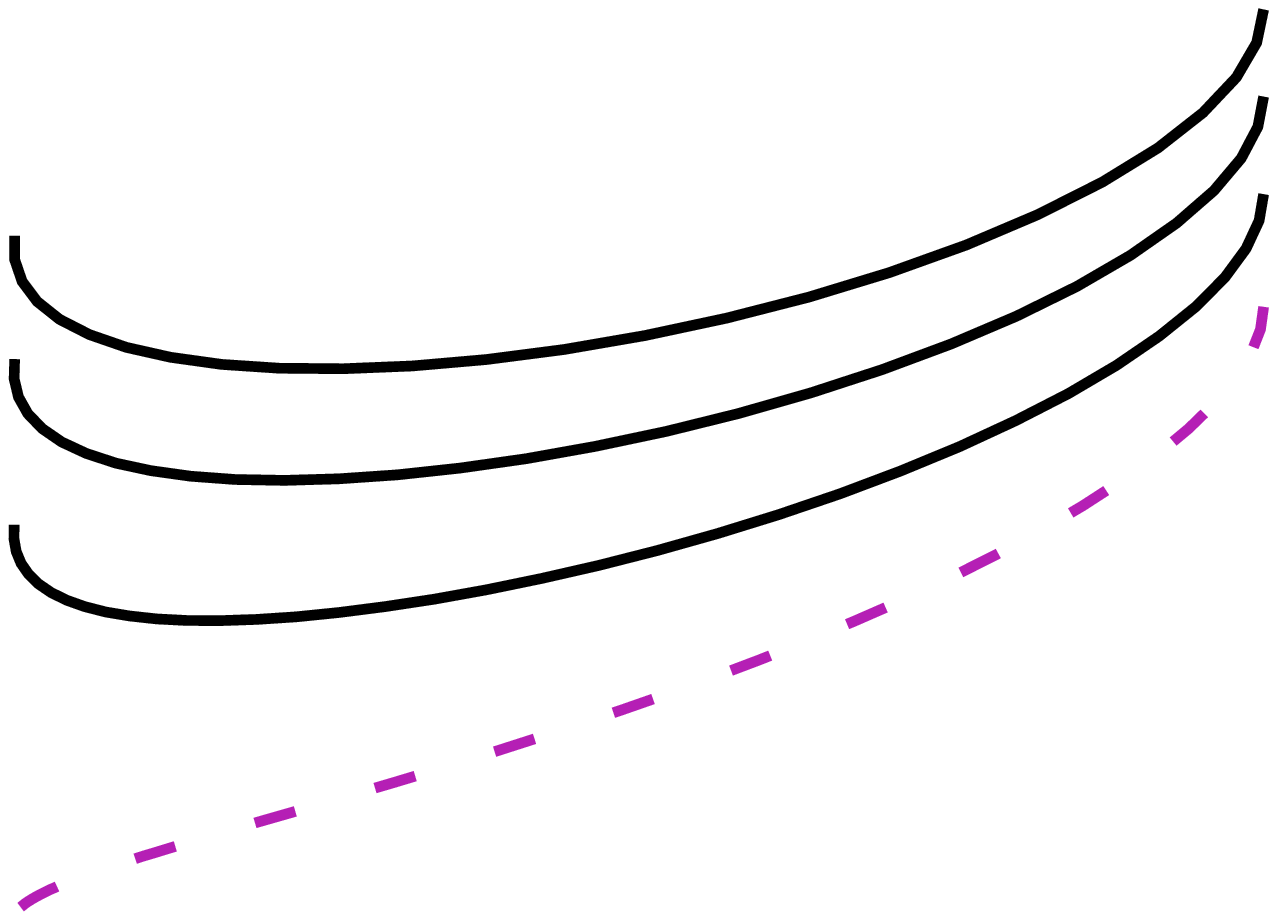}};
  
  \node[] at (7, 0) { $+$};
  
\end{tikzpicture}
%
%\begin{tikzpicture}[scale=0.25, thick, show background rectangle]
\begin{tikzpicture}[scale=0.25, thick]
  
  \draw[] (5, -5) -- (5, 5);
  \node[] at (0,0) {\includegraphics[width=2.5cm]{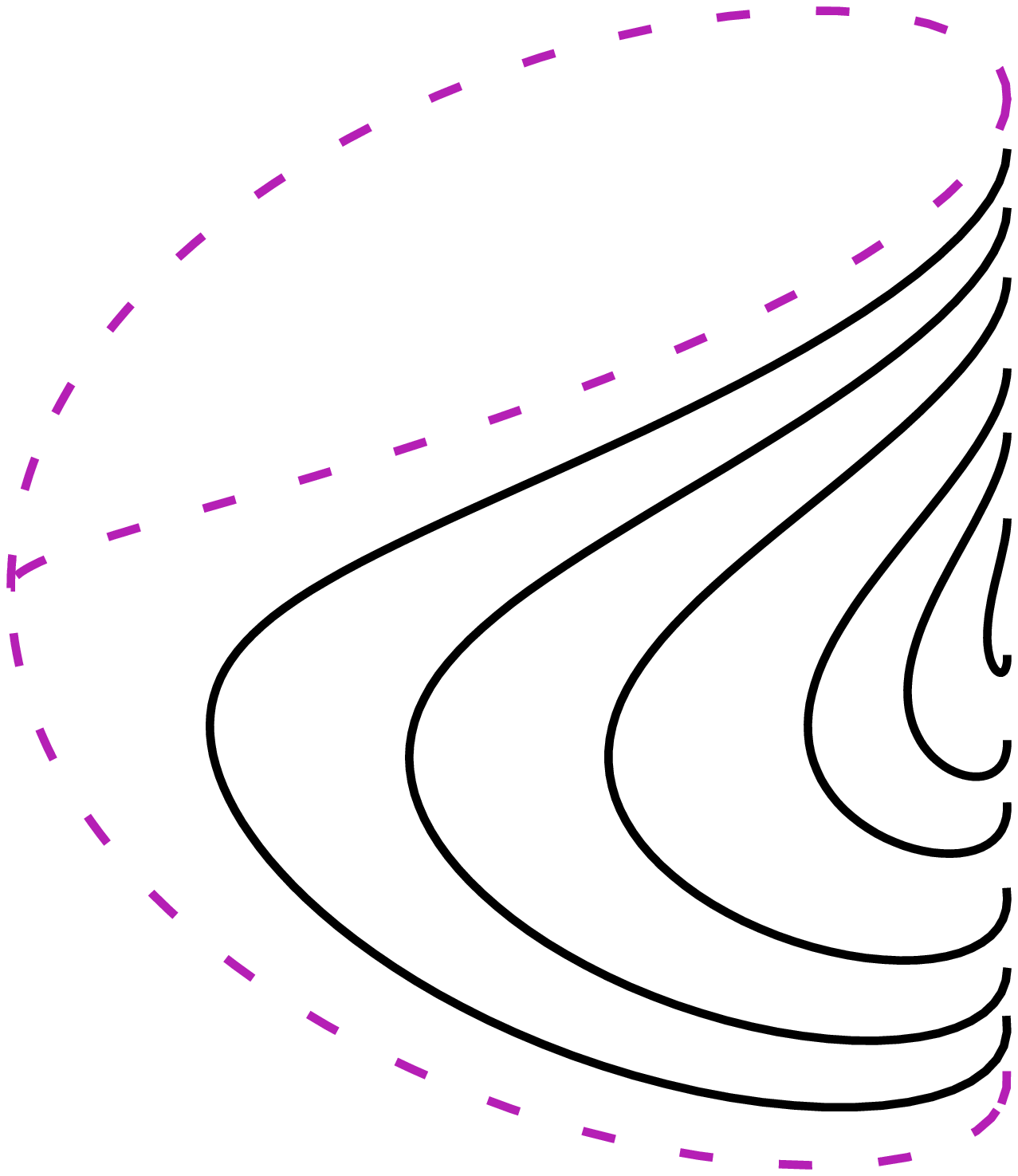}};
  
  \node[] at (7, 0) { $+$};
  
\end{tikzpicture}
%
%\begin{tikzpicture}[scale=0.25, thick, show background rectangle]
\begin{tikzpicture}[scale=0.25, thick]
  
  \draw[] (-5, -8) -- (-5, -0.2);
  \draw[] (5, -8) -- (5, -5);
  \draw[style=dashed] (-5, -8) arc (180:360:5 and 1);
  
  \node[] at (0,0) {\includegraphics[width=2.5cm]{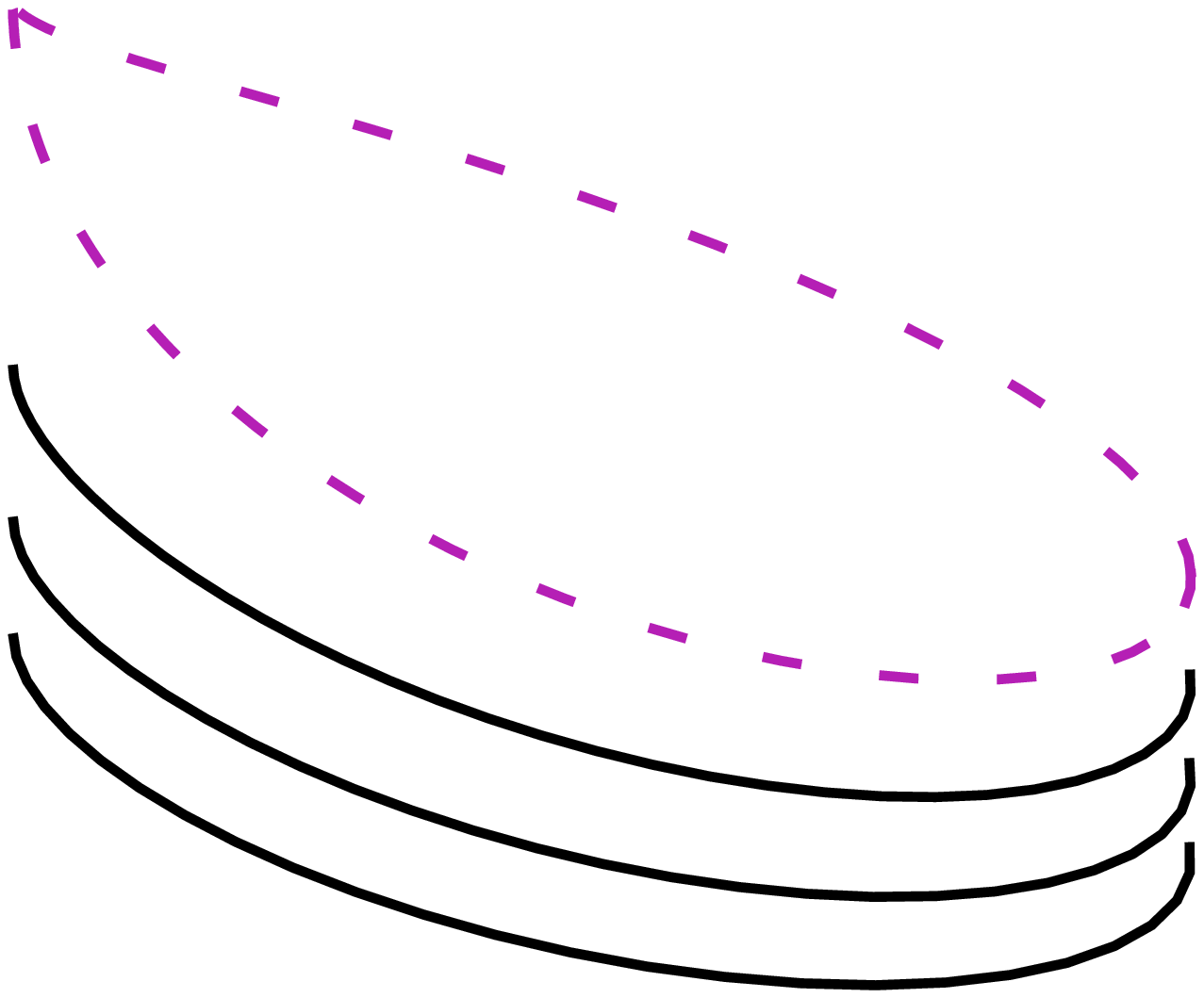}};
  
\end{tikzpicture}
\caption{When the cylinder, $\mathbb{S}^{1} \times \mathbb{R}$ is endowed with a symplectic structure
and Hamiltonian function, the level sets of a Hamiltonian function foliate the manifold.
Upon removing the critical level sets, here shown in purple, the cylinder decomposes 
into three components, each of which becomes a smooth fiber bundle with fiber space 
$F = \mathbb{S}^{1}$.}
\label{fig:cylinder_level_sets}
\end{figure*}

Provide that it is finite, 
\begin{equation*}
\int_{M} e^{- \beta H} \Omega < \infty,
\end{equation*}
upon normalization the canonical distribution becomes a probability measure,
\begin{equation*}
\varpi = \frac{ e^{-\beta H} \Omega }{ \int_{M} e^{-\beta H} \Omega },
\end{equation*}
Applying Lemma \ref{lem:canonical_disintegration}, each component of the excised canonical 
distribution then disintegrates into \textit{microcanonical distributions} on the level sets,
\begin{equation*}
\varpi_{H^{-1} (E) } = 
\frac{ v \, \lrcorner \, \Omega }
{ \int_{H^{-1} (E) } \iota^{*}_{E} \left( v \, \lrcorner \, \Omega \right) }.
\end{equation*}
Similarly, the pushforward measure on $\mathbb{R}$ is given by Lemma
 \ref{cor:construction_of_pushforward_measure},
\begin{equation*}
H_{*} \varpi = 
\frac{ e^{- \beta E} }{ \int_{M} e^{-\beta H} \Omega }
\frac{ \left( \int_{H^{-1} (E) } \iota_{E}^{*} \left( v \, \lrcorner \, \Omega \right) \right)  }
{ \dd H \! \left( v \right) } \dd E,
\end{equation*}
where $v$ is any positively-oriented horizontal vector field satisfying $\dd H \! \left( v \right) = c$ 
for some $0 < c < \infty$.  Because the critical level sets have zero measure with respect to the 
canonical distribution, the disintegration on the excised manifold defines a valid disintegration of 
original manifold as well.  For more on non-geometric constructions of the microcanonical distribution 
see~\cite{Draganescu:2009}.

The disintegration of the canonical distribution is also compatible with the Hamiltonian flow.

\begin{lemma}
\label{lem:invariance_of_microcanonical}

Let $\left( M, \Omega, H \right)$ be a Hamiltonian system with the finite and smooth canonical measure,
$\mu = e^{- \beta H} \Omega$.  The microcanonical distribution on the level set $H^{-1} (E)$,
\begin{equation*}
\varpi_{H^{-1} (E) } = 
\frac{ v \, \lrcorner \, \Omega }
{ \int_{H^{-1} (E) } \iota^{*}_{E} \left( v \, \lrcorner \, \Omega \right) },
\end{equation*}
is invariant to the corresponding Hamiltonian flow restricted to the level set, 
$\left. \phi^{H}_{t} \right|_{H^{-1} (E) }$.

\end{lemma}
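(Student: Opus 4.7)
The plan is to leverage the two defining properties of the Hamiltonian flow, namely $\mathcal{L}_{X_H}\Omega = 0$ and $\mathcal{L}_{X_H} H = 0$, together with the uniqueness built into Theorem~\ref{thm:existence_of_disintegrations}. First I would observe that $\phi^{H}_{t}$ preserves the full canonical measure $\mu = e^{-\beta H}\Omega$, since $\mathcal{L}_{X_H}(e^{-\beta H}\Omega) = -\beta (\mathcal{L}_{X_H} H)\,\mu + e^{-\beta H}\mathcal{L}_{X_H}\Omega = 0$, and preserves each level set because $H \circ \phi^{H}_{t} = H$. Consequently $\phi^{H}_{t}$ restricts to a smooth diffeomorphism $\phi^{E}_{t} := \phi^{H}_{t}|_{H^{-1}(E)}$ of each regular level set, so it is sensible to ask whether $\varpi_{H^{-1}(E)}$ is $\phi^{E}_{t}$-invariant.

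Next I would view the family $\{\varpi_{H^{-1}(E)}\}$ as the normalized smooth disintegration of $\mu$ with respect to $H$ and $H_{*}\mu$ produced by Lemma~\ref{lem:canonical_disintegration}. Defining the pushforward candidate
\begin{equation*}
\varpi'_{H^{-1}(E)} := \bigl(\phi^{E}_{t}\bigr)_{*} \varpi_{H^{-1}(E)},
\end{equation*}
each $\varpi'_{H^{-1}(E)}$ remains a smooth probability measure concentrated on $H^{-1}(E)$ because $\phi^{E}_{t}$ is a diffeomorphism of that level set. The strategy is to show that $\{\varpi'_{H^{-1}(E)}\}$ is itself a valid smooth disintegration of $\mu$; the uniqueness portion of Theorem~\ref{thm:existence_of_disintegrations} will then force $\varpi'_{H^{-1}(E)} = \varpi_{H^{-1}(E)}$, which is the desired invariance.

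To verify the disintegration identity for the primed family I would fix any smooth positive $f \in L^{1}(M, \mu)$ and compute
\begin{align*}
\int_{\mathbb{R}} \int_{H^{-1}(E)} f \, \varpi'_{H^{-1}(E)}(\dd z) \, (H_{*}\mu)(\dd E)
&= \int_{\mathbb{R}} \int_{H^{-1}(E)} (f \circ \phi^{H}_{t}) \, \varpi_{H^{-1}(E)}(\dd z) \, (H_{*}\mu)(\dd E) \\
&= \int_{M} (f \circ \phi^{H}_{t}) \, \mu(\dd z) \\
&= \int_{M} f \, \mu(\dd z),
\end{align*}
using that $\phi^{E}_{t}$ and $\phi^{H}_{t}$ agree on $H^{-1}(E)$, the disintegration of $\mu$ from Lemma~\ref{lem:canonical_disintegration} in the second step, and $(\phi^{H}_{t})_{*}\mu = \mu$ in the third.

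The main obstacle I anticipate is being precise about the uniqueness step, because Theorem~\ref{thm:existence_of_disintegrations} phrases uniqueness in terms of the form $\upsilon \in \Upsilon(\pi : Z \to Q)$ rather than the induced probability kernel; one must check that smoothness of $\varpi'_{H^{-1}(E)}$ in $E$ persists under the pushforward (which follows from smoothness of $\phi^{H}_{t}$ and of $v \lrcorner \Omega$) and that uniqueness descends through normalization. If that abstract route proves fiddly, a direct alternative is to use $(\phi^{H}_{t})^{*}\Omega = \Omega$ to write $(\phi^{H}_{t})^{*}(v \lrcorner \Omega) = \bigl((\phi^{H}_{-t})_{*}v\bigr) \lrcorner \Omega$ and then, choosing $v$ with $\dd H(v)$ locally constant, observe that $v - (\phi^{H}_{-t})_{*}v \in \ker \dd H$, so that the difference pulls back to zero on $H^{-1}(E)$ and invariance of the microcanonical form follows on the nose.
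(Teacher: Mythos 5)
Your primary argument is essentially the paper's own proof: both establish that $\phi^{H}_{t}$ preserves the canonical measure and each level set, observe that the pushed-forward microcanonical family remains a valid disintegration of $\mu$ with respect to $H$ and $H_{*}\mu$, and invoke uniqueness of the smooth regular conditional probability measure (the uniqueness in Lemma~\ref{lem:pushforward_measures}) to conclude. The only cosmetic difference is in how the disintegration property is verified -- the paper factors $\varpi = \varpi_{H^{-1}(E)} \wedge \varpi_{E}$ and pushes each factor forward, while you check the integral identity in Definition~\ref{def:smooth_disintegration}iii directly -- but these are two ways of stating the same fact. Your alternative sketch at the end, using $(\phi^{H}_{t})^{*}\Omega = \Omega$ together with the observation that $\iota^{*}_{E}(v \lrcorner \Omega)$ is unchanged when $v$ is perturbed by anything in $\ker \dd H$ (since a level-set-tangent correction is annihilated upon restriction to the codimension-one level set), is a genuinely more direct and elementary route that bypasses the uniqueness machinery entirely; it is sound and would make a cleaner standalone proof.
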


The density of the pushforward of the symplectic measure relative to the Lebesgue 
measure,
\begin{equation*}
d \! \left( E \right) 
= \frac{ \dd \left( H_{*} \Omega \right) }{ \dd E }
= \frac{ \int_{H^{-1} (E) } \iota^{*}_{E} \left( v \, \lrcorner \, \Omega \right) }{ \dd H \! \left( v \right) },
\end{equation*}
is known as the density of states in the statistical mechanics literature \citep{Kardar:2007}.

%%%%%%%%%%%%%%%%%%%%%%%%%%%%%%%%%%%%%%%%
%%%%%%%%%%%%%%%%%%%%%%%%%%%%%%%%%%%%%%%%
%%%%%%%%%%%%%%%%%%%%%%%%%%%%%%%%%%%%%%%%

\section{Hamiltonian Monte Carlo} \label{sec:hamiltonian_monte_carlo}

Although Hamiltonian systems feature exactly the kind of measure-preserving flow that could
generate an efficient Markov transition, there is no canonical way of endowing a given
probability space with a symplectic form, let alone a Hamiltonian.  In order take advantage
of Hamiltonian flow we need to consider not the sample space of interest but rather its
\textit{cotangent bundle}.

In this section we develop the formal construction of Hamiltonian Monte Carlo 
and identify how the theory informs practical considerations in both implementation
and optimal tuning.  Lastly we reconsider a few existing Hamiltonian Monte Carlo
implementations with this theory in mind.

%%%%%%%%%%%%%%%%%%%%%%%%%%%%%%%%%%%%%%%%
%%%%%%%%%%%%%%%%%%%%%%%%%%%%%%%%%%%%%%%%

\subsection{Formal Construction}

The key to Hamiltonian Monte Carlo is that the cotangent bundle of the sample
space, $T^{*} Q$, is endowed with both a canonical fiber bundle structure, 
$\pi : T^{*} Q \rightarrow Q$, and a canonical symplectic form.  If we can lift
the target distribution onto the cotangent bundle then we can construct an
appropriate Hamiltonian system and leverage its Hamiltonian flow to generate
a powerful Markov kernel.  When the sample space is also endowed with a 
Riemannian metric this construction becomes particularly straightforward.

\subsubsection{Constructing a Hamiltonian System} \label{sec:constructing_hmc}

By Corollary \ref{cor:joint_from_marginal}, the target distribution, $\varpi$, is lifted 
onto the cotangent bundle with the choice of a smooth disintegration, 
$\xi \in \Xi \! \left( \pi : T^{*} Q \rightarrow Q \right)$,
\begin{equation*}
\varpi_{H} = \pi^{*} \varpi \wedge \xi.
\end{equation*}
Because $\varpi_{H}$ is a smooth probability measure it must be of the form of a 
canonical distribution for some Hamiltonian $H : T^{*} Q \rightarrow \mathbb{R}$,
\begin{equation*}
\varpi_{H} = e^{-H} \, \Omega,
\end{equation*}
with $\beta$ taken to be unity without loss of generality.  In other words, the choice of a 
disintegration defines not only a lift onto the cotangent bundle but also a Hamiltonian 
system (Figure \ref{fig:cylinder_hmc}) with the Hamiltonian
\begin{equation*}
H = - \log \frac{ \dd \left( \pi^{*} \varpi \wedge \xi \right) }{ \dd \Omega }.
\end{equation*}

\begin{figure*}
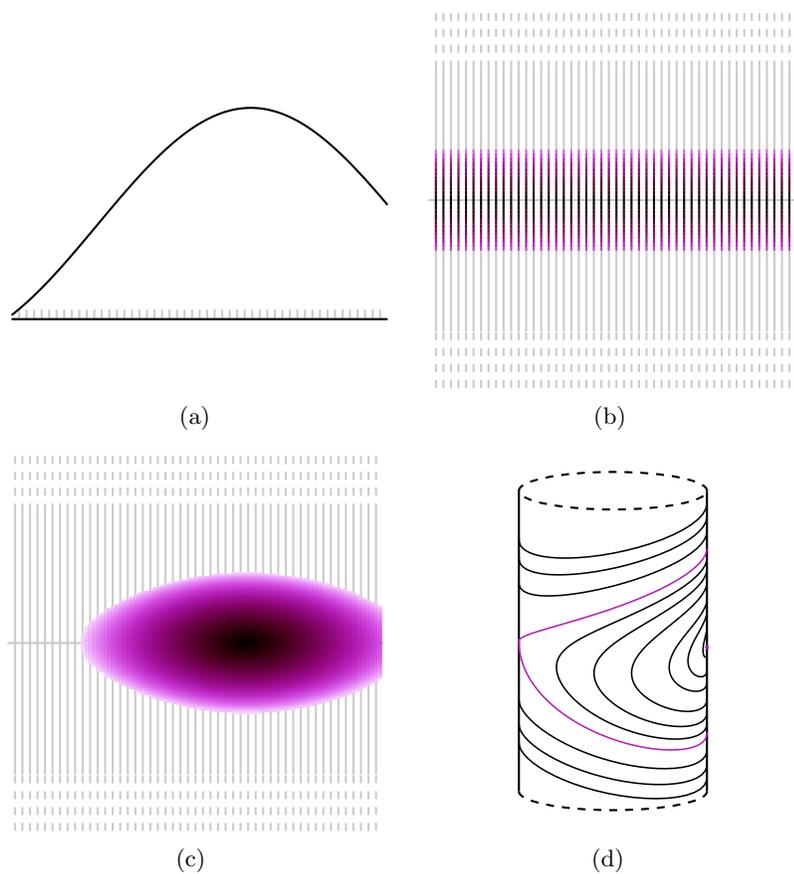

\centering
\subfigure[]{
%\begin{tikzpicture}[scale=0.25, thick, show background rectangle]
\begin{tikzpicture}[scale=0.25, thick]
  
  \fill[color=white] (-10, -10) rectangle (10, 12);
  
  \foreach \x in {2,4,...,98} {
    \draw[color=gray80] ({\x / 5 - 10}, -6) -- (({\x / 5 - 10}, -5.5);
  }
  
  \draw[] (-10, -6) -- (10, -6);
  
  \node[] at (0, 0) {\includegraphics[width=5cm]{pushforward_density.eps}};
  
\end{tikzpicture}
}
\subfigure[]{
%\begin{tikzpicture}[scale=0.25, thick, show background rectangle]
\begin{tikzpicture}[scale=0.25, thick]
  
  \foreach \x in {2,4,...,98} {
    \draw[style=dashed, color=gray80] ({\x / 5 - 10}, -10) -- ({\x / 5 - 10}, -7);
    \draw[color=gray80] ({\x / 5 - 10}, -7) -- ({\x / 5 - 10}, 7);
    \draw[style=dashed, color=gray80] ({\x / 5 - 10}, 7) -- ({\x / 5 - 10}, 10);
  }
  
  \draw[color=gray80] (-10, 0) -- (10, -0);
  
  \node[] at (-0.1, 0) {\includegraphics[width=4.95cm]{cylinder_density_disint.eps}};
  
\end{tikzpicture}
}
\subfigure[]{
%\begin{tikzpicture}[scale=0.25, thick, show background rectangle]
\begin{tikzpicture}[scale=0.25, thick]
  
  \foreach \x in {2,4,...,98} {
    \draw[style=dashed, color=gray80] ({\x / 5 - 10}, -10) -- ({\x / 5 - 10}, -7);
    \draw[color=gray80] ({\x / 5 - 10}, -7) -- ({\x / 5 - 10}, 7);
    \draw[style=dashed, color=gray80] ({\x / 5 - 10}, 7) -- ({\x / 5 - 10}, 10);
  }
  
  \draw[color=gray80] (-10, 0) -- (10, -0);
  
  \node[] at (0,0) {\includegraphics[width=4.95cm]{cylinder_density_flat.eps}};
  
\end{tikzpicture}
}
\subfigure[]{
%\begin{tikzpicture}[scale=0.25, thick, show background rectangle]
\begin{tikzpicture}[scale=0.25, thick]
  
  \draw[color=white] (-10.5, 0) -- (10.5, 0);
  
  \draw[] (-5, -8) -- (-5, 8);
  \draw[] (5, -8) -- (5, 8);
  \draw[style=dashed] (-5, -8) arc (180:360:5 and 1);
  \draw[style=dashed] (0, 8) ellipse (5 and 1);
  
  \node[] at (0,0) {\includegraphics[width=2.5cm]{cylinder_hamiltonian.eps}};
  \fill[color=highlight] (5, -0.33) circle (3pt);
  
\end{tikzpicture}
}
\caption{(a) Hamiltonian Monte Carlo begins with a target measure on the base space,
for example $Q = \mathbb{S}^{1}$.  (b) The choice of a disintegration on the cotangent
bundle, $T^{*} Q = \mathbb{S}^{1} \times \mathbb{R}$, defines (c) a joint measure on
the cotangent bundle which immediately defines (d) a Hamiltonian system given the
canonical symplectic structure.  The Hamiltonian flow of this system is then used to
construct an efficient Markov transition.}
\label{fig:cylinder_hmc}
\end{figure*}

Although this construction is global, it is often more conveniently implemented in local
coordinates.  Consider first a local neighborhood of the sample space,
$\mathcal{U}_{\alpha} \subset Q$, in which the target distribution decomposes as
\begin{equation*}
\varpi = e^{-V} \dd q^{1} \wedge \ldots \wedge \dd q^{n}.
\end{equation*}
Here $e^{-V}$ is the Radon--Nikodym derivative of the target measure with respect
to the pullback of the Lebesgue measure on the image of the local chart.  Following
the natural analogy to the physical application of Hamiltonian systems, we will refer 
to $V$ as the \textit{potential energy}.  

In the corresponding neighborhood of the cotangent bundle, 
$\pi^{-1} \! \left( \mathcal{U}_{\alpha} \right) \subset T^{*} Q$, the smooth disintegration, 
$\xi$, similarly decomposes into,
\begin{equation*}
\xi = e^{-T} \dd p_{1} \wedge \ldots \wedge \dd p_{n}
+ \text{horizontal} \; n\text{-forms}.
\end{equation*}
When $\xi$ is pulled back onto a fiber all of the horizontal $n$-forms vanish and $e^{-T}$ 
can be considered the Radon--Nikodym derivative of the disintegration restricted to a 
fiber with respect to the Lebesgue measure on that fiber.  Appealing to the physics 
conventions once again, we denote $T$ as the \textit{kinetic energy}.  

Locally the lift onto the cotangent bundle becomes
\begin{align*}
\varpi_{H} 
&= \pi^{*} \varpi \wedge \varpi_{q}
\\
&=
e^{- \left( T + V \right) } 
\dd q^{1} \wedge \ldots \wedge \dd q^{n} \wedge \dd p_{1} \wedge \ldots \wedge \dd p_{n}
\\
&=
e^{-H} \Omega,
\end{align*}
with the Hamiltonian
\begin{equation*}
H = - \log \frac{ \dd \varpi_{H} }{ \dd \Omega } = T + V,
\end{equation*}
taking a form familiar from classical mechanics~\citep{JoseEtAl:1998}.

A particular danger of the local perspective is that neither the potential energy, $V$, or the
kinetic energy, $T$, are proper scalar functions.  Both depend on the choice of chart and
introduce a log determinant of the Jacobian when transitioning between charts with
coordinates $q$ and $q'$,
\begin{align*}
V &\rightarrow V + \log \left| \frac{ \partial q }{ \partial q' } \right| \\
T&\rightarrow T - \log \left| \frac{ \partial q }{ \partial q' } \right|;
\end{align*}
only when $V$ and $T$ are summed do the chart-dependent terms cancel to give a 
scalar Hamiltonian.  When these local terms are used to implement 
Hamiltonian Monte Carlo care must be taken to avoid any sensitivity to the arbitrary
choice of chart, which usually manifests as pathological behavior in the algorithm.

\subsubsection{Constructing a Markov Transition}

Once on the cotangent bundle the Hamiltonian flow generates isomorphisms that preserve 
$\varpi_{H}$, but in order to define an isomorphism on the sample space we first need to map 
to the cotangent bundle and back.  

If $q$ were drawn from the target measure then we could generate an exact sample from 
$\varpi_{H}$ by sampling directly from the measure on the corresponding fiber,
\begin{equation*}
p \sim \iota_{q}^{*} \xi = e^{-T} \dd p_{1} \wedge \ldots \wedge \dd p_{n}.
\end{equation*}
In other words, sampling along the fiber defines a lift from $\varpi$ to $\varpi_{H}$,
\begin{align*}
\lambda :& \; Q \rightarrow T^{*} Q 
\\
& \; q \mapsto \left( p, q \right), p \sim \iota_{q}^{*} \xi.
\end{align*}
In order to return to the sample space we use the canonical projection, which by
construction maps $\varpi_{H}$ back into its pushforward, $\varpi$.

Together we have a random lift, 
\begin{align*}
\lambda : Q & \rightarrow T^{*} Q
\\
\lambda_{*} \varpi &= \varpi_{H},
\end{align*}
the Hamiltonian flow,
\begin{align*}
\phi^{H}_{t} : T^{*} Q & \rightarrow T^{*} Q
\\
\left( \phi^{H}_{t} \right)_{*} \varpi_{H} &= \varpi_{H},
\end{align*}
and finally the projection,
\begin{align*}
\pi : T^{*} Q & \rightarrow Q
\\
\pi_{*} \varpi_{H} &= \varpi.
\end{align*}
Composing the maps together,
\begin{equation*}
\phi_{\mathrm{HMC}} = \pi \circ \phi^{H}_{t} \circ \lambda
\end{equation*}
yields exactly the desired measure-preserving isomorphism,
\begin{align*}
\phi_{\mathrm{HMC}} : Q & \rightarrow Q
\\
\left( \phi_{\mathrm{HMC}} \right)_{*} \varpi &= \varpi,
\end{align*}
for which we have been looking.

Finally, the measure on $\lambda$, and possibly a measure on the integration time, $t$, 
specifies a measure on $\phi_{\mathrm{HMC}}$ from which we can define a Hamiltonian 
Monte Carlo transition via \eqref{eqn:kernel_from_isomorphisms}.

\subsubsection{Constructing an Explicit Disintegration}

The only obstacle with implementing Hamiltonian Monte Carlo as constructed is that
the disintegration is left completely unspecified.  Outside of needing to sample from
$\iota^{*}_{q} \xi$ there is little motivation for an explicit choice.

This choice of a smooth disintegration, however, is greatly facilitated by endowing the base 
manifold with a Riemannian metric, $g$, which provides two canonical objects from which 
we can construct a kinetic energy and, consequently, a disintegration.  Denoting 
$\tilde{p} \! \left( z \right)$ as the element of $T^{*}_{\pi (z)} Q$ identified by $z \in T^{*} Q$,
the metric immediately defines a scalar function, 
$g^{-1}\! \left( \tilde{p} \! \left( z \right), \tilde{p} \! \left( z \right) \right)$ and a density, 
$\left| g \! \left( \pi \! \left( z \right) \right) \right|$.  From the perspective of geometry
measure theory this latter term is just the Hausdorff density; in molecular dynamics
it is known as the Fixman potential \citep{Fixman:1978}.

Noting that the quadratic function is a scalar function where as the log density transforms
like the kinetic energy, an immediate candidate for the kinetic energy is given by simply
summing the two together,
\begin{equation*}
T \! \left( z \right) = 
\frac{1}{2} g^{-1}\! \left( \tilde{p} \! \left( z \right), \tilde{p} \! \left( z \right) \right) 
+ \frac{1}{2} \log \left| g \! \left( \pi \! \left( z \right) \right) \right| + \mathrm{const},
\end{equation*}
or in coordinates,
\begin{equation*}
T \! \left( p, q \right) = 
\frac{1}{2} \sum_{i, j = 1}^{n} p_{i} p_{j} \left( g^{-1} \! \left( q \right) \right)^{ij} 
+ \frac{1}{2} \log \left| g \! \left( q \right) \right| + \mathrm{const},
\end{equation*}
which defines a Gaussian measure on the fibers,
\begin{equation*}
\iota^{*}_{q} \xi = e^{-T} \dd p_{1} \wedge \ldots \wedge \dd p_{n} = \mathcal{N} \! \left( 0, g \right).
\end{equation*}
Using these same two ingredients we could also construct, for example, 
a multivariate Student's $t$ measure,
\begin{align*}
T \! \left( z \right) &= 
\frac{\nu + n}{2} \log \left( 
1+ \frac{1}{\nu} g^{-1}\! \left( \tilde{p} \! \left( z \right), \tilde{p} \! \left( z \right) \right) 
\right) + \frac{1}{2} \log \left| g \! \left( \pi \! \left( z \right) \right) \right|  + \mathrm{const} \\
\iota^{*}_{q} \xi &= t_{\nu} \! \left(0, g \right),
\end{align*}
or any distribution whose sufficient statistic is the Mahalanobis distance.

When $g$ is taken to be Euclidean the resulting algorithm is exactly the Hamiltonian
Monte Carlo implementation that has dominated both the literature and applications 
to date; we refer to this implementation as \textit{Euclidean Hamiltonian Monte Carlo}.
The more general case, where $g$ varies with position, is exactly
\textit{Riemannian Hamiltonian Monte Carlo}~\citep{GirolamiEtAl:2011} which has shown
promising success when the metric is used to correct for the nonlinearities of the target 
distribution.  In both cases, the natural geometric motivation for the choice of disintegration 
helps to explains why the resulting algorithms have proven so successful in practice.

%%%%%%%%%%%%%%%%%%%%%%%%%%%%%%%%%%%%%%%%
%%%%%%%%%%%%%%%%%%%%%%%%%%%%%%%%%%%%%%%%

\subsection{Practical Implementation} \label{sec:practical_implementation}

Ultimately the Hamiltonian Monte Carlo transition constructed above is a only a 
mathematical abstraction until we are able to simulate the Hamiltonian flow by
solving a system of highly-nonlinear, coupled ordinary differential equations.  At
this stage the algorithm is vulnerable to a host of pathologies and we have to
heed the theory carefully.

The numerical solution of Hamiltonian flow is a well-researched subject and many 
efficient integrators are available.  Of particular importance are symplectic integrators 
which leverage the underlying symplectic geometry to exactly preserve the symplectic 
measure with only a small error in the Hamiltonian~\citep{LeimkuhlerEtAl:2004, HairerEtAl:2006}.
Because they preserve the symplectic measure exactly, these integrators are highly
accurate even over long integration times.

Formally, symplectic integrators approximate the Hamiltonian flow by composing the flows 
generated from individual terms in the Hamiltonian.   For example, one second-order symplectic 
integrator approximating the flow from the Hamiltonian $H = H_{1} + H_{2}$ is given by
\begin{equation*}
\phi^{H}_{\delta t} =
\phi^{H_{1}}_{\delta t / 2} \circ 
\phi^{H_{2}}_{\delta t} \circ 
\phi^{H_{1}}_{\delta t / 2} + \mathcal{O} \! \left( \delta t ^{2} \right).
\end{equation*}
The choice of each component, $H_{i}$, and the integration of the resulting flow requires particular
care.  If the flows are not solved exactly then the resulting integrator no longer preserves
the symplectic measure and the accuracy plummets.  Moreover, each component must
be a scalar function on the cotangent bundle: although one might be tempted to take 
$H_{1} = V$ and $H_{2} = T$, for example, this would not yield a symplectic integrator 
as $V$ and $T$ are not proper scalar functions as discussed in Section \ref{sec:constructing_hmc}.
When using a Gaussian kinetic energy as described above, a proper decomposition is given by
\begin{align*}
H_{1} &= 
\frac{1}{2} \log \left| g \! \left( \pi \! \left( z \right) \right) \right| 
+ V \! \left( \pi \! \left( z \right) \right) \\
H_{2} &= 
\frac{1}{2} g^{-1}\! \left( \tilde{p} \! \left( z \right), \tilde{p} \! \left( z \right) \right).
\end{align*}

Although symplectic integrators introduce only small and well-understood errors, those errors 
will ultimately bias the resulting Markov chain.  In order to remove this bias we can consider 
the Hamiltonian flow not as a transition but rather as a Metropolis proposal on the cotangent 
bundle and let the acceptance procedure cancel any numerical bias.  Because they remain
accurate even for high-dimensional systems, the use of a symplectic integrator here is
crucial lest the Metropolis acceptance probability fall towards zero.

The only complication with a Metropolis strategy is that the numerical flow, $\Phi^{H}_{\epsilon, t}$, 
must be reversible in order to maintain detailed balance.  This can be accomplished by making 
the measure on the integration time symmetric about $0$, or by composing the flow 
with any operator, $R$, satisfying
\begin{equation*}
\Phi^{H}_{\epsilon, t} \circ R \circ \Phi^{H}_{\epsilon, t} = \mathrm{Id}_{T*Q}.
\end{equation*}
For all of the kinetic energies considered above, this is readily accomplished with a 
parity inversion given in canonical coordinates by
\begin{equation*}
R \! \left(q, p \right) = \left(q, -p \right).
\end{equation*}
In either case the acceptance probability reduces to
\begin{align*}
a \! \left( z, R \circ \Phi^{H}_{\epsilon, t} z \right)
&= 
\min \! \left[ 1, 
\exp \! \left( H \! \left( R \circ \Phi^{H}_{\epsilon, t} z \right) - H \! \left( z \right) \right) 
\right],
\end{align*}
where $\Phi^{H}_{\epsilon, t}$ is the symplectic integrator with step size, $\epsilon$,
and $z \in T^{*} Q$.

%%%%%%%%%%%%%%%%%%%%%%%%%%%%%%%%%%%%%%%%
%%%%%%%%%%%%%%%%%%%%%%%%%%%%%%%%%%%%%%%%

\subsection{Tuning}

Although the selection of a disintegration and a symplectic integrator formally define a
full implementation of Hamiltonian Monte Carlo, there are still free parameters left 
unspecified to which the performance of the implementation will be highly sensitive.  
In particular, we must set the integration time of the flow, the Riemannian metric, and 
symplectic integrator step size.  All of the machinery developed in our theoretical
construction proves essential here, as well.

The Hamiltonian flow generated from a single point may explore the entirety 
of the corresponding level set or be restricted to a smaller submanifold of the
level set, but in either case the trajectory nearly closes in some possibly-long 
but finite recurrence time, $\tau_{H^{-1} (E)}$~\citep{Petersen:1989, Zaslavsky:2005}.  
Taking
\begin{equation*}
t \sim U \! \left( 0, \tau_{H^{-1}(E)} \right),
\end{equation*}
would avoid redundant exploration but unfortunately the recurrence time for a given 
level set is rarely calculable in practice and we must instead resort to approximations.  
When using a Riemannian geometry, for example, we can appeal to the No-U-Turn 
sampler~\citep{HoffmanEtAl:2014, Betancourt:2013a} which has proven an empirical
success.

When using such a Riemannian geometry, however, we must address the fact that
the choice of metric is itself a free parameter.  One possible criterion to consider
is the interaction of the geometry with the symplectic integrator -- in the case of
a Gaussian kinetic energy, integrators are locally optimized when the metric 
approximates the Hessian of the potential energy, essentially canceling the local
nonlinearities of the target distribution.  This motivates using the global covariance 
of the target distribution for Euclidean Hamiltonian Monte Carlo and the SoftAbs 
metric~\citep{Betancourt:2013b} for Riemannian Hamiltonian Monte Carlo; for further
discussion see \cite{LivingstoneEtAl:2014}.  Although such choices work well in 
practice, more formal ergodicity considerations are required to define a more rigorous 
optimality condition.

Lastly we must consider the step size, $\epsilon$, of the symplectic integrator.  
As the step size is made smaller the integrator will become more accurate but also
more expensive -- larger step sizes yield cheaper integrators but at the cost of 
more Metropolis rejections.  When the target distribution decomposes into a product
of many independent and identically distributed measures, the optimal compromise 
between these extremes can be computed directly~\citep{BeskosEtAl:2013}.  More 
general constraints on the optimal step size, however, requires a deeper understanding 
of the interaction between the geometry of the exact Hamiltonian flow and that of the 
symplectic integrator developed in backwards error 
analysis~\citep{LeimkuhlerEtAl:2004, HairerEtAl:2006, IzaguirreEtAl:2004}.  In
particular, the microcanonical distribution constructed in Section \ref{sec:hamiltonian_systems}
plays a crucial role.

%%%%%%%%%%%%%%%%%%%%%%%%%%%%%%%%%%%%%%%%
%%%%%%%%%%%%%%%%%%%%%%%%%%%%%%%%%%%%%%%%

\subsection{Retrospective Analysis of Existing Work}

In addition to providing a framework for developing robust Hamiltonian Monte Carlo 
methodologies, the formal theory also provides insight into the performance of
recently published implementations.

We can, for example, now develop of deeper understanding of the poor scaling of the 
explicit Lagrangian Dynamical Monte Carlo algorithm~\citep{LanEtAl:2012}.  
Here the authors were concerned with the computation burden inherent to the implicit 
symplectic integrators necessary for Riemannian Hamiltonian Monte Carlo, and introduced 
an approximate integrator that sacrificed exact symplecticness for explicit updates.  As 
we saw in Section \ref{sec:practical_implementation}, however, exact symplecticness is
critical to maintaining the exploratory power of an approximate Hamiltonian flow,
especially as the dimension of the target distribution increases and numerical
errors amplify.  Indeed, the empirical results in the paper show that the performance of
the approximate integrator suffers with increasing dimension of the target distribution.
The formal theory enables an understanding of the compromises, and corresponding
vulnerabilities, of such approximations.

Moreover, the import of the Hamiltonian flow elevates the integration time as a 
fundamental parameter, with the integrator step size accompanying the use
of an approximate flow.  A common error in empirical optimizations is to 
reparameterize the integration time and step size as the number of integrator steps, 
which can obfuscate the optimal settings.  For example, \cite{WangEtAl:2013} use
Bayesian optimization methods to derive an adaptive Hamiltonian Monte Carlo
implementation, but they optimize the integrator step size and the number
of integrator steps over only a narrow range of values.  This leads not only to a
narrow range of short integration times that limits the efficacy of the Hamiltonian flow,
but also a \textit{step size-dependent} range of integration times that skew the 
optimization values.  Empirical optimizations of Hamiltonian Monte Carlo are most 
productive when studying the fundamental parameters directly.

In general, care must be taken to not limit the range of integration times considered
lest the performance of the algorithm be misunderstood.  For example, restricting the 
Hamiltonian transitions to only a small fraction of the recurrence time forfeits the 
efficacy of the flow's coherent exploration.  Under this artificial limitation, partial 
momentum refreshment schemes~\citep{Horowitz:1991, SohlEtAl:2014}, which 
compensate for the premature termination of the flow by correlating adjacent 
transitions, do demonstrate some empirical success.  As the restriction is withdrawn
and the integration times expand towards the recurrence time, however, the success 
of such schemes fade.  Ultimately, removing such limitations in the first place
results in more effective transitions.

%%%%%%%%%%%%%%%%%%%%%%%%%%%%%%%%%%%%%%%%
%%%%%%%%%%%%%%%%%%%%%%%%%%%%%%%%%%%%%%%%
%%%%%%%%%%%%%%%%%%%%%%%%%%%%%%%%%%%%%%%%

\section{Future Directions}

By appealing to the geometry of Hamiltonian flow we have developed a formal,
foundational construction of Hamiltonian Monte Carlo, motivating various
implementation details and identifying the properties critical for a high
performance algorithm.  Indeed, these lessons have already proven critical in the
development of high-performance software like Stan~\citep{Stan:2014}.  Moving 
forward, the geometric framework not only admits further understanding and optimization 
of the algorithm but also suggests connections to other fields and motivates generalizations 
amenable to an even broader class of target distributions.

\subsection{Robust Implementations of Hamiltonian Monte Carlo}

Although we have constructed a theoretical framework in which we can pose rigorous
optimization criteria for the the integration time, Riemannian metric, and integrator step size,
there is much to be done in actually developing and then implementing those criteria.  
Understanding the ergodicity of Hamiltonian Monte Carlo is a critical step towards this goal, 
but a daunting technical challenge.  

Continued application of both the symplectic and Riemannian geometry underlying
implementations of the algorithm will be crucial to constructing a strong formal 
understanding of the ergodicity of Hamiltonian Monte Carlo and its consequences.
Initial applications of metric space methods~\citep{Ollivier:2009, JoulinEtAl:2010},
for example, have shown promise~\citep{HolmesEtAl:2014}, although many technical
obstacles, such as the limitations of geodesic completeness, remain.

\subsection{Relating Hamiltonian Monte Carlo to Other Fields}

The application of tools from differential geometry to statistical problems has rewarded
us with a high-performance and robust algorithm.  Continuing to synergize seemingly
disparate fields of applied mathematics may also prove fruitful in the future.

One evident association is to 
\textit{molecular dynamics}~\citep{Haile:1992, FrenkelEtAl:2001, MarxEtAl:2009}, which tackles 
expectations of chemical systems with natural Hamiltonian structures.  Although care must
be taken with the different construction and interpretation of the Hamiltonian from the statistical 
and the molecular dynamical perspectives, once a Hamiltonian system has been defined the 
resulting algorithms are identical.  Consequently molecular dynamics implementations may 
provide insight towards improving Hamiltonian Monte Carlo and vice versa.

Additionally, the composition of Hamiltonian flow with a random lift from the sample space onto 
its cotangent bundle can be considered a second-order stochastic 
process~\citep{BurrageEtAl:2007, Polettini:2013}, and the theory of these processes 
has the potential to be a powerful tool in understanding the ergodicity of the algorithm.

Similarly, the ergodicity of Hamiltonian systems has fueled a wealth of research into
dynamical systems in the past few decades~\citep{Petersen:1989, Zaslavsky:2005}.  
The deep geometric results emerging from this field complement those of the statistical 
theory of Markov chains and provide another perspective on the ultimate performance 
of Hamiltonian Monte Carlo.

The deterministic flow that powers Hamiltonian Monte Carlo is also reminiscent of various
strategies of removing the randomness in Monte Carlo 
estimation~\citep{Caflisch:1998, MurrayEtAl:2012, Neal:2012}.  The generality of the Hamiltonian 
construction may provide insight into the optimal compromise between random and deterministic 
algorithms.

Finally there is the possibility that the theory of measures on manifolds may be of
use to the statistical theory of smooth measures in general.  The application of
differential geometry to Frequentist methods that has consolidated into Information 
Geometry~\citep{AmariEtAl:2007} has certainly been a great success, and the use
Bayesian methods developed here suggests that geometry's domain of applicability 
may be even broader.  As demonstrated above, for example, the geometry of fiber bundles 
provides a natural setting for the study and implementation of conditional probability 
measures, generalizing the pioneering work of~\cite{Tjur:1980}.  

\subsection{Generalizing Hamiltonian Monte Carlo}

Although we have made extensive use of geometry in the construction of Hamiltonian 
Monte Carlo, we have not yet exhausted its utility towards Markov Chain Monte Carlo.  
In particular, further geometrical considerations suggest tools for targeting multimodal, 
trans-dimensional, infinite-dimensional, and possibly discrete distributions.

Like most Markov Chain Monte Carlo algorithms, Hamiltonian Monte Carlo
has trouble exploring the isolated concentrations of probability inherent to
multimodal target distributions.  Leveraging the geometry of contact manifolds,
however, admits not just transitions within a single canonical distribution but also 
transitions between different canonical distributions.  The resulting 
Adiabatic Monte Carlo provides a geometric parallel to simulated 
annealing and simulated tempering without being burdened by their common 
pathologies~\citep{Betancourt:2014}.

Trans-dimensional target distributions are another obstacle for Hamiltonian
Monte Carlo because of the discrete nature of the model space.  Differential
geometry may too prove fruitful here with the Poisson geometries that generalize 
symplectic geometry by allowing for a symplectic form whose rank need not be
constant~\citep{Weinstein:1983}.

Many of the properties of smooth manifolds critical to the construction of Hamiltonian
Monte Carlo do not immediately extend to the infinite-dimensional target distributions 
common to functional analysis, such as the study of partial differential 
equations~\citep{CotterEtAl:2013}.  Algorithms on infinite-dimensional spaces motivated 
by Hamiltonian Monte Carlo, however, have shown promise~\citep{BeskosEtAl:2011} 
and suggest that infinite-dimensional manifolds admit symplectic structures, or the 
appropriate generalizations thereof.

Finally there is the question of fully discrete spaces from which we cannot apply
the theory of smooth manifolds, let alone Hamiltonian systems.  Given that Hamiltonian 
flow can also be though of as an orbit of the symplectic group, however, there may 
be more general group-theoretic constructions of measure-preserving orbits that can 
be applied to discrete spaces.  

%%%%%%%%%%%%%%%%%%%%%%%%%%%%%%%%%%%%%%%%
%%%%%%%%%%%%%%%%%%%%%%%%%%%%%%%%%%%%%%%%
%%%%%%%%%%%%%%%%%%%%%%%%%%%%%%%%%%%%%%%%

\section*{Acknowledgements}

We thank Tom LaGatta for thoughtful comments and discussion on disintegrations
and Chris Wendl for invaluable assistance with formal details of the geometric
constructions, but claim all errors as our own.  The preparation of this paper benefited 
substantially from the careful readings and recommendations of Saul Jacka, Pierre Jacob, 
Matt Johnson, Ioannis Kosmidis, Paul Marriott, Yvo Pokern, Sebastian Reich, and Daniel Roy.  
This work was motivated by the initial investigations in \cite{BetancourtEtAl:2011}.

Michael Betancourt is supported under EPSRC grant EP/J016934/1,  Simon Byrne is 
a EPSRC Postdoctoral Research Fellow under grant EP/K005723/1,  Samuel Livingstone 
is funded by a PhD scholarship from Xerox Research Center Europe, and Mark Girolami 
is an EPSRC Established Career Research Fellow under grant EP/J016934/1.

%%%%%%%%%%%%%%%%%%%%%%%%%%%%%%%%%%%%%%%%
%%%%%%%%%%%%%%%%%%%%%%%%%%%%%%%%%%%%%%%%
%%%%%%%%%%%%%%%%%%%%%%%%%%%%%%%%%%%%%%%%

\appendix
\section{Proofs} \label{sec:proofs}

Here we collect the proofs of the Lemmas introduced in Section \ref{sec:measures_on_manifolds}.

\newtheorem*{lem:forms_as_measures}{Lemma \ref{lem:forms_as_measures}}
\begin{lem:forms_as_measures}

If $Q$ is a positively-oriented, smooth manifold then $\mathcal{M} \! \left( Q \right)$ 
is non-empty and its elements are $\sigma$-finite measures on $Q$.

\end{lem:forms_as_measures}

\begin{proof}

We begin by constructing a prototypical element of $\mathcal{M} \! \left( Q \right)$.
In a local chart $\left\{ \mathcal{U}_{\alpha}, \psi_{\alpha} \right\}$ we can construct a 
positive $\mu_{\alpha}$ as 
$\mu_{\alpha} = f_{\alpha} \, \dd q^{1} \wedge \ldots \wedge \dd q^{n}$ for any
$f_{\alpha} : \mathcal{U}_{\alpha} \rightarrow \mathbb{R}^{+}$.
Given the positive orientation of $Q$, the $\mu_{\alpha}$ are convex and we can define 
a global $\mu \in \mathcal{M} \! \left( Q \right)$ by employing a partition of unity
subordinate to the $\mathcal{U}_{\alpha}$,
\begin{equation*}
\mu = \sum_{\alpha} \rho_{\alpha} \mu_{\alpha}.
\end{equation*}

To show that any $\mu \in \mathcal{M} \! \left( Q \right)$ is a measure, consider
the integral of $\mu$ over any $A \in \mathcal{B} \! \left( Q \right)$. By construction
\begin{equation*}
\mu \! \left( A \right) = \int_{A} \mu > 0,
\end{equation*}
leaving us to show that $\mu \! \left( A \right)$ satisfies countable additivity and vanishes
when $A = \emptyset$.  We proceed by covering $A$ in charts and employing a partition 
of unity to give
\begin{align*}
\int_{A} \mu 
&= \sum_{\alpha} \int_{A \, \cap \, \mathcal{U}_{\alpha} } \rho_{\alpha} \, \mu_{\alpha} \\
&= \sum_{\alpha} \int_{A \, \cap \, \mathcal{U}_{\alpha} }
\rho_{\alpha} \, f_{\alpha} \, \dd q^{1} \wedge \ldots \wedge \dd q^{n}
\\
&= \sum_{\alpha} \int_{ \psi_{\alpha} \left( A \, \cap \, \mathcal{U}_{\alpha} \right) } 
\left( \rho_{\alpha} f_{\alpha} \circ \psi^{-1}_{\alpha} \right) \dd^{n} q,
\end{align*}
where $f_{\alpha}$ is defined as above and $\dd^{n} q$ is the Lebesgue measure on the
domain of the charts.

Now each domain of integration is in the $\sigma$-algebra of the sample space,
\begin{equation*}
A \, \cap \, \mathcal{U}_{\alpha} \in \mathcal{B} \! \left( Q \right),
\end{equation*}
and, because the charts are diffeomorphic and hence Lebesgue measurable functions, we must 
have
\begin{equation*}
\psi_{\alpha} \! \left( A \, \cap \, \mathcal{U}_{\alpha} \right) 
\in \mathcal{B} \! \left( \mathbb{R}^{n} \right).
\end{equation*}
Consequently the action of $\mu$ on $A$ decomposes into a countable number of Lebesgue 
integrals, and $\mu \! \left( A \right)$ immediately inherits countable additivity.

Moreover, $\psi_{\alpha} \left( \emptyset \, \cap \, \mathcal{U}_{\alpha} \right) =
\psi_{\alpha} \left( \emptyset \right) = \emptyset$ so that, by the same construction
as above,
\begin{align*}
\mu \! \left( \emptyset \right)
&=
\int_{\emptyset} \mu
\\
&= \sum_{\alpha} \int_{ \psi_{\alpha} \left( \emptyset \, \cap \, \mathcal{U}_{\alpha} \right) } 
\left( \rho_{\alpha} f_{\alpha} \circ \psi^{-1}_{\alpha} \right) \dd^{n} q
\\
&= \sum_{\alpha} \int_{ \emptyset } 
\left( \rho_{\alpha} f_{\alpha} \circ \psi^{-1}_{\alpha} \right) \dd^{n} q
\\
&= 0.
\end{align*}

Finally, because $Q$ is paracompact any $A \in \mathcal{B} \! \left( Q \right)$ admits a 
locally-finite refinement and, because any $\mu \in \mathcal{M} \! \left( Q \right)$ is 
smooth, the integral of $\mu$ over the elements of any such refinement are also finite.
Hence $\mu$ itself is $\sigma$-finite.

\end{proof}

\newtheorem*{lem:existence_of_fiber_volume_forms}{Lemma \ref{lem:existence_of_fiber_volume_forms}}
\begin{lem:existence_of_fiber_volume_forms}

The space $\Upsilon \! \left( \pi : Z \rightarrow Q \right)$ is convex and nonempty.

\end{lem:existence_of_fiber_volume_forms}

\begin{proof}

The convexity of $\Upsilon \! \left( \pi : Z \rightarrow Q \right)$ follows immediately from
the convexity of the positivity constraint and admits the construction of elements with a 
partition of unity.

In any neighborhood of a trivializing cover, $\left\{ \mathcal{U}_{\alpha} \right\}$, we have 
\begin{equation*}
\Upsilon \! \left( \pi^{-1} ( \mathcal{U}_{\alpha} ) \right) = \mathcal{M}^{+} \! \left( F \right)
\end{equation*}
which is nonempty by Corollary \ref{cor:forms_as_finite_measures}. Selecting some 
$\upsilon_{\alpha} \in \mathcal{M}^{+} \! \left( F \right)$ for each $\alpha$
and summing over each neighborhood gives
\begin{equation*}
\upsilon = \sum_{\alpha} \left( \rho_{\alpha} \circ \pi \right) \upsilon_{\alpha} \in \Upsilon \! \left( Z \right).
\end{equation*}
as desired.

\end{proof}

\newtheorem*{lem:fiber_forms_as_disintegrating_kernels}{Lemma \ref{lem:fiber_forms_as_disintegrating_kernels}}
\begin{lem:fiber_forms_as_disintegrating_kernels}

Any element of $\Upsilon \! \left( \pi : Z \rightarrow Q \right)$ defines a smooth measure,
\begin{align*}
\nu :
& \; Q \times \mathcal{B} \! \left( Z \right) \rightarrow \mathbb{R}^{+}
\\
& \; q, A \mapsto \int_{\iota_{q} \left( A \, \cap \, Z_{q} \right) } \iota_{q}^{*} \upsilon,
\end{align*}
concentrating on the fiber $Z_{q}$, 
\begin{equation*}
\nu \! \left( q, A \right) = 0, \,
\forall A \in \mathcal{B} \! \left( Z \right) | A \cap Z_{q} = 0.
\end{equation*}

\end{lem:fiber_forms_as_disintegrating_kernels}

\begin{proof}

By construction the measure of any $A \in \mathcal{B} \! \left( Z \right)$ is limited to its intersection
with the fiber $Z_{q}$, concentrating the measure onto the fiber.  Moreover, because the immersion
preserves the smoothness of $\upsilon$, $\iota^{*}_{q} \upsilon$ is smooth for all $q \in Q$ 
and the measure must be $\mathcal{B} \! \left( F \right)$-finite.  Consequently, the kernel is 
$\mathcal{B} \! \left( Z \right)$-finite.

\end{proof}

\newtheorem*{lem:measure_lift}{Lemma \ref{lem:measure_lift}}
\begin{lem:measure_lift}

Any element $\upsilon \in \Upsilon \! \left( \pi : Z \rightarrow Q \right)$ lifts any smooth measure
on the base space, $\mu_{Q} \in \mathcal{M} \! \left( Q \right)$, to a smooth measure on the total 
space by
\begin{equation*}
\mu_{Z} = \pi^{*} \mu_{Q} \wedge \upsilon \in \mathcal{M} \! \left( Z \right).
\end{equation*}

\end{lem:measure_lift}

\begin{proof}

Let $\left( X_{1} \! \left( q \right), \ldots, X_{n}  \! \left( q \right) \right)$ 
be a basis of 
$T_{q} Q, q \in Q,$ positively-oriented with respect to the $\mu_{Q}$,
\begin{equation*}
\mu_{Q} \left( X_{1}  \! \left( q \right), \ldots, X_{n}  \! \left( q \right) \right) > 0,
\end{equation*}
and $\left( Y_{1}  \! \left( p \right), \ldots, Y_{k}  \! \left( p \right) \right)$ a basis of 
$T_{p} Z_{q}, p \in Z_{q},$ positively-oriented with respect to the pull-back of $\upsilon$,
\begin{equation*}
\iota_{q}^{*} \upsilon \! \left( Y_{1} \! \left( p \right), \ldots, Y_{k} \! \left( p \right) \right) > 0.
\end{equation*}
Identifying $T_{p} Z_{q}$ as a subset of $T_{p} Z$, any horizontal lift of the $X_{i}  \! \left( q \right)$ 
to $\tilde{X}_{i}  \! \left( q \right) \in T_{p, q} Z$ yields a positively-oriented basis of the total space, 
$\left( \tilde{X}_{1}  \! \left( q \right), \ldots, \tilde{X}_{n} \! \left( q \right), 
Y_{1}  \! \left( p \right), \ldots, Y_{k}  \! \left( p \right)\right)$.
 
Now consider the contraction of this positively-oriented basis against 
$\mu_{Z} = \pi^{*} \mu_{Q} \wedge \omega$ for any $\omega \in \Omega^{k} \! \left( Z \right)$.  
Noting that, by construction, the $Y_{i}$ are vertical vectors and vanish when contracted against 
$\pi^{*} \mu_{Q}$, we must have
\begin{align*}
\mu_{Z} \! \left( \tilde{X}_{1}  \! \left( q \right), \ldots, \tilde{X}_{n}  \! \left( q \right), 
Y_{1}  \! \left( p \right), \ldots, Y_{k}  \! \left( p \right) \right)
&=
\pi^{*} \mu_{Q} \wedge \omega \! \left( \tilde{X}_{1}  \! \left( q \right), \ldots, \tilde{X}_{n}  \! \left( q \right), 
Y_{1}  \! \left( p \right), \ldots, Y_{k} \! \left( p \right) \right)
\\
&=
\pi^{*} \mu_{Q} \! \left( \tilde{X}_{1} \! \left( q \right), \ldots, \tilde{X}_{n} \! \left( q \right) \right)
\omega \! \left( Y_{1} \! \left( p \right), \ldots, Y_{k} \! \left( p \right) \right)
\\
&=
\mu_{Q} \! \left( X_{1} \! \left( q \right), \ldots, X_{n} \! \left( q \right)\right)
\omega \! \left( Y_{1} \! \left( p \right), \ldots, Y_{k} \! \left( p \right) \right)
\\
&> 0.
\end{align*}
Hence $\mu_{Z}$ is a volume form and belongs to $\mathcal{M} \! \left( Z \right)$.

Moreover, adding a horizontal $k$-form, $\eta$, to $\omega$ yields the same lift,
\begin{align*}
\mu_{Z}' \! \left( \tilde{X}_{1}  \! \left( q \right), \ldots, \tilde{X}_{n}  \! \left( q \right), 
Y_{1}  \! \left( p \right), \ldots, Y_{k}  \! \left( p \right) \right)
\\
& \hspace{-15mm} =
\pi^{*} \mu_{Q} \wedge \left( \omega + \eta \right) \! \left( \tilde{X}_{1}  \! \left( q \right), \ldots, \tilde{X}_{n}  \! \left( q \right), 
Y_{1}  \! \left( p \right), \ldots, Y_{k} \! \left( p \right) \right)
\\
& \hspace{-15mm} = \quad
\pi^{*} \mu_{Q} \! \left( \tilde{X}_{1} \! \left( q \right), \ldots, \tilde{X}_{n} \! \left( q \right) \right)
\omega \! \left( Y_{1} \! \left( p \right), \ldots, Y_{k} \! \left( p \right) \right)
\\
& \hspace{-15mm} \quad + 
\pi^{*} \mu_{Q} \! \left( \tilde{X}_{1} \! \left( q \right), \ldots, \tilde{X}_{n} \! \left( q \right) \right)
\eta \! \left( Y_{1} \! \left( p \right), \ldots, Y_{k} \! \left( p \right) \right)
\\
& \hspace{-15mm} =
\mu_{Q} \! \left( X_{1} \! \left( q \right), \ldots, X_{n} \! \left( q \right)\right)
\omega \! \left( Y_{1} \! \left( p \right), \ldots, Y_{k} \! \left( p \right) \right)
\\
& \hspace{-15mm} = \mu_{Z}.
\end{align*}
Consequently lifts are determined entirely by elements of the quotient space, 
$\upsilon \in \Upsilon \! \left( \pi : Z \rightarrow Q \right)$.

\end{proof}

\newtheorem*{lem:pushforward_measures}{Lemma \ref{lem:pushforward_measures}}
\begin{lem:pushforward_measures}

Let $\mu_{Z}$ be a smooth measure on the total space of a positively-oriented, smooth fiber bundle 
with $\mu_{Q}$ the corresponding pushforward measure with respect to the projection operator, 
$\mu_{Q} = \pi_{*} \mu_{Z}$.  If  $\mu_{Q}$ is a smooth measure then 
$\mu_{Z} = \pi^{*} \mu_{Q} \wedge \xi$ for a unique element of $\xi \in \Xi \! \left( \pi : Z \rightarrow Q \right)$.

\end{lem:pushforward_measures}

\begin{proof}

If the pushforward measure, $\mu_{Q}$, is smooth then it must satisfy
\begin{equation*}
\int_{B} \mu_{Q} = \int_{\pi^{-1} \left( B \right) } \mu_{Z}.
\end{equation*}

Employing a trivializing cover over $\pi^{-1} \! \left( B \right)$, we can expand the integral over
the total space as
\begin{align*}
\int_{\pi^{-1} \! \left( B \right) } \mu_{Z}
&=
\sum_{\alpha}
\int_{ \pi^{-1} \left( B \right) \, \cap \, \left( \mathcal{U}_{\alpha} \times F \right) } 
\rho_{\alpha} \, \mu_{Z}
\\
&=
\sum_{\alpha} \int_{ \left( B \, \cap \, \mathcal{U}_{\alpha} \right) \times F }
\rho_{\alpha} \, \mu_{Z}
\end{align*}

Following Theorem \ref{thm:existence_of_disintegrations} there is a unique 
$\upsilon \in \Upsilon \! \left( \pi : Z \rightarrow Q \right)$ such that 
$\mu_{Z} = \pi^{*} \mu_{Q} \wedge \upsilon$ and the integral becomes
\begin{align*}
\int_{\pi^{-1} \! \left( B \right) } \mu_{Z}
&=
\sum_{\alpha} \int_{ \left( B \, \cap \, \mathcal{U}_{\alpha} \right) \times F }
\rho_{\alpha} \, \mu_{Z}
\\
&=
\sum_{\alpha} \int_{ \left( B \, \cap \, \mathcal{U}_{\alpha} \right) \times F }
\rho_{\alpha} \, \pi^{*} \mu_{Q} \wedge \upsilon
\\
&=
\sum_{\alpha} \int_{ \left( B \, \cap \, \mathcal{U}_{\alpha} \right) }
\rho_{\alpha} \left[ \int_{F} \iota^{*}_{q} \upsilon \right] \mu_{Q}
\\
&=
\int_{ B } \rho_{\alpha}
\left[ \int_{ Z_{q} } \iota^{*}_{q} \upsilon \right] \mu_{Q}.
\end{align*}
Because $\mu_{Q}$ is $\sigma$-finite and $\int_{ Z_{q} } \iota^{*}_{q} \upsilon$ is finite,
the pushforward condition is satisfied if and only if
\begin{equation*}
\int_{ Z_{q} } \iota^{*}_{q} \upsilon = 1, \forall q \in Q,
\end{equation*}
which is satisfied if and only if 
$\upsilon \in \Xi \! \left( \pi : Z \rightarrow Q \right) \subset \Upsilon \! \left( \pi : Z \rightarrow Q \right)$.

Consequently there exists a unique $\xi \in \Xi \! \left (\pi : Z \rightarrow Q \right)$ that lifts the 
pushforward measure of $\mu_{Z}$ back to $\mu_{Z}$.

\end{proof}

\newtheorem*{lem:canonical_disintegration}{Lemma \ref{lem:canonical_disintegration}}
\begin{lem:canonical_disintegration}

Let $\mu_{Z}$ be a smooth measure on the total space of a positively-oriented, smooth fiber bundle
whose pushforward measure with respect to the projection operator is smooth, with $U \subset Q$
any neighborhood of the base space that supports a local frame.  Within $\pi^{-1} \! \left( U \right)$, 
the element $\xi \in \Xi \! \left( \pi : Z \rightarrow Q \right)$
\begin{equation*}
\xi = 
\frac{ \left( \tilde{X}_{1}, \ldots, \tilde{X}_{n} \right) \, \lrcorner \, \mu_{Z} }
{ \mu_{Q} \! \left( X_{1}, \ldots, X_{n} \right) },
\end{equation*}
defines the regular conditional probability measure of $\mu_{Z}$ with respect to the projection operator,
where $\left( X_{1}, \ldots, X_{n} \right)$ is any positively-oriented frame in $U$ satisfying
\begin{equation*}
\mu_{Q} \! \left( X_{1}, \ldots, X_{n} \right) < \infty, \, \forall q \in U
\end{equation*}
and $\left( \tilde{X}_{1}, \ldots, \tilde{X}_{n} \right)$ is any corresponding horizontal lift.

\end{lem:canonical_disintegration}

\begin{proof}

Consider any positively-oriented frame on the base space, $\left( X_{1}, \ldots, X_{n} \right)$, along with
any choice of horizontal lift, $\left( \tilde{X}_{1}, \ldots, \tilde{X}_{n} \right)$, and an ordered $k$-tuple
of vector fields on the total space, $\left( Y_{1}, \ldots, Y_{k} \right)$, that restricts to a positively-ordered 
frame in some neighborhood of the fibers, $V \subset \pi^{-1} \! \left( U \right)$.  Because the 
fiber bundle is oriented, the horizontal lift and the ordered $k$-tuple define a positively-ordered frame
in $V$,
\begin{equation*}
\left( W_{1}, \ldots, W_{n + k} \right) = \left( \tilde{X}_{1}, \ldots, \tilde{X}_{n}, Y_{1}, \ldots, Y_{k} \right).
\end{equation*}

If the pushforward measure, $\mu_{Q}$, is smooth then from Lemma \ref{lem:pushforward_measures} 
we have $\mu_{Z} =  \pi^{*} \mu_{Q} \wedge \xi$ for some unique $\xi \in \Xi \! \left( \pi : Z \rightarrow Q \right)$.
Contracting the frame onto these forms gives
\begin{align*}
\mu_{Z} \! \left( W_{1}, \ldots, W_{n + k} \right)
&=
\left( \pi^{*} \mu_{Q} \wedge \xi \right) \! \left( W_{1}, \ldots, W_{n + k} \right)
\\
&=
\mu_{Q} \! \left( X_{1}, \ldots, X_{n} \right) \xi \! \left( Y_{1}, \ldots, Y_{k} \right).
\end{align*}
Given the positive orientations of the frames and the forms, each term is strictly positive and provided that 
$\mu_{Q} \! \left( X_{1}, \ldots, X_{n} \right)$ is finite for all $q \in U$ we can divide to give,
\begin{equation*}
\xi \! \left( Y_{1}, \ldots, Y_{k} \right) =
\frac{ \mu_{Z} \! \left( W_{1}, \ldots, W_{n + k} \right) }
{ \mu_{Q} \left( X_{1}, \ldots, X_{n} \right) }.
\end{equation*}

Finally, because $\xi$ is invariant to the addition of horizontal $k$-forms this implies that within $U$
\begin{equation*}
\xi =
\frac{ \left( \tilde{X}_{1}, \ldots, \tilde{X}_{n} \right) \, \lrcorner \, \mu_{Z} }
{ \mu_{Q} \! \left( X_{1}, \ldots, X_{n} \right) }.
\end{equation*}

\end{proof}

\newtheorem*{lem:invariance_of_microcanonical}{Lemma \ref{lem:invariance_of_microcanonical}}
\begin{lem:invariance_of_microcanonical}

Let $\left( M, \Omega, H \right)$ be a Hamiltonian system with the finite and smooth canonical measure,
$\mu = e^{- \beta H}$.  The microcanonical distribution on the level set $H^{-1} (E)$,
\begin{equation*}
\varpi_{H^{-1} (E) } = 
\frac{ v \, \lrcorner \, \Omega }
{ \int_{H^{-1} (E) } \iota^{*}_{E} \left( v \, \lrcorner \, \Omega \right) },
\end{equation*}
is invariant to the corresponding Hamiltonian flow restricted to the level set, 
$\left. \phi^{H}_{t} \right|_{H^{-1} (E) }$.

\end{lem:invariance_of_microcanonical}

\begin{proof}

By construction the global flow preserves the canonical distribution,
\begin{align*}
\varpi 
&= 
\left( \phi^{H}_{t} \right)_{*} \varpi
\\
&= 
\left( \phi^{H}_{t} \right)_{*} \! \left( \varpi_{H^{-1} (E) } \wedge \varpi_{E} \right)
\\
&= 
\left( \left( \phi^{H}_{t} \right)_{*} \varpi_{H^{-1} (E) } \right) \wedge 
\left( \left( \phi^{H}_{t} \right)_{*} \varpi_{E} \right).
\end{align*}
Because the Hamiltonian is itself invariant to the flow we must have
\begin{equation*}
\left( \left( \phi^{H}_{t} \right)_{*} \varpi_{E} \right) = \varpi_{E}
\end{equation*}
and
\begin{equation*}
\varpi = 
\left( \left( \phi^{H}_{t} \right)_{*} \varpi_{H^{-1} (E) } \right) \wedge 
\varpi_{E}.
\end{equation*}

From Lemma \ref{lem:pushforward_measures}, however, the regular conditional probability 
measure in the decomposition must be unique, hence
\begin{align*}
\iota_{q}^{*} \varpi_{H^{-1} (E) }
&=
\iota_{q}^{*} \left( \left( \phi^{H}_{t} \right)_{*} \varpi_{H^{-1} (E) } \right)
\\
&=
\left( \phi^{H}_{t} \circ \iota_{q} \right)_{*} \varpi_{H^{-1} (E) } 
\\
&=
\left( \iota_{q} \circ \left. \phi^{H}_{t} \right|_{H^{-1} (E) } \right)_{*} \varpi_{H^{-1} (E) } 
\\
&=
\left( \left. \phi^{H}_{t} \right|_{H^{-1} (E) } \right)_{*} \iota_{q}^{*} \varpi_{H^{-1} (E) } ,
\end{align*}
as desired.

\end{proof}

\bibliography{geo_of_hmc}
\bibliographystyle{imsart-nameyear}

\end{document}